\documentclass[11pt, onecolumn,letterpaper,journal,]{IEEEtran}

\pdfminorversion=4
\usepackage{amsfonts,amsmath,amssymb,amsthm}
\usepackage{upgreek}
\usepackage{mathtools, xparse}
\usepackage{graphicx}
\usepackage{epstopdf}
\usepackage{bm}
\usepackage{scalerel}
\usepackage{verbatim}
\usepackage{url}
\usepackage{comment}
\usepackage{epsf,pgf}
\usepackage{diagbox}
\usepackage{makecell}
\usepackage{color}
\usepackage{braket}
\usepackage{float}
\usepackage{doi}
\usepackage{qcircuit}
\usepackage{multirow}
\usepackage{listings}
\usepackage{array}
\usepackage{makecell}
\usepackage[switch]{lineno}
\usepackage{subcaption,cleveref}
\usepackage{enumitem}

\newcommand{\op}{\oplus}
\newcommand{\pr}{\mathcal{P}}
\newcommand{\xv}{\mathbf{x}}
\newcommand{\yv}{\mathbf{y}}
\newcommand{\var}[1]{\mathbf{#1}}

\newtheorem{theorem}{Theorem}[section]
\newtheorem{proposition}[theorem]{Proposition}
\newtheorem{corollary}[theorem]{Corollary}
\newtheorem{lemma}[theorem]{Lemma}
\newtheorem{definition}[theorem]{Definition}
\newtheorem{remark}[theorem]{Remark}

\def \F {{\mathbb F}}

\newcommand{\wh}{Walsh-Hadamard transform }
\newcommand{\nh}{nega-Hadamard transform }

\begin{document}
\title{Extending Forrelation: Quantum Algorithms Related to Generalized Fourier-Correlation}

\author{\IEEEauthorblockN{{\Large Suman Dutta, Subhamoy Maitra}}\\
\vskip.15cm
\IEEEauthorblockA{Applied Statistics Unit, Indian Statistical Institute,\\
Kolkata 700108, India;
sumand.iiserb@gmail.com, subho@isical.ac.in\\\vskip.3cm}
\and
\IEEEauthorblockN{{\Large Pantelimon St\u anic\u a}}\\
\vskip.15cm
\IEEEauthorblockA{Applied Mathematics Department, Naval Postgraduate School, \\Monterey, CA 93943, USA;
pstanica@nps.edu}}

\maketitle
%\date{\today}

 \thispagestyle{empty}
\begin{abstract}
%In this paper, we introduce a novel technique for sampling any arbitrary powers of the Walsh-Hadamard transform using Forrelation, extending the square sampling method of the Walsh-Hadamard transform proposed by Dutta et al. (2021).
In this paper, we study different cryptographically significant spectra of Boolean functions, including the Walsh-Hadamard, cross-correlation, and autocorrelation. The $2^k$-variation by St\u anic\u a [IEEE-IT 2016] is considered here with the formulation for any $m \in \mathbb{N}$. Given this, we present the most generalized version of the Deutsch-Jozsa algorithm, which extends the standard and previously extended versions, thereby encompassing them as special cases. Additionally, we generalize the Forrelation formulation by introducing the $m$-Forrelation and propose various quantum algorithms towards its estimation. In this regard, we explore different strategies in sampling these newly defined spectra using the proposed $m$-Forrelation algorithms and present a comparison of their corresponding success probabilities. Finally, we address the problem related to affine transformations of generalized bent functions and discuss quantum algorithms in identifying the shifts between two bent (or negabent) functions with certain modifications, and compare these with existing results.
%, identifying an error, in the process.
\end{abstract}
\begin{IEEEkeywords}
Boolean Functions, Crosscorrelation, Forrelation, Quantum Algorithm, Walsh-Hadamard Transform
\end{IEEEkeywords}

\section{Introduction}
\label{sec:intro}
Quantum computing is a fundamental aspect of quantum mechanics with numerous exciting possibilities. However, demonstrating the separation between the classical and quantum paradigms remains a challenging and intriguing problem. In this context, the black-box model is widely employed to illustrate the distinction in their respective domains. Some of the most well-known quantum algorithms, such as Shor's factoring~\cite{shor}, Grover's search~\cite{grover}, and Simon's hidden shift~\cite{simon}, are all defined within this black-box paradigm. Naturally, this black-box model, also known as the query model, is one of the most studied areas in quantum computer science. In this domain, problems can be studied in the exact quantum and bounded error quantum models as well as the probabilistic and deterministic classical models.

One of the simple yet fundamental algorithm in this domain is the Deutsch-Jozsa algorithm~\cite{dj}. Informally speaking, given access to a Boolean function $f$, it produces a superposition of the normalized Walsh-Hadamard transform values of $f$ as the amplitude of its pre-measurement state at the respective points. If $f$ is promised to be either constant or balanced, the Deutsch-Jozsa algorithm deterministically concludes which one it is with a single quantum query. In contrast, any deterministic classical algorithm requires exponentially many queries, which is asymptotically the maximum possible separation between these two models. However, obtaining the separation between the classical probabilistic model and the bounded error quantum model is not as straightforward and was recently resolved in~\cite{bansal}. In this direction, the `Forrelation' algorithm is of great interest.

Given two Boolean functions $f$ and $g$, the Forrelation (or Fourier-correlation) estimates the correlation between the truth table (output) of $f$ and the Walsh-Hadamard transform (at the corresponding points) of $g$~\cite{aaron}. In 2015, Aaronson et al.~\cite{forr} demonstrated that, given any two Boolean functions, the absolute value of the Forrelation being less than $1/100$ or greater than $3/5$ can be resolved using a constant number of queries in the bounded error quantum model. In contrast, it takes an exponential number of queries in the probabilistic classical model, highlighting the classical-quantum separation.

In a recent study~\cite{amc}, the Forrelation algorithm was exploited to enhance the sampling probability for different cryptographically significant spectra of Boolean functions, including the Walsh-Hadamard spectrum, cross-correlation, and auto-correlation spectra. In a subsequent work~\cite{dcc}, similar results were established involving the nega-Hadamard transforms. Additionally,~\cite{dcc} also explored various interesting properties related to the shift behavior of bent and negabent functions using Forrelation and nega-Forrelation. In a related work, St\u anic\u a~\cite{2kbent} generalized the Walsh-Hadamard transform to the $2^k$-Hadamard transform by incorporating the weight factor of the $2^k$-th primitive root of unity. This work also extended the notions of cross-correlation and auto-correlation within this framework and generalized the bent criteria for Boolean functions to $2^k$-bent, introducing the concepts of strong and weak bent functions in this context. We point to also~\cite{Tang16} and~\cite{WZ21} for further descriptions of the $2^k$-bent concept and applications to relative difference set theory.

In our present endeavor, we first generalize these fundamental concepts from their $2^k$-variation to a more comprehensive formulation, by incorporating the weight factor of the $m$-th primitive root of unity, for any $m\in\mathbb{N}$. Moreover, we generalize the Deutsch-Jozsa algorithm beyond its extended variant~\cite{exdj}, identifying the standard Deutsch-Jozsa~\cite{dj} and the extended Deutsch-Jozsa~\cite{exdj} as its specific instances. Furthermore, we extend the Forrelation formulation to $m$-Forrelation and introduce novel quantum algorithms for estimating these newly defined generalized spectra. The organization of the paper and its section-wise contributions are summarized bellow.
\subsection{Organization and Contribution}
In Section~\ref{sec:pre}, we present the preliminary results related to Boolean functions, along with the relevant quantum algorithms such as the Deutsch-Jozsa, Forrelation, nega-Forrelation, and their respective variations and applications.

Section~\ref{sec:cont1} is our first contributory section, where we focus on generalizing the existing frameworks in terms of Boolean function properties. We begin with introducing new unitaries, exploring their implications, and establishing connections to existing ones.
Then, we extend the formulation of various fundamental concepts including the Walsh-Hadamard, cross-correlation, and autocorrelation spectra, to a generalized variation for any $m\in\mathbb{N}$. In the process, we identify a previously unexamined class of real Hadamard transforms that lies between the Walsh-Hadamard and nega-Hadamard transformations, addressing a gap in the literature. Additionally, we introduce the most generalized version of the Deutsch-Jozsa algorithm, which extends both the standard Deutsch-Jozsa and its prior extended version, incorporating them as special cases. Furthermore, we extend the Forrelation formulation to $m$-Forrelation and propose new quantum algorithms for estimating them for a given set of Boolean functions.

In Section~\ref{sec:cont2}, we present various sampling strategies of these newly defined spectra of Boolean functions using the generalized Forrelation algorithms, and present the comparison graph based on the corresponding sampling probabilities.

Section~\ref{sec:cont3} presents our final contribution, where we study affine transformations of generalized bent functions. We refute a prior claim that two negabent functions cannot exhibit a hidden shift by providing a counterexample and extend hidden shift finding algorithms, originally developed for bent functions, to more general affine transformations and to negabent functions.

Section~\ref{sec:con} concludes the paper with a brief summary of our major contributions and outlines potential future research directions in this area.

\section{Preliminaries}
\label{sec:pre}
In this section, we present the necessary definitions and results required to proceed with this paper. For more on Boolean functions and their cryptographic properties, the reader can consult~\cite{CarletBook21,stanbook}.

Let $\mathbb{F}_2=\{0,1\}$ be the prime field of characteristic $2$ and $\mathbb{F}^n_2\equiv \{\var{x}=\left(x_1,x_2,\ldots, x_n \right):x_i\in\mathbb{F}_2,1\leq i \leq n \}$ (usual operations) be the vector space of dimension $n$ over $\mathbb{F}_2$. Throughout the paper, elements of $\mathbb{F}_2^n$ are written in bold to distinguish them from scalar elements of $\mathbb{F}_2$. We define a \textit{Boolean function} $f$ as a mapping $f:\mathbb{F}^n_2 \rightarrow \mathbb{F}_2.$
%We call a function from $\F_2^{n}$ to ${\mathbb Z}_q$ ($q \geq 2 $) a {\em generalized}   {\em Boolean function} on $n$ variables, and denote the set of all generalized Boolean functions by $\cGB_{n}^q$ and, when $q=2$, by $\cB_{n}$.
The set of all $n$-variable Boolean functions is denoted by $\mathcal{B}_n$. 
The \textit{(Hamming) weight} of a binary string $\bm{\omega}=(\omega_1,\ldots ,\omega_n)\in\mathbb{F}_2^n$ is given by the number of $1$'s present in the bit pattern of $\bm{\omega}$, that is, $wt(\bm{\omega})=\sum_{i=1}^n\omega_i$.
The Hamming weight of a Boolean function is the Hamming weight of its output vector, that is, {\em truth table} (under some ordering of the input).

Given $f\in\mathcal{B}_n$, the \textit{Walsh-Hadamard transform} (a discrete version of the Fourier transform) of $f$ at a point $\bm{\omega}=(\omega_1,\ldots ,\omega_n)\in\mathbb{F}_2^n$ is a real valued function, $W_f:\mathbb{F}_2^n\rightarrow \left[-2^{n/2},2^{n/2} \right]\subset\mathbb{R}$, defined as 
$W_f(\bm{\omega})=2^{-n/2}\sum_{\xv\in\mathbb{F}_2^n}(-1)^{f(\xv)\oplus \xv\cdot\bm{\omega}}$.  \textit{Parseval's identity}  constrains the Walsh transform values by $\sum_{\bm{\omega}\in\mathbb{F}_2^n}W^2_f(\bm{\omega})=2^n$. If the Walsh transform values of a Boolean function $f\in\mathcal{B}_n$ are equally distributed over all ($2^n$-many) points, i.e., if $W_f(\bm{\omega})=\pm 1$ for all $\bm{\omega}\in\mathbb{F}_2^n$, then $f$ is called a \textit{bent function}. The bent functions are defined only for even $n\in \mathbb{N}$ and have the highest nonlinearity. Let $f,g\in\mathcal{B}_n$ are bent such that $(-1)^{g(\xv)}=W_f(\xv)$ for all $\xv\in\mathbb{F}_2^n$, then $f$ and $g$ are called \textit{dual}, denoted by $g=\widehat{f}$. Note that, $g=\widehat{f}$ implies $f=\widehat{g}$, and therefore, $\widehat{\widehat{f}}=\widehat{g} = f$. Moreover, A bent function $f$ is self dual if $f=\widehat{f}$.

Similarly, given $f\in\mathcal{B}_n$, the \textit{nega-Hadamard transform}~\cite{nega} of $f$ at $\bm{\omega}\in\mathbb{F}_2^n$ is a complex valued function, $N_f:\mathbb{F}_2^n\rightarrow \mathbb{C}$, defined as $N_f(\bm{\omega}) = 2^{-n/2}\sum_{\xv\in\mathbb{F}_2^n}(-1)^{f(\xv)\oplus \xv\cdot\bm{\omega}} i^{wt(\xv)}$, where the constraint, $\sum_{\bm{\omega}\in\mathbb{F}_2^n}|N_f(\bm{\omega})|^2$ $= \sum_{\bm{\omega}\in\mathbb{F}_2^n}N_f(\bm{\omega})\overline{N_f(\bm{\omega})}=2^n$ is known as the \textit{nega-Parseval identity}~\cite{pre}. For even $n$, $f\in\mathcal{B}_n$ is called \textit{negabent} if $|N_f(\bm{\omega})|=1$ for all $\bm{\omega}\in\mathbb{F}_2^n$, where $|z|$ denotes the modulus of a complex number $z$. Furthermore, $f\in\mathcal{B}_n$ (for even $n$) is called bent-negabent if $f$ is both bent and negabent.

In~\cite{2kbent}, the discrete Fourier transforms of $f\in\mathcal{B}_n$ have been generalized to $2^k$-Hadamard transform, a complex-valued function ($\mathcal{H}_f^{(2^k)}:\mathbb{F}_2^n\rightarrow \mathbb{C}$), defined as 
$\mathcal{H}_f^{(2^k)}\left( \bm{\omega}\right) = 2^{-n/2}\sum_{\xv\in\mathbb{F}_2^n}(-1)^{f(\xv)\oplus \xv\cdot\bm{\omega}}\zeta_{2^k}^{wt(\xv)}$
at $\bm{\omega}\in\mathbb{F}_2^n$,
where $\zeta_{2^k}=e^{2\pi i/2^k}$ is a primitive $2^k$-complex root of $1$. The \textit{$2^k$-Parseval identity} is given by $\sum_{\bm{\omega}\in\mathbb{F}_2^n}|\mathcal{H}^{(2^k)}_f(\bm{\omega})|^2 = \sum_{\bm{\omega}\in\mathbb{F}_2^n} \mathcal{H}^{(2^k)}_f(\bm{\omega})\overline{\mathcal{H}^{(2^k)}_f(\bm{\omega})}=2^n$. A Boolean function $f\in\mathcal{B}_n$ satisfying $|\mathcal{H}_f^{(2^k)}(\bm{\omega})|=1$ for all $\bm{\omega}\in\mathbb{F}_2^n$ is called a $2^k$-bent function.
\begin{remark}
It is important to note that for $k=0$, $\zeta_{2^0}=e^{2\pi i}=1$, resulting in $\mathcal{H}_f^{(2^k)}=W_f$, the Walsh-Hadamard transform. Similarly, for $k=2$, $\zeta_{2^k}=i$, which leads to $\mathcal{H}_f^{(2^k)}=N_f$, the nega-Hadamard transform.
	
Additionally, we identify the existence of a missing $2^k$-Hadamard transform for $k=1$ $(\zeta_{2^1}=-1)$, which is also a real-valued function that lies between the Walsh-Hadamard and nega-Hadamard transforms. For $f\in\mathcal{B}_n$, it is described at a point $\bm{\omega}\in \mathbb{F}_2^n$ as follows:
\[
\mathcal{H}_f^{(2^1)}\left( \bm{\omega}\right) = 2^{-n/2}\sum_{\xv\in\mathbb{F}_2^n}(-1)^{f(\xv)\oplus \xv\cdot\bm{\omega}}(-1)^{wt(\xv)}.
    \]
\end{remark}

%Moreover, a Boolean function $f\in\mathcal{B}_n$ is a strong $2^k$-bent function if and only if $f$ is $2^l$-bent for all $l\leq k$. Also, a function $f\in\mathcal{B}_n$ is weak $2^{k}$-bent if and only if $f\oplus s_{2^k-1}$ is a strong $2^{k-1}$-bent function, where $s_t(\xv)=\bigoplus_{1\leq i_1<\ldots <i_t\leq n}x_{i_1}\cdots x_{i_t}$ is the (modulo $2$) elementary symmetric polynomial of degree $t$.\\

Given $f,g\in\mathcal{B}_n$, the \textit{crosscorrelation} of $f$ and $g$ at $\var{y}\in\mathbb{F}_2^n$ is defined as $C_{f,g}(\var{y})=\sum_{\xv\in\mathbb{F}_2^n}{(-1)^{f(\xv)\oplus g(\xv\oplus \var{y})}}$, where taking $f=g$ we obtain the \textit{autocorrelation} of $f\in\mathcal{B}_n$, represented by $C_f(\var{y})=\sum_{\var{x}\in\mathbb{F}_2^n}{(-1)^{f(\xv)\op f(\xv\oplus \var{y})}}$. From~\cite{pre}, it is known that a Boolean function $f\in\mathcal{B}_n$ is bent if and only if $C_f(\var{y})=0$ for all $\var{y}\in\mathbb{F}_2^n \setminus\{0^n\}$.
In the same vein, the \textit{nega-crosscorrelation} of $f,g\in\mathcal{B}_n$ at $\var{y} \in \mathbb{F}_2^n$ is defined as $\displaystyle \widehat{C}_{f,g}(\var{y})=\sum_{\xv \in \mathbb{F}_2^n} (-1)^{f(\xv)\oplus g(\xv \oplus \var{y})} (-1)^{\var{x}\cdot \var{y}}$, and the \textit{nega-autocorrelation} of $f\in\mathcal{B}_n$ as $\displaystyle \widehat{C}_{f}(\var{y})=\sum_{\xv \in \mathbb{F}_2^n} (-1)^{f(\xv)\oplus f(\xv \oplus \var{y})}(-1)^{\var{x}\cdot \var{y}}$.
In~\cite{2kbent}, the crosscorrelation and the autocorrelation have been generalized to \textit{$2^k$-crosscorrelation} and \textit{$2^k$-autocorrelation}, respectively, as follows.
$$C_{f,g}^{(2^k)}(\var{y}) = \sum_{\xv\in\mathbb{F}_2^n}(-1)^{f(\xv)\oplus g(\xv\oplus \var{y})}\left(\zeta_{2^k}^2\right)^{\xv\odot\var{y}}, \text{ and } C_{f}^{(2^k)}(\var{y}) = \sum_{\xv\in\mathbb{F}_2^n}(-1)^{f(\xv)\oplus f(\xv\oplus \var{y})}\left(\zeta_{2^k}^2\right)^{\xv\odot\var{y}},$$
where $\xv\odot\var{y}$ denotes the inner product in $\mathbb{C}\times \mathbb{C}$. 
\begin{remark}
Note that for both $k=0$ and $k=1$, the $2^k$-crosscorrelation and $2^k$-autocorrelation reduce to the standard crosscorrelation and autocorrelation, respectively. Similarly, for $k=2$, the $2^k$-crosscorrelation and $2^k$-autocorrelation correspond to the nega-crosscorrelation and nega-autocorrelation, respectively.
	
Since the expression for $2^k$-crosscorrelation (autocorrelation) includes the term $\zeta_{2^k}^2$, the variation in $\zeta_{2^k}$, where $\zeta_{2^k}=1$ for $k=0$ and $\zeta_{2^k}=-1$ for $k=1$, does not have an overall impact. Consequently, unlike the $2^k$-Hadamard transform, there is no missing $2^k$-crosscorrelation (autocorrelation) value.
\end{remark}

Moreover, the $2^k$-crosscorrelation is related to the product of $2^k$-Hadamard transform~\cite[Theorem 3]{2kbent} as follows.
$$C_{f,g}^{(2^k)}(\var{y})=\zeta_{2^k}^{wt(\var{y})}\sum_{\var{u}\in\mathbb{F}_2^n}\mathcal{H}_{f}^{(2^k)}(\var{u})\overline{\mathcal{H}_{g}^{(2^k)}(\var{u})}(-1)^{\var{u}\cdot\var{y}}.$$
Recall that there exist similar results for crosscorrelation~\cite[Theorem 3.1]{cross} as well as for nega-crosscorrelation~\cite[Lemma 4]{pre}.
$$C_{f,g}(\var{y})=\sum_{\var{u}\in\mathbb{F}_2^n}W_{f}(\var{u})W_{g}(\var{u})(-1)^{\var{u}\cdot\var{y}},\quad \widehat{C}_{f,g}(\var{y})=i^{wt(\var{y})}\sum_{\var{u}\in\mathbb{F}_2^n}N_{f}(\var{u})\overline{N_{g}(\var{u})}(-1)^{\var{u}\cdot\var{y}}.$$
In Section~\ref{sec:cont1}, we generalize the $2^k$-Hadamard transform, $2^k$-crosscorrelation, $2^k$-autocorrelation and the related results for any $m\in\mathbb{N}$, establishing the existing results as specific cases within this generalized framework.

In the \textit{black-box model} of quantum algorithms, we are given the oracle access $\left(U_f\right)$ of an unknown $f\in\mathcal{B}_n$, and the goal is to determine certain specific properties of $f$ with the minimum number of queries to $U_f$. Given $\xv\in \mathbb{F}_2^n$ and $a\in\mathbb{F}_2$, the functioning of $U_f$ on an $(n+1)$ qubit state $\ket{\xv}\ket{a}$ is defined as $U_f\ket{\xv}\ket{a}=\ket{\xv}\ket{a \oplus f(\xv)}$. 
For $\ket{a}=\ket{-}$, we have $U_f\ket{\xv}\ket{-}= (-1)^{f(\xv)}\ket{\xv}\ket{-}$, which is also known as the \textit{phase-kickback}.

At this point, let us briefly explain the Deutsch-Jozsa algorithm~\cite{dj}.
Given oracle access to an unknown $f\in\mathcal{B}_n$, with the promise that $f$ is either a constant or a balanced Boolean function, the Deutsch-Jozsa algorithm can deterministically identify which one it is by producing a superposition state where the amplitude of an individual state is given by the normalized \wh of the function at that point
$$2^{-n}\sum_{\xv, \var{y} \in \mathbb{F}_2^n}(-1)^{f(\var{x})\oplus\var{x}\cdot\var{y}}\ket{\var{y}}=2^{-n/2} \sum_{\var{y} \in \mathbb{F}_2^n} W_f(\var{y}) \ket{\var{y}}.$$
Essentially, for a constant Boolean function, the complete Walsh spectrum is supported over the all-zero point, makes it certain to observe the state $\ket{0^n}$, upon measurement. However, for a balanced Boolean function, the Walsh transform at the all-zero point is $0$, thus the state $\ket{0^n}$ never appears when measured. Therefore, the presence or absence of the all-zero state in the measurement results concludes with certainty whether $f$ is constant or balanced, using only a single query to the oracle $U_f$. A schematic diagram of the corresponding quantum circuit is shown in Fig.~\ref{fig:dj}.
\begin{figure}[ht]
	\centering
	\includegraphics[scale=1.2]{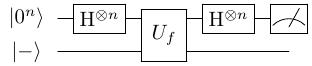}
	\caption{Quantum circuit for Deutsch-Jozsa algorithm~\cite{dj}.}
	\label{fig:dj}
\end{figure}

In 2017, a modified version of this algorithm, termed the extended Deutsch-Jozsa algorithm~\cite{exdj} was proposed in~\cite{exdj}. There, instead of using all Hadamard gates before measurement, as in~\cite{dj}, a combination of Hadamard and nega-Hadamard gates subject to a bit pattern $\var{c}\in\mathbb{F}_2^n$ was applied.
Observe that, for $\var{c}=0^n$, it becomes the standard Deutsch-Jozsa algorithm, producing the superposition state with the normalized \wh value of the respective states as the individual amplitudes. On the other extreme, when $\var{c}=1^n$, only the nega-Hadamard gates are applied. As a result, the algorithm produces the superposition where the amplitude of the individual states are given by the normalized \nh of the respective states,
$$2^{-n}\sum_{\xv, \var{y} \in \mathbb{F}_2^n}(-1)^{f(\var{x})\oplus\var{x}\cdot\var{y}}(i)^{wt(\var{x})}\ket{\var{y}} = 2^{-n/2} \sum_{\var{y} \in \mathbb{F}_2^n} N_f(\var{y}) \ket{\var{y}}.$$
A schematic diagram of the quantum circuit corresponding to $\var{c}=1^n$ is shown in Fig.~\ref{fig:exdj}.
\begin{figure}[ht]
\centering
\includegraphics[scale=1.2]{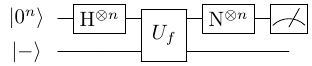}
\caption{Quantum circuit for extended Deutsch-Jozsa algorithm with $\var{c}=1^n$~\cite{exdj}.}
\label{fig:exdj}
\end{figure}
In Section~\ref{sec:cont1}, we further generalize the Deutsch-Jozsa algorithm, and show the existing variations referred in~\cite{dj} and~\cite{exdj} as its sub-cases.\\

The Forrelation problem, introduced by Aaronson~\cite{aaron}, is one of the key results in the study of showing separation between the bounded error quantum model and the randomized classical model~\cite{forr}. The Forrelation is defined as follows.
\begin{definition}[Forrelation~\cite{aaron}]
	\label{def:forr}
	Given oracle access to $f_1,f_2\in\mathcal{B}_n$, the ($2$-fold) Forrelation measures the amount of correlation between $f_1$ and the Walsh-Hadamard transform of $f_2$, by estimating the following:
	$$\Phi_{f_1,f_2}=\frac{1}{2^{n}} \sum_{\var{x_1}\in\mathbb{F}_2^n} (-1)^{f_1(\var{x_1})}W_{f_2}(\var{x_1}) 
	= \frac{1}{2^{3n/2}} \sum_{\var{x_1},\var{x_2} \in\mathbb{F}_2^n } (-1)^{f_1(\var{x_1})}(-1)^{\var{x_1}\cdot \var{x_2}}(-1)^{f_2(\var{x_2})}.$$
	The Forrelation formulation can be further extended for $k(>2)$ Boolean functions $f_1, \ldots , f_k\in\mathcal{B}_n$ to
	$$\Phi_{f_1,\ldots , f_k} = \frac{1}{2^{\frac{(k+1)n}{2}}}\sum\limits_{\var{x_1}, \ldots , \var{x_k} \in\mathbb{F}_2^n}
	(-1)^{f_1(\var{x_1})}(-1)^{\var{x_1}\cdot \var{x_2}} (-1)^{f_2(\var{x_2})}  \ldots (-1)^{\var{x_{k-1}}\cdot \var{x_k}}
	(-1)^{f_k(\var{x_k})},$$
	known as the $k$-fold Forrelation.
\end{definition}

Given oracle access to $k$-many Boolean functions, Aaronson et al.~\cite{forr} provides two efficient quantum algorithms for estimating $\Phi_{f_1,\ldots , f_k}$, the first, by using $k$ sequential queries and the second, by using $\lceil  \frac{k}{2} \rceil$ parallel queries. Given $f_1,f_2,f_3\in\mathcal{B}_n$, the schematic diagrams of the quantum algorithms for estimating the $3$-fold Forrelation, $\Phi_{f_1, f_2, f_3}$ (one using $3$ sequential queries and another using $\lceil \frac{3}{2} \rceil=2$-parallel queries) are presented in Fig.~\ref{fig:forr}. For the step-by-step analysis of theses algorithms, the readers are referred to~\cite{amc}:
\begin{figure}[ht]
	\centering
	\begin{subfigure}[b]{0.7\textwidth}
		\includegraphics[scale=1.2]{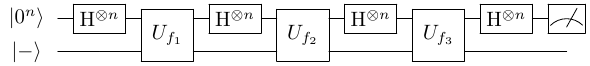}
		\caption{$3$-fold Forrelation using $3$ sequential queries.}
		\label{fig:forr-q3f3}\vspace{0.2cm}
	\end{subfigure}\\
	\begin{subfigure}[b]{0.7\textwidth}
		\includegraphics[scale=1.2]{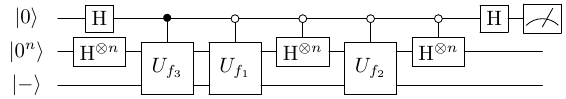}
		\caption{$3$-fold Forrelation using $2$ parallel queries.}
		\label{fig:forr-q2f3}
	\end{subfigure}
	\caption[]{Quantum algorithms for estimating the $3$-fold Forrelation~\cite{forr}.}
	\label{fig:forr}
\end{figure}

Although the Forrelation algorithm was initially used to demonstrate theoretical separation between two complexity classes, recent work~\cite{amc} has shown that it can also be leveraged for efficient sampling of different cryptographic spectra of Boolean functions, including the Walsh-Hadamard, crosscorrelation, and autocorrelation spectra. In this regard, we have the following results from~\cite{amc}.
\begin{theorem}[\cite{amc}]
	\label{thm:sample-hadamard}
	Given oracle access to $f\in\mathcal{B}_n$ and a set of points $S\subseteq\mathbb{F}_2^n$, by defining $f_1=f_3=f$ and $f_2=g\in\mathcal{B}_n$, where $g(\var{x})=1$ for $\var{x}\in S$ and $0$ otherwise, the $3$-query $3$-fold Forrelation algorithm (Fig.~\textup{\ref{fig:forr-q3f3}}) measures any state other than the all-zero state with probability of approximately $\frac{4}{2^n}\sum_{\var{x}\in S}|W_f(\var{x})|^2$.
\end{theorem}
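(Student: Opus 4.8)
The plan is to observe that, as established in the step-by-step analysis of~\cite{amc}, running the circuit of Fig.~\ref{fig:forr-q3f3} leaves the amplitude of the all-zero state $\ket{0^n}$ equal to the $3$-fold Forrelation value $\Phi_{f_1,f_2,f_3}$. Consequently the probability of measuring any state other than $\ket{0^n}$ is exactly $1-|\Phi_{f_1,f_2,f_3}|^2$, and the whole task reduces to evaluating $\Phi_{f,g,f}$ for the prescribed choice $f_1=f_3=f$, $f_2=g$. I would include a brief trace of the circuit to confirm the all-zero amplitude: starting from $\ket{0^n}$, the alternating pattern of Hadamard layers and phase oracles $U_{f_1},U_{f_2},U_{f_3}$ produces a superposition whose $\ket{0^n}$ coefficient, after collecting the normalizing factor $2^{-2n}$, is precisely the triple sum in Definition~\ref{def:forr}.

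Next I would compute $\Phi_{f,g,f}$ directly from its definition. Substituting $f_1=f_3=f$ and $f_2=g$, the variables $\var{x_1}$ and $\var{x_3}$ decouple because each is linked to the rest only through $\var{x_2}$; summing over them separately gives
\[
\sum_{\var{x_1}}(-1)^{f(\var{x_1})}(-1)^{\var{x_1}\cdot\var{x_2}}=\sum_{\var{x_3}}(-1)^{f(\var{x_3})}(-1)^{\var{x_2}\cdot\var{x_3}}=2^{n/2}W_f(\var{x_2}).
\]
Hence $\Phi_{f,g,f}=2^{-n}\sum_{\var{x_2}\in\mathbb{F}_2^n}(-1)^{g(\var{x_2})}W_f^2(\var{x_2})$, where the two Walsh factors combine into $W_f^2(\var{x_2})=|W_f(\var{x_2})|^2$ since $W_f$ is real-valued.

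Now I would use that $g$ is the indicator of $S$, so $(-1)^{g(\var{x})}=1$ off $S$ and $-1$ on $S$, which turns the sum into $\sum_{\var{x}}W_f^2(\var{x})-2\sum_{\var{x}\in S}W_f^2(\var{x})$. Applying Parseval's identity $\sum_{\var{x}}W_f^2(\var{x})=2^n$ then yields the clean closed form
\[
\Phi_{f,g,f}=1-\frac{2}{2^n}\sum_{\var{x}\in S}|W_f(\var{x})|^2 .
\]
Writing $A=\frac{2}{2^n}\sum_{\var{x}\in S}|W_f(\var{x})|^2$, the desired probability becomes $1-|\Phi_{f,g,f}|^2=1-(1-A)^2=A(2-A)=2A-A^2$.

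The final step is the approximation. Since $2A=\frac{4}{2^n}\sum_{\var{x}\in S}|W_f(\var{x})|^2$ is exactly the claimed expression, it remains to argue that the quadratic correction $A^2$ is negligible, i.e.\ that $A\ll 1$. This holds precisely in the intended sampling regime where $S$ captures only a small fraction of the total Walsh energy $2^n$, so that $\sum_{\var{x}\in S}|W_f(\var{x})|^2$ is small compared to $2^n$. I expect this approximation bookkeeping to be the only delicate point: the identity $1-|\Phi_{f,g,f}|^2=2A-A^2$ is exact, and the word \emph{approximately} in the statement should be read as dropping the $A^2$ term, whose magnitude I would make explicit as the incurred error.
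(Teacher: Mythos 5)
Your proposal is correct and follows essentially the same route the paper itself uses when it reproves the generalized version of this statement in Section~\ref{sec:cont2} (Theorem~\ref{thm:sample-mHadamard}): reduce the all-zero amplitude to $\Phi_{f,g,f}$, decouple the outer sums into $2^{n/2}W_f(\var{x_2})$ each, apply Parseval to get $\Phi_{f,g,f}=1-2p$ with $p=2^{-n}\sum_{\var{x}\in S}|W_f(\var{x})|^2$, and conclude with $1-(1-2p)^2=4p-4p^2\approx 4p$. Your explicit remark that the dropped term is exactly $A^2=4p^2$ is a slightly more careful accounting of the word ``approximately'' than the paper gives, but the argument is the same.
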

\begin{theorem}[\cite{amc}]
	\label{thm:sample-cross}
	Given oracle access to $f,g\in\mathcal{B}_n$,  and a point $\var{y}\in\mathbb{F}_2^n$, the $3$-query $3$-fold Forrelation algorithm (Fig.~\textup{\ref{fig:forr-q3f3}}) measures the all-zero state with probability $2^{-2n} \left(C_{f,g}(\var{y})^2\right)$, by setting $f_1=f$, $f_2=\mathbb{L}_{\var{y}}$, and $f_3=g$, where $\mathbb{L}_{\var{y}}=\var{y}\cdot \var{x}$ is a linear function in $\mathcal{B}_n$.
	With an exact same setup, the $2$-query $3$-fold Forrelation algorithm (Fig.~\textup{\ref{fig:forr-q2f3}}) measures $\ket{0}$ with a probability $\frac{1}{2}\left(1+\frac{C_{f,g}(\var{y})}{2^n}\right)$.
	Additionally, by setting $f=g$, the above results apply to the autocorrelation coefficient $C_{f}(\var{y})$, at the point $\var{y}\in\mathbb{F}_2^n$.
\end{theorem}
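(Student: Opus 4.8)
The plan is to read both probabilities off the amplitude produced by the respective Forrelation circuits, and then to specialize to the assignment $f_1=f$, $f_2=\mathbb{L}_{\var{y}}$, $f_3=g$. For the $3$-query sequential algorithm of Fig.~\ref{fig:forr-q3f3}, I would first trace the state through the alternating layers of Hadamard gates and phase-kickback oracles (following the step-by-step analysis recorded in~\cite{amc}): starting from $\ket{0^n}$, each $H^{\otimes n}$ contributes a factor $2^{-n/2}$ and a new inner-product phase, while each oracle $U_{f_j}$ contributes a phase $(-1)^{f_j(\var{x_j})}$. The pre-measurement amplitude of $\ket{0^n}$ then turns out to be exactly the $3$-fold Forrelation $\Phi_{f_1,f_2,f_3}$, so the probability of observing $\ket{0^n}$ is $\Phi_{f_1,f_2,f_3}^2$, and the whole task reduces to evaluating this quantity for the chosen functions.

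The key step is the collapse of the middle summation forced by the linear choice $f_2=\mathbb{L}_{\var{y}}$. Writing
$$\Phi_{f,\mathbb{L}_{\var{y}},g}=\frac{1}{2^{2n}}\sum_{\var{x_1},\var{x_2},\var{x_3}\in\mathbb{F}_2^n}(-1)^{f(\var{x_1})\op \var{x_1}\cdot\var{x_2}\op \var{y}\cdot\var{x_2}\op \var{x_2}\cdot\var{x_3}\op g(\var{x_3})},$$
the only $\var{x_2}$-dependence in the exponent is the single linear form $(\var{x_1}\op\var{y}\op\var{x_3})\cdot\var{x_2}$, so summing over $\var{x_2}$ produces $2^n$ times the indicator $[\var{x_3}=\var{x_1}\op\var{y}]$. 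This pins $\var{x_3}=\var{x_1}\op\var{y}$ and leaves
$$\Phi_{f,\mathbb{L}_{\var{y}},g}=\frac{1}{2^n}\sum_{\var{x_1}\in\mathbb{F}_2^n}(-1)^{f(\var{x_1})\op g(\var{x_1}\op\var{y})}=\frac{1}{2^n}\,C_{f,g}(\var{y}).$$
Squaring this amplitude gives the claimed all-zero probability $2^{-2n}\,C_{f,g}(\var{y})^2$.

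For the $2$-query parallel algorithm of Fig.~\ref{fig:forr-q2f3}, I would observe that its structure is precisely a Hadamard test: a single control qubit, prepared in $\ket{+}$, governs the two query branches, and the quantity it evaluates is exactly $\Phi_{f_1,f_2,f_3}$. Because every phase occurring here is $\pm1$ and the normalization is real, this Forrelation is a real number, so the standard Hadamard-test identity gives that the control-qubit measurement returns $\ket{0}$ with probability $\tfrac12\bigl(1+\Phi_{f_1,f_2,f_3}\bigr)$. Substituting $\Phi_{f,\mathbb{L}_{\var{y}},g}=2^{-n}C_{f,g}(\var{y})$ from the previous step yields $\tfrac12\bigl(1+C_{f,g}(\var{y})/2^n\bigr)$; note that $|C_{f,g}(\var{y})|\le 2^n$ keeps this in $[0,1]$.

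Finally, the autocorrelation statement is immediate: setting $g=f$ replaces $C_{f,g}(\var{y})$ by $C_{f}(\var{y})=\sum_{\var{x}\in\mathbb{F}_2^n}(-1)^{f(\var{x})\op f(\var{x}\op\var{y})}$ in both expressions, and the two probabilities specialize accordingly. The only genuinely substantive computation is the middle-index collapse above; the remainder is circuit bookkeeping, for which I would lean on the amplitude analyses already given in~\cite{amc}, together with the Hadamard-test identity. I expect the main obstacle to be purely presentational, namely keeping the normalization factors $2^{-(k+1)n/2}$ consistent between the sequential and parallel circuits, rather than anything conceptual.
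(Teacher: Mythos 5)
Your proof is correct and follows essentially the same route as the paper: the paper (which imports this theorem from~\cite{amc} and proves its $m$-generalization in Theorem~\ref{thm:sample-mcross}) identifies $\Phi_{f,\mathbb{L}_{\var{y}},g}$ with $2^{-n}C_{f,g}(\var{y})$ via the convolution identity $C_{f,g}(\var{y})=\sum_{\var{u}}W_{f}(\var{u})W_{g}(\var{u})(-1)^{\var{u}\cdot\var{y}}$ and then reads the two probabilities off the standard amplitude analyses of the sequential and Hadamard-test circuits, exactly as you do. The only cosmetic difference is that your collapse of the $\var{x_2}$-sum re-derives that convolution identity inline for the special case $f_2=\mathbb{L}_{\var{y}}$ rather than citing it, which is a perfectly valid and self-contained substitute.
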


Following the Forrelation formulation, the concept of ($3$-fold) nega-Forrelation was introduce in~\cite{dcc}, defined as follows.
\begin{definition}[nega-Forrelation~\cite{dcc}]
	\label{def:nega-forr}
	Given oracle access to $f_1,f_2,f_3\in\mathcal{B}_n$, the $3$-fold nega-Forrelation measures certain kind of correlation between the Boolean function $f_1$, the \nh of $f_2$ and the conjugate \nh of $f_3$, mathematically formulated as
    \allowdisplaybreaks
\begin{align*}
\eta_{f_1,f_2,f_3}&=\frac{1}{2^{n}}\sum_{\var{x}\in\mathbb{F}_2^n}{(-1)^{f_1(\var{x})}N_{f_2}(\var{x})\overline{N}_{f_3}(\var{x})}\\
&=\frac{1}{2^{2n}}\sum_{\var{x_1}, \var{x_2} , \var{x_3}\in\mathbb{F}_2^n} (-1)^{f_1(\var{x_1})} (-1)^{f_2(\var{x_2})\op \var{x_1}\cdot \var{x_2}} (i)^{wt(\var{x_2})} (-1)^{f_3(\var{x_3})\op \var{x_1}\cdot \var{x_3}}(-i)^{wt(\var{x_3})}.
\end{align*}
\end{definition}

In a similar manner, two efficient quantum algorithms were proposed for estimating the $3$-fold nega-Forrelation~\cite{dcc} assuming the oracle access to the Boolean functions $f_1,f_2,f_3 \in\mathcal{B}_n$, one using $3$ sequential queries and the other using $\lceil  \frac{3}{2} \rceil=2$ parallel queries. For a detailed analysis of these algorithms, the readers are referred to~\cite{dcc}. The schematic diagrams of the corresponding quantum algorithms are shown in Fig.~\textup{\ref{fig:nega}}.
\begin{figure}[ht]
	\centering
	\begin{subfigure}[b]{0.7\textwidth}
		\includegraphics[scale=1.2]{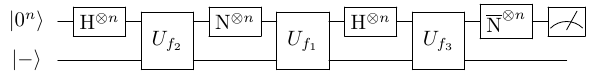}
		\caption{$3$-fold nega-Forrelation using $3$ sequential queries.}
		\label{fig:nega-q3f3}\vspace{0.2cm}
	\end{subfigure}
	\begin{subfigure}[b]{0.7\textwidth}
		\includegraphics[scale=1.2]{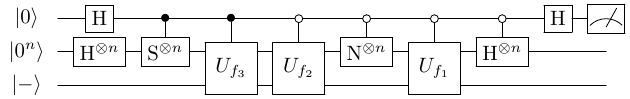}
		\caption{$3$-fold nega-Forrelation using $2$ parallel queries.}
		\label{fig:nega-q2f3}
	\end{subfigure}
	\caption[]{Quantum algorithms for estimating $3$-fold nega-Forrelation~\cite{dcc}.}
	\label{fig:nega}
\end{figure}

As established in~\cite{dcc}, the $3$-fold nega-Forrelation can also be utilized for the efficient sampling of nega-Hadamard, nega-crosscorrelation, and nega-autocorrelation spectra. In this direction, we present the following results from~\cite{dcc}.

\begin{theorem}[\cite{dcc}]
	\label{thm:sample-negahadamard}
	Given oracle access to $f\in\mathcal{B}_n$ and a set of points $S\subseteq\mathbb{F}_2^n$, by defining $f_1=f_3=f$ and $f_2=g\in\mathcal{B}_n$, where $g(\var{x})=1$ for $\var{x}\in S$ and $0$ otherwise, the $3$-query nega-Forrelation algorithm (Fig.~\textup{\ref{fig:nega-q3f3}}) measures any state other than the all-zero state with probability of approximately $\frac{4}{2^n}\sum_{\var{x}\in S}|N_f(\var{x})|^2$.
\end{theorem}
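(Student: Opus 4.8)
The plan is to mirror the proof of Theorem~\ref{thm:sample-hadamard} from~\cite{amc}, replacing the Walsh--Hadamard transform by the nega-Hadamard transform throughout and using the nega-Parseval identity in place of Parseval's identity. First I would recall, following the circuit analysis in~\cite{dcc}, that the amplitude of the all-zero basis state in the pre-measurement state produced by the $3$-query circuit of Fig.~\ref{fig:nega-q3f3} is exactly the nega-Forrelation $\eta_{f_1,f_2,f_3}$ of Definition~\ref{def:nega-forr}. The key observation is that the function $f$ is queried in the two slots that undergo the nega-Hadamard and the conjugate nega-Hadamard transforms; hence these two occurrences combine at each point into $N_f(\var{x})\overline{N_f(\var{x})}=|N_f(\var{x})|^2$, while the indicator $g$ occupies the remaining (phase) slot and contributes the sign $(-1)^{g(\var{x})}$. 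This reduces the all-zero amplitude to
\[
\eta=\frac{1}{2^n}\sum_{\var{x}\in\mathbb{F}_2^n}(-1)^{g(\var{x})}\,|N_f(\var{x})|^2.
\]

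Next I would split the sum according to whether $\var{x}\in S$ (where $(-1)^{g(\var{x})}=-1$) or $\var{x}\notin S$ (where $(-1)^{g(\var{x})}=+1$), write $\sum_{\var{x}\notin S}=\sum_{\var{x}}-\sum_{\var{x}\in S}$, and invoke the nega-Parseval identity $\sum_{\var{x}\in\mathbb{F}_2^n}|N_f(\var{x})|^2=2^n$. This collapses the amplitude to the real number
\[
\eta=1-\frac{2}{2^n}\sum_{\var{x}\in S}|N_f(\var{x})|^2.
\]
Since $\eta$ is real, the probability of observing the all-zero state upon measurement is $\eta^2$, so the probability of observing any other state is $1-\eta^2$. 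Writing $\delta=\frac{2}{2^n}\sum_{\var{x}\in S}|N_f(\var{x})|^2$, I would expand $1-\eta^2=1-(1-\delta)^2=2\delta-\delta^2=\frac{4}{2^n}\sum_{\var{x}\in S}|N_f(\var{x})|^2-\delta^2$; for a set $S$ small enough that the collected nega-Hadamard mass $\sum_{\var{x}\in S}|N_f(\var{x})|^2$ is small relative to $2^n$, the quadratic term $\delta^2$ is negligible and the stated approximation $\frac{4}{2^n}\sum_{\var{x}\in S}|N_f(\var{x})|^2$ follows.

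The main obstacle I anticipate is the first step: rigorously tracking the amplitudes through the nega-Hadamard and conjugate nega-Hadamard gate layers of Fig.~\ref{fig:nega-q3f3} to confirm that the all-zero amplitude is precisely $\eta_{f_1,f_2,f_3}$ and, in particular, that the two copies of $f$ really enter as $N_f$ and its conjugate $\overline{N_f}$ (rather than, say, two copies of $N_f$), so that their product is the real quantity $|N_f(\var{x})|^2$ and the amplitude stays real. Once that bookkeeping is settled, the remaining steps---the $S$/complement split, the application of nega-Parseval, and the discarding of the $O(\delta^2)$ term---are routine and exactly parallel the Walsh--Hadamard case.
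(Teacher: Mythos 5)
Your proposal is correct and follows essentially the same route the paper itself takes when it proves the generalized $m$-Hadamard version of this statement in Section~\ref{sec:cont2} (Proposition~\ref{prop:sample} and Theorem~\ref{thm:sample-mHadamard}, of which this theorem is the $m=4$ instance): the all-zero amplitude of the $3$-query circuit equals the nega-Forrelation, which collapses via the nega-Parseval identity to $1-2p$ with $p=\frac{1}{2^n}\sum_{\var{x}\in S}|N_f(\var{x})|^2$, and the complement probability is $1-(1-2p)^2=4p-4p^2\approx 4p$. One point worth flagging: under the indexing of Definition~\ref{def:nega-forr}, where $f_1$ occupies the phase slot and $f_2,f_3$ are the two nega-transformed functions, the assignment that produces $\frac{1}{2^n}\sum_{\var{x}}(-1)^{g(\var{x})}\lvert N_f(\var{x})\rvert^2$ is $f_1=g$, $f_2=f_3=f$ --- exactly the bookkeeping you use, and the one the paper adopts in Section~\ref{sec:cont2} --- whereas the literal labels in the theorem statement ($f_1=f_3=f$, $f_2=g$) appear to be carried over from the Walsh--Hadamard version (Theorem~\ref{thm:sample-hadamard}), where $f_2$ is the middle function of the chain; so the ``obstacle'' you anticipated is real, but it is a labelling inconsistency in the statement rather than a flaw in your argument.
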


\begin{theorem}[\cite{dcc}]
	\label{thm:sample-negacross}
	Given oracle access to $f,g\in\mathcal{B}_n$,  and a point $\var{y}\in\mathbb{F}_2^n$, the $3$-query nega-Forrelation algorithm (Fig.~\textup{\ref{fig:nega-q3f3}}) measures the all-zero state with probability $2^{-2n}|\widehat{C}_{f,g}(\var{y})|^2$, by setting $f_1=f$, $f_2=\mathbb{L}_{\var{y}}$, and $f_3=g$.
	%, where $\mathbb{L}_{\var{y}}=\var{y}\cdot \var{x}$ is a linear function in $\mathcal{B}_n$. 
	With an exact same setup, the $2$-query nega-Forrelation algorithm (Fig.~\textup{\ref{fig:nega-q2f3}}) measures $\ket{0}$ with a probability $\frac{1}{2}\left(1+\Re\left(\frac{(-i)^{wt(\var{y})}\widehat{C}_{f,g}(\var{y})}{2^n}\right)\right)$, where $\widehat{C}_{f,g}(\var{y})$ is the nega-crosscorrelation of $f$ and $g$ at the point $\var{y}\in\mathbb{F}_2^n$. Additionally, by setting $f=g$, the above results hold for the nega-autocorrelation coefficient $\widehat{C}_{f}(\var{y})$, at $\var{y}\in\mathbb{F}_2^n$.
\end{theorem}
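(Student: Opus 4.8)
The plan is to reduce both parts to two ingredients already available: the fact that the circuits of Fig.~\ref{fig:nega-q3f3} and Fig.~\ref{fig:nega-q2f3} prepare pre-measurement states whose relevant amplitude is the nega-Forrelation of the three queried oracles, and the nega-crosscorrelation-to-nega-Hadamard identity $\widehat{C}_{f,g}(\var{y})=i^{wt(\var{y})}\sum_{\var{u}}N_f(\var{u})\overline{N_g}(\var{u})(-1)^{\var{u}\cdot\var{y}}$ recorded above. The substitution $f_1=f$, $f_2=\mathbb{L}_{\var{y}}$, $f_3=g$ is chosen precisely so that the middle (linear) oracle contributes only the phase $(-1)^{\var{x}\cdot\var{y}}$, while the two outer oracles enter through their nega-Hadamard transforms $N_f$ and $\overline{N_g}$; this is the exact combination that the identity converts into $\widehat{C}_{f,g}(\var{y})$.

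For the $3$-query claim I would first trace Fig.~\ref{fig:nega-q3f3} to confirm, in parallel with Theorem~\ref{thm:sample-cross}, that $\ket{0^n}$ is observed with probability equal to the squared modulus of the all-zero amplitude. I would then evaluate that amplitude at the chosen oracles, obtaining $\frac{1}{2^n}\sum_{\var{u}}N_f(\var{u})\,(-1)^{\var{u}\cdot\var{y}}\,\overline{N_g}(\var{u})$. Applying the identity and using $i^{-wt(\var{y})}=(-i)^{wt(\var{y})}$ rewrites this as $\frac{(-i)^{wt(\var{y})}}{2^n}\widehat{C}_{f,g}(\var{y})$, whose squared modulus is $2^{-2n}\lvert\widehat{C}_{f,g}(\var{y})\rvert^2$, as claimed.

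For the $2$-query claim I would read Fig.~\ref{fig:nega-q2f3} as a Hadamard test on a single control qubit: after the controlled nega-transform stages and the terminal Hadamard on the control, the probability of outcome $\ket{0}$ is $\frac12\bigl(1+\Re Z\bigr)$, where $Z$ is the very amplitude $\frac{(-i)^{wt(\var{y})}}{2^n}\widehat{C}_{f,g}(\var{y})$ computed for the $3$-query case (the same mechanism by which the parallel Forrelation circuit returns the real part of the amplitude rather than its modulus, cf.\ the $2$-query half of Theorem~\ref{thm:sample-cross}). Substituting $Z$ yields the stated probability, and setting $f=g$ replaces $\widehat{C}_{f,g}$ by the nega-autocorrelation $\widehat{C}_{f}$ with no change to the argument.

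The main obstacle I anticipate is the phase bookkeeping. Unlike the Walsh case, where $W_{\mathbb{L}_{\var{y}}}(\var{x})=2^{n/2}[\var{x}=\var{y}]$ is a delta that collapses the sum instantly, the nega-Hadamard transform of a linear function is a full modulus-one phase vector, so the reduction cannot proceed by a naive collapse and must instead route through the identity above. The one quantity that needs careful tracking is the global factor $(-i)^{wt(\var{y})}$: it is annihilated by the modulus in the $3$-query statement but survives inside the real part in the $2$-query statement, so getting its argument right is exactly what distinguishes the two probability expressions.
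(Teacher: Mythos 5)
Your argument is correct, and it is essentially the route the paper itself takes: this statement is quoted from~\cite{dcc} without proof here, but the paper's proof of its generalization (Theorem~\ref{thm:sample-mcross}, of which this is the $m=4$ case) is exactly your reduction — rewrite the (nega-)crosscorrelation via the product identity as $2^n\zeta_m^{wt(\var{y})}$ times a Forrelation value with the linear function in the untransformed slot, then invoke the known output probabilities $|\Phi|^2$ and $\tfrac12(1+\Re\Phi)$ of the sequential and parallel circuits. Your closing observation that $N_{\mathbb{L}_{\var{y}}}$ is a unit-modulus phase vector rather than a delta, so the sum cannot collapse as in the Walsh case, is precisely the right reason the identity is needed, and your tracking of the $(-i)^{wt(\var{y})}$ factor is correct.

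One caveat on bookkeeping: under Definition~\ref{def:nega-forr} the \emph{first} argument of $\eta_{f_1,f_2,f_3}$ is the untransformed one, so the amplitude you correctly compute, $\frac{1}{2^n}\sum_{\var{u}}N_f(\var{u})(-1)^{\var{u}\cdot\var{y}}\overline{N_g}(\var{u})$, is $\eta_{\mathbb{L}_{\var{y}},f,g}$, i.e., it corresponds to the assignment $f_1=\mathbb{L}_{\var{y}}$, $f_2=f$, $f_3=g$ — matching the paper's own $\Phi^{(m)}_{h,f,g}$ in Theorem~\ref{thm:sample-mcross} — rather than the assignment $f_2=\mathbb{L}_{\var{y}}$ written in the quoted statement (which is the correct placement only in the Walsh case, where the delta collapse makes the middle transform slot work). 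Your description of $f_2$ as the "middle (linear) oracle contributing only the phase $(-1)^{\var{x}\cdot\var{y}}$" is therefore at odds with the definition even though the formula you then manipulate is the right one; just make the slot explicit so the computation and the labeling agree.
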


In Section~\ref{sec:cont1}, we extend the Forrelation formulation to $m$-Forrelation (Definition~\ref{def:m-forr}), which measures certain kind of correlation between the truth table of a Boolean function with the $m$-Hadamard transforms (Definition~\ref{def:m-Hadamard}) of other Boolean functions.
%

%%%%%%%%%% New Section %%%%%%%%%%%
\section{Generalization of Boolean functions' spectra}
\label{sec:cont1}
In this section, we generalize significant spectra of Boolean functions, including the Walsh-Hadamard, crosscorrelation, and autocorrelation spectra, and then present the most extended version of a Deutsch-Jozsa-like quantum algorithm, building upon the work of~\cite{dj,exdj}. Finally, we introduce the most comprehensive form of Forrelation, that encompasses both Forrelation~\cite{aaron} and nega-Forrelation~\cite{dcc} as its special cases, and provide the necessary quantum algorithm for estimating it.

In this direction, let us introduce a single qubit unitary operator, $\Omega_m$, with parameter $m\in\mathbb{N}$, defined as $\Omega_m=\frac{1}{\sqrt{2}}\begin{psmallmatrix}
	1 & \zeta_m\\
	1 & -\zeta_m
\end{psmallmatrix}$, 
where $\zeta_m=e^{2\pi i/m}$ denotes the primitive $m$-th complex root of unity, i.e., $\left(\zeta_m\right)^m = 1$. 
This gate can be derived as a special case from the well known and generalized
$U_3(\theta, \phi, \lambda) = \begin{psmallmatrix} 
\cos{\frac{\theta}{2}} & -e^{i\lambda} \sin{\frac{\theta}{2}}\\
e^{i\phi}\sin{\frac{\theta}{2}} & e^{i(\phi+\lambda)} \cos{\frac{\theta}{2}}
\end{psmallmatrix}$ structure.
The conjugate of $\zeta_m$ is given by $\overline{\zeta_m} = e^{-2\pi i/m} = \zeta_m^{-1}$. Moreover, the inverse of the unitary operator $\Omega_m$ is given by $\Omega_m^{\dagger} =  \frac{1}{\sqrt{2}}\begin{psmallmatrix}
	1 & 1\\
	\zeta_m^{-1} & -\zeta_m^{-1}
\end{psmallmatrix}$ such that $\Omega_m\Omega_m^{\dagger} = \Omega_m^{\dagger}\Omega_m = I_2$, where $I_2$ is the $2\times 2$ identity matrix $\begin{psmallmatrix}
	1 & 0\\
	0 & 1
\end{psmallmatrix}$.
The action of $\Omega_m$ on the basis states is given by:
$$\ket{0}\xrightarrow{\Omega_m} \frac{1}{\sqrt{2}}\left(\ket{0}+\ket{1}\right), \ket{1}\xrightarrow{\Omega_m} \frac{1}{\sqrt{2}}\left(\zeta_m\ket{0}-\zeta_m\ket{1}\right).$$
In general, when $n$-many $\Omega_m$ apply on an $n$-qubit state $\ket{\psi}=\sum_{\xv\in\mathbb{F}_2^n}\alpha_{\xv}\ket{\xv}$ with $\sum_{\xv\in\mathbb{F}_2^n}|\alpha_{\xv}|^2=1$, the resultant state is given by:
$$\Omega_m^{\otimes n}\left(\sum_{\xv\in\mathbb{F}_2^n}\alpha_{\xv}\ket{\xv} \right)=2^{-n/2}\sum_{\xv\in\mathbb{F}_2^n}\alpha_{\xv}\left(\sum_{\var{y}\in\mathbb{F}_2^n}(-1)^{\xv\cdot \var{y}}\zeta_m^{wt(\xv)}\ket{\var{y}}\right).$$
\begin{remark}
	Note that for $m = 1$, the newly defined unitary operator reduces to the Hadamard gate, $\Omega_1 = \mathrm{H}$. Similarly, for $m = 4$, the unitary operator $\Omega_m$ corresponds to the nega-Hadamard gate $\mathrm{N}$, as described in~\textup{\cite{dcc}}. Moreover, for $m = 2$, there exists a real unitary matrix, defined as $\Omega_2 = \frac{1}{\sqrt{2}}\begin{psmallmatrix}
		1 & -1\\ 1 & 1
	\end{psmallmatrix}$, such that
	$\Omega_2\left(\ket{0} \right) = \frac{1}{\sqrt{2}}\left(\ket{0} + \ket{1}\right)$ and $\Omega_2\left( \ket{1}\right) = \frac{1}{\sqrt{2}} \left(-\ket{0} + \ket{1}\right)$.
\end{remark}

In a similar manner, we define another single-qubit unitary by conjugating the elements of $\Omega_m$, denoted as $\overline{\Omega}_m$, which is given by $\overline{\Omega}_m = \frac{1}{\sqrt{2}}\begin{pmatrix}
	1 & \overline{\zeta_m}\\ 1 & -\overline{\zeta_m}
\end{pmatrix}$. The inverse of $\overline{\Omega}_m$ is given by $\overline{\Omega}_m^{\dagger} = \frac{1}{\sqrt{2}}\begin{pmatrix}
	1 & 1\\
	\zeta_m & -\zeta_m
\end{pmatrix}$, such that $\overline{\Omega}_m\overline{\Omega}_m^{\dagger} = \overline{\Omega}_m^{\dagger} \overline{\Omega}_m = \mathrm{I}_2$. The action of $n$-many $\overline{\Omega}_m$ on an $n$-qubit state $\ket{\psi}=\sum_{\xv\in\mathbb{F}_2^n}\alpha_{\xv}\ket{\xv}$ is given by
$$\overline{\Omega}_m^{\otimes n}\left(\sum_{\xv\in\mathbb{F}_2^n}\alpha_{\xv}\ket{\xv} \right)=2^{-n/2}\sum_{\xv\in\mathbb{F}_2^n}\alpha_{\xv}\left(\sum_{\var{y}\in\mathbb{F}_2^n}(-1)^{\xv\cdot \var{y}}(\overline{\zeta_m})^{wt(\xv)}\ket{\var{y}}\right).$$
Furthermore, for $m=4$, the newly defined $\overline{\Omega}_m$ coincides with the conjugate nega-Hadamard gate, $\overline{\mathrm{N}}$, largely utilized in~\cite{dcc}.

Additionally, we define another single-qubit gate, $\mathrm{S}_m$, which generalizes the phase gate (S), given by $\mathrm{S}_m=\begin{pmatrix}
	1 & 0\\0 & \zeta_m
\end{pmatrix}$. The inverse of $\mathrm{S}_m$ is given by $\mathrm{S}_m^{\dagger}= \begin{pmatrix}
	1 & 0\\0 & \zeta_m^{-1}
\end{pmatrix}$, and so $\mathrm{S}_m\mathrm{S}_m^{\dagger} = \mathrm{S}_m^{\dagger}\mathrm{S}_m = \mathrm{I}_2$. Notably, for $m=1$, $\mathrm{S}_m$ reduces to the identity gate. When $m=2$, we have $\zeta_2 = -1$, making $\mathrm{S}_2$ equivalent to the Pauli Z gate. Similarly, for $m=4$, $\mathrm{S}_m$ corresponds to the standard phase gate S, and for $m=8$, $\mathrm{S}_m$ is equivalent to the well-known T gate. The existing special cases of this newly defined unitary $\mathrm{S}_m$ are summarized as follows:
$$\mathrm{S}_1= \mathrm{I},\quad \mathrm{S}_2 = \mathrm{Z},\quad \mathrm{S}_4 = \mathrm{S},\quad \mathrm{S}_8 = \mathrm{T}.$$
More generally, when $n$ instances of $\mathrm{S}_m$ are applied to an $n$-qubit state $\ket{\psi} = \sum_{\xv\in\mathbb{F}_2^n} \alpha_{\xv} \ket{\xv}$, the resultant state acquires a multiplicative phase factor of $\zeta_m^{wt(\xv)}$, that is,
$$\displaystyle{\mathrm{S}_{m}^{\otimes n}\left(\sum_{\xv\in \mathbb{F}_2^n} {\alpha_{\var{x}}\ket{\xv}} \right) = \sum_{\xv\in \mathbb{F}_2^n}{\alpha_{\xv}\zeta_m^{wt(\xv)}\ket{\xv}}.}$$

Next, we generalize the $2^k$-Hadamard transform, $2^k$-crosscorrelation, $2^k$-autocorrelation and the related results for any $m\in\mathbb{N}$, establishing the existing results as specific cases within this generalized framework. We begin with extending the definition of $2^k$-Hadamard transform~\cite{2kbent} to define the most generalized version of the Hadamard transform, and connect it with the newly defined unitary operator $\Omega_m$.

\begin{definition}[$m$-Hadamard transform]
	\label{def:m-Hadamard}
	Given $f\in\mathcal{B}_n$, the $m$-Hadamard transform ($m\in\mathbb{N}$) of $f$ at $\bm{\omega}\in\mathbb{F}_2^n$ is a complex valued function, $\mathcal{H}^{(m)}_f: \mathbb{F}_2^n \rightarrow \mathbb{C}$ defined as
	$$\mathcal{H}^{(m)}_f(\bm{\omega})=2^{-n/2}\sum_{\var{x}\in\mathbb{F}_2^n}(-1)^{f(\var{x})\oplus \var{x}\cdot \bm{\omega}} \zeta_m^{wt(\var{x})}.$$
\end{definition}
\begin{remark}
	Note that for $m = 1$, the $m$-Hadamard transform reduces to the standard Walsh-Hadamard transform, i.e., $\mathcal{H}^{(1)}_f=W_f$. Similarly, for $m = 4$, it corresponds to the nega-Hadamard transform, $\mathcal{H}^{(4)}_f=N_f$. More generally, for $m=2^k$, the transform is expressed as $\mathcal{H}^{(m)}_f=\mathcal{H}_f^{(2^k)}$, representing the $2^k$-Hadamard transform.
\end{remark}

Since the $m$-Hadamard transform is a complex valued function, we can define the conjugate $m$-Hadamard transform as follows:
$$\overline{\mathcal{H}^{(m)}_f(\bm{\omega})} = 2^{-n/2}\sum_{\var{x}\in\mathbb{F}_2^n}(-1)^{f(\var{x})\oplus \var{x}\cdot \bm{\omega}} \left(\overline{\zeta_m}\right)^{wt(\var{x})}.$$
In this regard, the constraint $\sum_{\bm{\omega}\in\mathbb{F}_2^n}|\mathcal{H}^{(m)}_f(\bm{\omega})|^2 = \sum_{\bm{\omega}\in \mathbb{F}_2^n}\mathcal{H}^{(m)}_f(\bm{\omega})\overline{\mathcal{H}^{(m)}_f(\bm{\omega})}=2^n$ is known as the $m$-Parseval identity. The proof follows from Corollary~\ref{cor:m-cross-hadamard}.

Similar to the bent-negabent classifications, a Boolean function $f\in\mathcal{B}_n$ is called $m$-\textit{bent} if its $m$-Hadamard transform is flat in complex modulus, i.e., $\left|\mathcal{H}^{(m)}_f(\bm{\omega})\right| = 1$, for all $\bm{\omega} \in \mathbb{F}_2^n$. Furthermore, any Boolean function $f\in\mathcal{B}_n$ can be represented as a sum of its $m$-Hadamard transforms, as follows.
\begin{lemma}
\label{lem:m-spectra}
	Let $f\in\mathcal{B}_n$. Then the $m$-Hadamard transform is invertible, i.e.,
	$$(-1)^{f(\var{y})} = 2^{-n/2}\zeta_m^{-wt(\var{y})} \sum_{\bm{\omega} \in\mathbb{F}_2^n} \mathcal{H}^{(m)}_f(\bm{\omega})(-1)^{ \bm{\omega} \cdot \var{y}}.$$
\end{lemma}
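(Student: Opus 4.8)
The plan is to prove the inversion formula by direct substitution of Definition~\ref{def:m-Hadamard} into the right-hand side and then exploiting the orthogonality of the additive characters of $\mathbb{F}_2^n$. First I would write out the right-hand side
$$2^{-n/2}\zeta_m^{-wt(\var{y})}\sum_{\bm{\omega}\in\mathbb{F}_2^n}\mathcal{H}^{(m)}_f(\bm{\omega})(-1)^{\bm{\omega}\cdot\var{y}}$$
and replace $\mathcal{H}^{(m)}_f(\bm{\omega})$ by its defining sum $2^{-n/2}\sum_{\var{x}\in\mathbb{F}_2^n}(-1)^{f(\var{x})\op\var{x}\cdot\bm{\omega}}\zeta_m^{wt(\var{x})}$, collecting the two normalizing factors into a single $2^{-n}$.

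Next I would interchange the (finite) order of summation so that the sum over $\bm{\omega}$ becomes the innermost one. Since the only $\bm{\omega}$-dependence resides in the sign $(-1)^{\var{x}\cdot\bm{\omega}}(-1)^{\bm{\omega}\cdot\var{y}}=(-1)^{\bm{\omega}\cdot(\var{x}\op\var{y})}$, the inner sum is the character sum $\sum_{\bm{\omega}\in\mathbb{F}_2^n}(-1)^{\bm{\omega}\cdot(\var{x}\op\var{y})}$. The key step is the standard orthogonality relation over $\mathbb{F}_2^n$: this sum equals $2^n$ when $\var{x}=\var{y}$ and $0$ otherwise, so the double sum collapses to the single surviving term $\var{x}=\var{y}$.

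Finally, with only $\var{x}=\var{y}$ contributing, the expression becomes $2^{-n}\cdot 2^n\cdot\zeta_m^{-wt(\var{y})}(-1)^{f(\var{y})}\zeta_m^{wt(\var{y})}$, and the external factor $\zeta_m^{-wt(\var{y})}$ cancels exactly against the weight phase $\zeta_m^{wt(\var{y})}$ carried by the surviving term, leaving $(-1)^{f(\var{y})}$ as claimed.

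I do not anticipate any genuine obstacle, as the argument is a routine Fourier-inversion computation; the only point requiring slight care is tracking the weight phase $\zeta_m^{wt(\var{x})}$ (rather than its conjugate) through the interchange of summation, so that it is evaluated at $\var{x}=\var{y}$ and cancels the prefactor $\zeta_m^{-wt(\var{y})}$. I note that the same manipulation—pairing $\mathcal{H}^{(m)}_f(\bm{\omega})$ with its conjugate before summing over $\bm{\omega}$—is exactly what yields the $m$-Parseval identity, consistent with the remark that it follows from Corollary~\ref{cor:m-cross-hadamard}.
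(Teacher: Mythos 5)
Your proposal is correct and follows essentially the same route as the paper's own proof: expand $\mathcal{H}^{(m)}_f(\bm{\omega})$ via Definition~\ref{def:m-Hadamard}, interchange the order of summation, apply the character orthogonality $\sum_{\bm{\omega}\in\mathbb{F}_2^n}(-1)^{\bm{\omega}\cdot(\var{x}\oplus\var{y})}=2^n\delta_0(\var{x}\oplus\var{y})$, and let the surviving phase $\zeta_m^{wt(\var{y})}$ cancel the prefactor $\zeta_m^{-wt(\var{y})}$. The only cosmetic difference is that the paper carries the computation without the $\zeta_m^{-wt(\var{y})}$ factor and reinstates it at the end, whereas you keep it throughout; this changes nothing of substance.
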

\begin{proof} Let us begin with the right hand side (RHS)
	\begin{align*}
		2^{-n/2} \sum_{\bm{\omega} \in \mathbb{F}_2^n} \mathcal{H}^{(m)}_f(\bm{\omega})(-1)^{ \bm{\omega} \cdot \var{y}} =& 2^{-n} \sum_{\bm{\omega} \in \mathbb{F}_2^n} \sum_{\var{x}\in\mathbb{F}_2^n} (-1)^{f(\var{x})\oplus \var{x}\cdot \bm{\omega}}\zeta_m^{wt(\var{x})} (-1)^{ \bm{\omega} \cdot \var{y}}\\
		=& 2^{-n} \sum_{\var{x}\in\mathbb{F}_2^n} (-1)^{f(\var{x})}\zeta_m^{wt(\var{x})} \sum_{\bm{\omega}\in\mathbb{F}_2^n}  (-1)^{ \bm{\omega} \cdot (\var{x}\oplus \var{y})}\\
		=& 2^{-n} \sum_{\var{x}\in\mathbb{F}_2^n} (-1)^{f(\var{x})}\zeta_m^{wt(\var{x})} 2^n\delta_0(\var{x}\oplus \var{y})
		= (-1)^{f(\var{y})}\zeta_m^{wt(\var{y})},
	\end{align*}
and the claim is shown.
\end{proof}

Next, we compute the $m$-Hadamard transform for various combinations of Boolean functions (as in~\cite{pre,2kbent}).
\begin{lemma}
\label{lem:m-forr-prop}
Let $f,g,h\in\mathcal{B}_n$ and $\zeta_m = e^{2\pi i/m}$. Then, the following holds. 
    \begin{enumerate}[label=(\alph*)]
        \item If $g(\var{x}) = f(\xv) \op \var{c}\cdot \var{x} \oplus d$ with $\var{c}\in\mathbb{F}_2^n$, and $d\in\mathbb{F}_2$, then $\mathcal{H}_{g}^{(m)}({\var{u}}) = (-1)^d\mathcal{H}_{f}^{(m)}({\var{u}\oplus \var{c}})$. Moreover, if $\mathbb{L}(\xv) = \var{c}\cdot \xv \op d$, then $\mathcal{H}_{\mathbb{L}}^{(m)}({\var{u}}) = (-1)^d 2^{n/2} \left(\cos (\pi/m) \right)^n \left(-i\tan (\pi/m) \right)^{wt(\var{u} \oplus \var{c})} \zeta^{n/2 - wt(\var{u}\oplus \var{c})}$.
        \item If $f(\xv)=g(\xv)\oplus h(\xv)$, then $\mathcal{H}_{f}^{(m)}({\var{u}}) = 2^{-n/2}\sum_{\var{v}}\mathcal{H}^{(m)}_g(\var{v})W_h(\var{u}\oplus \var{v}) = 2^{-n/2}\sum_{\var{v}}W_g(\var{v})\mathcal{H}^{(m)}_h(\var{u}\oplus \var{v})$.
        \item If $g(\xv) = f(A\xv)$, where $A$ is an $n\times n$ orthogonal matrix over $\mathbb{F}_2$, then $\mathcal{H}_{g}^{(m)}(\var{u}) = \mathcal{H}_{f}^{(m)}(A\var{u})$.
        \item If $h(\xv,\yv)=f(\xv)\oplus g(\yv)$ with $\xv,\yv \in \mathbb{F}_2^n$, then $\mathcal{H}_{h}^{(m)}(\var{u},\var{v}) = \mathcal{H}_{f}^{(m)}(\var{u})\mathcal{H}_{g}^{(m)}(\var{v})$.
    \end{enumerate}
\end{lemma}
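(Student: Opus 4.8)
The plan is to prove each of the four parts (a)--(d) by direct computation from Definition~\ref{def:m-Hadamard}, in each case substituting the given relation between the functions into the defining sum $\mathcal{H}^{(m)}_f(\bm{\omega})=2^{-n/2}\sum_{\var{x}}(-1)^{f(\var{x})\oplus \var{x}\cdot\bm{\omega}}\zeta_m^{wt(\var{x})}$ and simplifying. The main tools throughout are the orthogonality relation $\sum_{\bm{\omega}}(-1)^{\bm{\omega}\cdot(\var{x}\oplus\var{y})}=2^n\delta_0(\var{x}\oplus\var{y})$ (as already used in Lemma~\ref{lem:m-spectra}) and the multiplicativity of $\zeta_m^{wt(\cdot)}$ over a direct-sum (concatenation) of inputs.

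For part (a), I would substitute $g(\var{x})=f(\var{x})\oplus\var{c}\cdot\var{x}\oplus d$ into the sum; the term $(-1)^{\var{c}\cdot\var{x}}$ combines with $(-1)^{\var{x}\cdot\var{u}}$ to give $(-1)^{\var{x}\cdot(\var{u}\oplus\var{c})}$, and the constant $(-1)^d$ factors out, immediately yielding $(-1)^d\mathcal{H}^{(m)}_f(\var{u}\oplus\var{c})$. The second, closed-form assertion for a purely affine $\mathbb{L}$ is where the real work lies: taking $f=0$, the sum factors across coordinates as $\prod_{j=1}^n\bigl(\sum_{x_j\in\mathbb{F}_2}(-1)^{x_j((\var{u}\oplus\var{c})_j)}\zeta_m^{x_j}\bigr)$, where each factor equals $1$ if $(\var{u}\oplus\var{c})_j=0$ and $1-\zeta_m$ if it equals $1$. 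I would then rewrite $1-\zeta_m=1-e^{2\pi i/m}$ in the half-angle form $-2i\sin(\pi/m)e^{i\pi/m}$ and $1=2\cos(\pi/m)\cdot\tfrac12 e^{\,0}$-style normalization so as to extract the common factor $2^{n/2}(\cos(\pi/m))^n$, leaving the power $(-i\tan(\pi/m))^{wt(\var{u}\oplus\var{c})}$ together with the phase $\zeta^{\,n/2-wt(\var{u}\oplus\var{c})}$; matching exponents carefully against the stated formula is the fiddly step. (I note the paper writes a bare $\zeta$ here, presumably meaning $\zeta_m$.)

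Parts (b) and (d) are more mechanical. For (b), write $(-1)^{f(\var{x})}=(-1)^{g(\var{x})}(-1)^{h(\var{x})}$, express one factor via the inversion Lemma~\ref{lem:m-spectra} (or directly expand both transforms and use orthogonality to collapse one sum), and reorganize into a convolution; the two symmetric expressions follow by interchanging the roles of $g$ and $h$, using that $W_h$ carries the $\zeta$-free weight while $\mathcal{H}^{(m)}_g$ carries $\zeta_m^{wt}$. For (d), since $wt(\var{x},\var{y})=wt(\var{x})+wt(\var{y})$ and $(\var{x},\var{y})\cdot(\var{u},\var{v})=\var{x}\cdot\var{u}\oplus\var{y}\cdot\var{v}$, the defining double sum over $\mathbb{F}_2^{2n}$ factors cleanly into the product $\mathcal{H}^{(m)}_f(\var{u})\mathcal{H}^{(m)}_g(\var{v})$.

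The subtle case is part (c): I would substitute $g(\var{x})=f(A\var{x})$ and change variables to $\var{z}=A\var{x}$. The key point is that an orthogonal $A$ over $\mathbb{F}_2$ is a bijection preserving the inner product, so $\var{x}\cdot\var{u}=A\var{x}\cdot A\var{u}=\var{z}\cdot A\var{u}$ transforms the character correctly; the genuine obstacle is the weight factor, since $\zeta_m^{wt(\var{x})}$ does \emph{not} in general equal $\zeta_m^{wt(A\var{x})}$ for an arbitrary orthogonal matrix. I expect the resolution to rely on the fact that orthogonality over $\mathbb{F}_2$ forces $A$ to preserve Hamming weight (orthogonal $0/1$ matrices are signed-permutation-like, here permutation matrices, since $A A^{\mathsf T}=I$ over $\mathbb{F}_2$ makes each row a weight-one vector), so $wt(A\var{x})=wt(\var{x})$ and the weight factor is invariant under the substitution. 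Pinning down and stating this weight-preservation property precisely is the crux of (c), after which the change of variables finishes the argument.
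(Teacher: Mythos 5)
Your overall strategy---direct substitution into Definition~\ref{def:m-Hadamard} together with character orthogonality---is the paper's, and your treatments of (b) and (d) would go through as described. Two points need repair. The lesser one is in (a): the per-coordinate factor $\sum_{x_j\in\mathbb{F}_2}(-1)^{x_j b_j}\zeta_m^{x_j}$ equals $1+\zeta_m$ when $b_j=0$, not $1$. The paper's computation rests precisely on the pair of identities $1+\zeta_m=2\cos(\pi/m)e^{i\pi/m}$ and $1-\zeta_m=-2i\sin(\pi/m)e^{-i\pi/m}$, whose product over the coordinates produces both the factor $2^{n/2}(\cos(\pi/m))^n$ and the phase $\zeta_m^{\,n/2-wt(\var{u}\oplus\var{c})}$; taking the $b_j=0$ factors literally equal to $1$ leaves only $(1-\zeta_m)^{wt(\var{u}\oplus\var{c})}$, which is not the stated formula. (You are right that the bare $\zeta$ in the statement is $\zeta_m$.)

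The serious problem is your resolution of (c). The relation $AA^{T}=I_n$ over $\mathbb{F}_2$ forces each row of $A$ to have \emph{odd} weight, not weight one, so orthogonal matrices over $\mathbb{F}_2$ are not permutation matrices. Concretely, for $n=4$ let $A$ be the matrix with $0$ on the diagonal and $1$ elsewhere, i.e., $A=J+I$; then $A^2=J^2+2J+I=I$ over $\mathbb{F}_2$, so $A$ is orthogonal, yet $A\var{e_1}=(0,1,1,1)^{T}$ has weight $3$, so $A$ does not preserve Hamming weight. Hence the identity $wt(A^{T}\yv)=wt(\yv)$ that your argument needs is unavailable. What the proof actually requires is the congruence $wt(A^{T}\yv)\equiv wt(\yv)\pmod m$. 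This holds for every orthogonal $A$ when $m\in\{1,2\}$, because odd row weights give $A\mathbf{1}=\mathbf{1}$ and therefore $wt(A^{T}\yv)\equiv \yv\cdot(A\mathbf{1})=wt(\yv)\pmod 2$; but for $m\geq 3$ it can fail, e.g., for the matrix above one finds $wt(A\xv)\equiv 3\,wt(\xv)\pmod 4$. You correctly identified the weight factor as the crux---indeed the paper's own proof silently replaces $\zeta_m^{wt(A^{T}\yv)}$ by $\zeta_m^{wt(\yv)}$ at exactly this step without justification---but the gap cannot be closed by the permutation-matrix argument: one must either restrict $A$ (to permutation matrices, or more generally to orthogonal matrices preserving weight modulo $m$) or weaken the conclusion for $m\geq 3$.
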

\begin{proof}
The proof follows the same lines as in~\cite[Theorem 2]{2kbent}, with some tricks and tweaks.
\begin{enumerate}[label=(\alph*)]
    \item From definition, $\mathcal{H}_{g}^{(m)}({\var{u}}) = 2^{-n/2}\sum_{\var{x}\in\mathbb{F}_2^n} (-1)^{f(\var{x}) \oplus \left(\var{u} \oplus \var{c}\right)\cdot \var{x} \oplus d} \zeta_m^{wt(\xv)} = (-1)^{d}\mathcal{H}_{f}^{(m)}({\var{u}\oplus \var{c}})$. Moreover,
\begin{align*}
1+\zeta_m &= 1 + \cos(2\pi/m) + i\sin (2\pi/m) 
= 2\cos^2(\pi/m) + 2i\cos(\pi/m)sin(\pi/m) 
= 2\cos(\pi/m)e^{i\pi/m},\\
1-\zeta_m &= 1 - \cos(2\pi/m) - i\sin (2\pi/m) 
= 2\sin^2(\pi/m) - 2i\cos(\pi/m)sin(\pi/m) 
= -2i\sin(\pi/m)e^{-i\pi/m}.
    \end{align*}
For $\var{u} = u_1,\ldots u_n$ and $\var{c} = c_1,\ldots c_n$, set $b_i = u_i\oplus c_i$. Then,
\begin{align*}
\mathcal{H}_{\mathbb{L}}^{(m)}({\var{u}}) = & 2^{-n/2}(-1)^d \sum_{\xv \in \mathbb{F}_2^n}(-1)^{\xv \cdot (\var{u} \oplus \var{c})}\zeta_m^{wt(\xv)}\\
= & 2^{-n/2} (-1)^d\prod_{i=1}^n\left( 1+ \zeta_m(-1)^{b_k}\right) = 2^{-n/2} (-1)^d\prod_{b_k=0}(1+\zeta_m)\prod_{b_k=1}(1 - \zeta_m)\\
= & 2^{-n/2}(-1)^d\left( 2\cos(\pi/m)e^{i\pi/m}\right)^{n-wt(\var{u}\oplus \var{c})} \left( -2i\sin(\pi/m)e^{-i\pi/m}\right)^{wt(\var{u}\oplus \var{c})}\\
= & (-1)^d 2^{n/2} \left(\cos(\pi/m) \right)^n \left(-i\tan(\pi/m) \right)^{wt(\var{u}\oplus \var{c})} \left(e^{2\pi i/m}\right)^{n/2-wt(\var{u}\oplus \var{c})}.
\end{align*}
    \item We prove the first identity by starting with the RHS. The second identity follows in a similar manner.
    \begin{align*}
        2^{-n/2}\sum_{\var{v}\in\mathbb{F}_2^n}\mathcal{H}^{(m)}_g(\var{v})W_h(\var{u}\oplus \var{v}) = & 2^{-3n/2}\sum_{\var{v},\var{x},\var{y}\in\mathbb{F}_2^n}(-1)^{g(\xv)\oplus h(\var{y}) \oplus \var{x}\cdot \var{v}\oplus \var{y}\cdot(\var{u}\oplus \var{v})}\zeta_m^{wt(\xv)}\\
        = & 2^{-3n/2}\sum_{\var{x},\var{y}\in\mathbb{F}_2^n}(-1)^{g(\xv)\oplus h(\var{y})\oplus \var{u}\cdot \var{y}} \zeta_m^{wt(\xv)}\sum_{\var{v}\in\mathbb{F}_2^n}(-1)^{\var{v}\cdot (\var{x}\oplus \var{y})}\\
        = & 2^{-n/2}\sum_{\var{x}\in\mathbb{F}_2^n}(-1)^{g(\xv)\oplus h(\var{x})\oplus \var{u}\cdot \var{x}} \zeta_m^{wt(\xv)} = \mathcal{H}_{g\oplus h}^{(m)}({\var{u}}) = \mathcal{H}_{f}^{(m)}({\var{u}}) .
    \end{align*}
    \item From definition, $\mathcal{H}_{g}^{(m)}(\var{u}) = 2^{-n/2}\sum_{\var{x}\in\mathbb{F}_2^n}(-1)^{f(A\xv) \oplus \xv\cdot \var{u}}\zeta_m^{wt(\xv)}$. Suppose, $A\xv = \var{y}$, then $\xv = A^T\var{y}$. Thus,
    $$\mathcal{H}_{g}^{(m)}(\var{u}) = 2^{-n/2}\sum_{\var{y}\in\mathbb{F}_2^n}(-1)^{f(\var{y}) \oplus (A^T\var{y})\cdot \var{u}}\zeta_m^{wt(A^T\var{y})}= 2^{-n/2}\sum_{\var{y}\in\mathbb{F}_2^n}(-1)^{f(\var{y}) \oplus \var{y}\cdot A\var{u}}\zeta_m^{wt(\var{y})} = \mathcal{H}_{f}^{(m)}(A\var{u}).$$
    \item Starting with the RHS, we obtain
    \begin{align*}
        \mathcal{H}_{f}^{(m)}(\var{u})\mathcal{H}_{g}^{(m)}(\var{v}) = & 2^{-n}\sum_{\var{x},\var{y}\in\mathbb{F}_2^n}(-1)^{f(\xv)\oplus g(\yv) \oplus \var{x}\cdot \var{u} \oplus \var{y}\cdot \var{v}}\zeta_m^{wt(\xv)+wt(\yv)}\\
        = & 2^{-n}\sum_{\var{x},\var{y}\in\mathbb{F}_2^n}(-1)^{h(\xv,\yv) \oplus (\var{x},\var{y})\cdot(\var{u}, \var{v})}\zeta_m^{wt(\xv,\yv)} = \mathcal{H}_{h}^{(m)}(\var{u},\var{v}).
    \end{align*}
\end{enumerate}
The claims are shown.
\end{proof}

Next, we define the most generalized version of \textit{crosscorrelation} and \textit{autocorrelation}, as follows.

\begin{definition}[$m$-\textit{crosscorrelation}]
	\label{def:m-crosscorrelation}
	Given $f,g\in\mathcal{B}_n$, the $m$-\textit{crosscorrelation} of $f$ and $g$ at a point $\var{y}\in\mathbb{F}_2^n$ is described as
	$$C_{f,g}^{(m)}(\var{y}) = \sum_{\xv\in\mathbb{F}_2^n}(-1)^{f(\xv)\oplus g(\xv\oplus \var{y})}\left(\zeta_{m}^2\right)^{\xv\odot\var{y}}.$$
	Taking $f=g$ in the above formulation, we obtain the $m$-\textit{autocorrelation}, defined as 
    $$C_{f}^{(m)}(\var{y}) = \sum_{\xv\in\mathbb{F}_2^n}(-1)^{f(\xv)\oplus f(\xv\oplus \var{y})}\left(\zeta_{m}^2\right)^{\xv\odot\var{y}},$$
	where $\xv\odot\var{y}$ denotes the inner product in $\mathbb{C}\times \mathbb{C}$. 
\end{definition}
It is straightforward to observe that for $m=1$, the $m$-crosscorrelation ($m$-autocorrelation) reduces to the standard crosscorrelation (autocorrelation). Additionally, for $m=4$, where $\zeta_m=i$, the $m$-crosscorrelation and $m$-autocorrelation correspond to nega-crosscorrelation and nega-autocorrelation, respectively. The $m$-crosscorrelation of two Boolean functions is related with the product of their $m$-Hadamard transforms, as follows.
\begin{theorem}
	\label{thm:m-cross-hadamard}
	Let $f,g\in \mathcal{B}_n$. Then
	$$C_{f,g}^{(m)}(\var{z})=\zeta_{m}^{wt(\var{z})}\sum_{\var{u}\in\mathbb{F}_2^n}\mathcal{H}_{f}^{(m)}(\var{u})\overline{\mathcal{H}_{g}^{(m)}(\var{u})}(-1)^{\var{u}\cdot\var{z}}.$$
\end{theorem}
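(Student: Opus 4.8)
The plan is to start from the right-hand side and reduce it to the definition of $C_{f,g}^{(m)}(\var{z})$. First I would substitute the definitions of $\mathcal{H}_f^{(m)}(\var{u})$ and of its conjugate $\overline{\mathcal{H}_g^{(m)}(\var{u})}$, introducing a pair of summation variables $\var{x}$ and $\var{x'}$ over $\mathbb{F}_2^n$, so that the product becomes $\mathcal{H}_f^{(m)}(\var{u})\overline{\mathcal{H}_g^{(m)}(\var{u})}=2^{-n}\sum_{\var{x},\var{x'}}(-1)^{f(\var{x})\op g(\var{x'})\op(\var{x}\op\var{x'})\cdot\var{u}}\zeta_m^{wt(\var{x})}(\overline{\zeta_m})^{wt(\var{x'})}$.

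Next I would push the remaining sum over $\var{u}$ inward, collecting the character $(-1)^{\var{u}\cdot(\var{x}\op\var{x'}\op\var{z})}$ and applying the orthogonality relation $\sum_{\var{u}\in\mathbb{F}_2^n}(-1)^{\var{u}\cdot(\var{x}\op\var{x'}\op\var{z})}=2^n\delta_0(\var{x}\op\var{x'}\op\var{z})$, exactly as used in the proof of Lemma~\ref{lem:m-spectra}. This forces $\var{x'}=\var{x}\op\var{z}$, collapsing the double sum to a single sum over $\var{x}$ and leaving $\sum_{\var{x}}(-1)^{f(\var{x})\op g(\var{x}\op\var{z})}\zeta_m^{wt(\var{x})}(\overline{\zeta_m})^{wt(\var{x}\op\var{z})}$. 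Reinstating the prefactor $\zeta_m^{wt(\var{z})}$ that multiplies the whole expression then gives $\sum_{\var{x}}(-1)^{f(\var{x})\op g(\var{x}\op\var{z})}\zeta_m^{wt(\var{x})}(\overline{\zeta_m})^{wt(\var{x}\op\var{z})}\zeta_m^{wt(\var{z})}$.

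The heart of the argument, and the step I expect to require the most care, is collapsing the product of roots of unity into the single factor $(\zeta_m^2)^{\var{x}\odot\var{z}}$ that appears in Definition~\ref{def:m-crosscorrelation}. Using $\overline{\zeta_m}=\zeta_m^{-1}$, the exponent to track is $wt(\var{x})-wt(\var{x}\op\var{z})+wt(\var{z})$. I would invoke the elementary binary-weight identity $wt(\var{x}\op\var{z})=wt(\var{x})+wt(\var{z})-2(\var{x}\odot\var{z})$, which follows bitwise from $a\op b=a+b-2ab$ for $a,b\in\{0,1\}$ with $\var{x}\odot\var{z}$ the integer overlap count, so that the exponent telescopes to exactly $2(\var{x}\odot\var{z})$, i.e. $\zeta_m^{wt(\var{x})}(\overline{\zeta_m})^{wt(\var{x}\op\var{z})}\zeta_m^{wt(\var{z})}=(\zeta_m^2)^{\var{x}\odot\var{z}}$.

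Substituting this back yields $\sum_{\var{x}}(-1)^{f(\var{x})\op g(\var{x}\op\var{z})}(\zeta_m^2)^{\var{x}\odot\var{z}}$, which is precisely $C_{f,g}^{(m)}(\var{z})$ by Definition~\ref{def:m-crosscorrelation}, completing the proof. The only subtlety is ensuring that the symbol $\var{x}\odot\var{z}$ is consistently read as the integer-valued overlap (the inner product in $\mathbb{C}\times\mathbb{C}$, not reduced mod $2$), since it enters as an exponent of $\zeta_m^2$; this is the same convention already fixed in Definition~\ref{def:m-crosscorrelation}, so no ambiguity arises.
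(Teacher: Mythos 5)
Your proposal is correct and follows essentially the same route as the paper's proof: substitute the definitions of the two transforms, use orthogonality of characters over $\mathbb{F}_2^n$ to force the second summation variable to equal $\var{x}\oplus\var{z}$, and collapse the root-of-unity exponents via $wt(\var{x})+wt(\var{z})-wt(\var{x}\oplus\var{z})=2(\var{x}\odot\var{z})$. The only difference is that you spell out the bitwise weight identity explicitly, which the paper leaves implicit.
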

\begin{proof}
Let us begin with the RHS, namely,
\begin{align*}
&\zeta_{m}^{wt(\var{z})}\sum_{\var{u}\in\mathbb{F}_2^n}\mathcal{H}_{f}^{(m)}(\var{u})\overline{\mathcal{H}_{g}^{(m)}(\var{u})}(-1)^{\var{u}\cdot\var{z}}\\
= &2^{-n}\zeta_{m}^{wt(\var{z})}\sum_{\var{u}\in\mathbb{F}_2^n} \sum_{\var{x}\in\mathbb{F}_2^n}(-1)^{f(\var{x})\oplus \var{x}\cdot \var{u}} \zeta_m^{wt(\var{x})} \sum_{\var{y}\in\mathbb{F}_2^n}(-1)^{g(\var{y})\oplus \var{y}\cdot \var{u}} \zeta_m^{-wt(\var{y})}(-1)^{\var{u}\cdot\var{z}}\\
= &2^{-n}\sum_{\var{x},\var{y}\in\mathbb{F}_2^n}(-1)^{f(\var{x})\oplus g(\var{y})} \zeta_m^{wt(\var{x})+ wt(\var{z}) - wt(\var{y})}\sum_{\var{u}\in\mathbb{F}_2^n} (-1)^{\var{x}\cdot\var{u} \oplus \var{y}\cdot\var{u} \oplus \var{u}\cdot\var{z}}.
\end{align*}
Now taking $\var{y}=\var{x}\oplus \var{z}$ in the above expression:
\begin{align*}
 &2^{-n}\sum_{\var{x}\in\mathbb{F}_2^n}(-1)^{f(\var{x})\oplus g(\var{x}\oplus \var{z})} \zeta_m^{wt(\var{x})+ wt(\var{z}) - wt(\var{x}\oplus \var{z})} \sum_{\var{u}\in\mathbb{F}_2^n}(-1)^{\var{u}\cdot \var{y} \oplus \var{u}\cdot\left(\var{x}\oplus \var{z}\right)}\\
= &\sum_{\var{x}\in\mathbb{F}_2^n}(-1)^{f(\var{x})\oplus g(\var{x}\oplus \var{z})} \left(\zeta^2_m\right)^{\var{x}\odot \var{z}} = C_{f,g}^{(m)}(\var{z}).
	\end{align*}
\end{proof}

\begin{corollary}
	\label{cor:m-cross-hadamard}
	Let $f\in\mathcal{B}_n$. Then  $C_{f}^{(m)}(\var{z})=\zeta_{m}^{wt(\var{z})}\sum_{\var{u}\in\mathbb{F}_2^n}|\mathcal{H}_{f}^{(m)}(\var{u})|^2(-1)^{\var{u}\cdot\var{z}}.$
\end{corollary}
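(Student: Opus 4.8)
The plan is to obtain this identity as an immediate specialization of Theorem~\ref{thm:m-cross-hadamard}, so essentially no new work beyond a substitution is required. First I would set $g=f$ in the statement of that theorem. By Definition~\ref{def:m-crosscorrelation}, the $m$-crosscorrelation $C_{f,f}^{(m)}(\var{z})$ coincides exactly with the $m$-autocorrelation $C_{f}^{(m)}(\var{z})$: replacing $g$ by $f$ turns the term $(-1)^{f(\xv)\oplus g(\xv\oplus \var{z})}$ into $(-1)^{f(\xv)\oplus f(\xv\oplus \var{z})}$, while leaving the weight factor $\left(\zeta_m^2\right)^{\xv\odot\var{z}}$ untouched.

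Next I would simplify the right-hand side of the theorem under the same specialization. With $g=f$, the summand $\mathcal{H}_{f}^{(m)}(\var{u})\overline{\mathcal{H}_{g}^{(m)}(\var{u})}$ becomes $\mathcal{H}_{f}^{(m)}(\var{u})\overline{\mathcal{H}_{f}^{(m)}(\var{u})}$, which is precisely $\left|\mathcal{H}_{f}^{(m)}(\var{u})\right|^2$ by the definition of the modulus of a complex number. Substituting these two observations into the conclusion of Theorem~\ref{thm:m-cross-hadamard} yields the claimed formula directly, with no further manipulation needed.

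There is essentially no obstacle in this argument, as it is a pure specialization rather than a fresh computation; the only care needed is to confirm that the phase factor in Definition~\ref{def:m-crosscorrelation} is independent of the choice between $f$ and $g$, which it plainly is. I would finally remark that this corollary also recovers the $m$-Parseval identity stated earlier: evaluating at $\var{z}=0^n$ gives $C_{f}^{(m)}(0^n)=\sum_{\var{u}\in\mathbb{F}_2^n}\left|\mathcal{H}_{f}^{(m)}(\var{u})\right|^2$, since $\zeta_m^{wt(0^n)}=1$ and $(-1)^{\var{u}\cdot 0^n}=1$, while the left-hand side equals $2^n$ directly from Definition~\ref{def:m-crosscorrelation}, because every exponent vanishes and each summand is $1$.
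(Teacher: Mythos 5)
Your proof is correct and matches the paper's intent exactly: the paper states this corollary without a separate proof, treating it as the immediate specialization $g=f$ of Theorem~\ref{thm:m-cross-hadamard}, with $\mathcal{H}_{f}^{(m)}(\var{u})\overline{\mathcal{H}_{f}^{(m)}(\var{u})}=\left|\mathcal{H}_{f}^{(m)}(\var{u})\right|^2$. Your closing remark about recovering the $m$-Parseval identity at $\var{z}=0^n$ is also precisely the observation the paper makes immediately after the corollary.
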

Assuming $\var{z}=0^n$ in the above corollary, we obtain $\sum_{\var{u}\in\mathbb{F}_2^n}|\mathcal{H}_{f}^{(m)}(\var{u})|^2 = C_{f}^{(m)}(0^n) = 2^n$, establishing the $m$-Parseval's identity. Moreover, the $m$-bent criterion of a Boolean function is related to its $m$-autocorrelation as follows.
\begin{theorem}
	\label{thm:autobent}
	A Boolean function $f\in\mathcal{B}_n$ is $m$-bent if and only if $C_{f}^{(m)}(\var{z})= 0 $ for all $\var{z}\in\mathbb{F}_2^n\setminus\{0^n\}$, and $C_{f}^{(m)}(\var{z})=  2^n$, for $\var{z}=0^n$.
\end{theorem}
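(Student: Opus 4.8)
The plan is to deduce both implications from Corollary~\ref{cor:m-cross-hadamard}, which already expresses the $m$-autocorrelation as a weighted Walsh-Hadamard transform of the real, nonnegative function $\var{u}\mapsto |\mathcal{H}_f^{(m)}(\var{u})|^2$. Writing $g(\var{u}) := |\mathcal{H}_f^{(m)}(\var{u})|^2$, the corollary reads
$$C_f^{(m)}(\var{z}) = \zeta_m^{wt(\var{z})}\sum_{\var{u}\in\mathbb{F}_2^n} g(\var{u})(-1)^{\var{u}\cdot \var{z}}.$$
The crucial observation that drives everything is that the prefactor $\zeta_m^{wt(\var{z})}$ is a root of unity, hence a nonzero scalar; consequently $C_f^{(m)}(\var{z})$ vanishes precisely when the inner Walsh sum does. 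This reduces both directions to statements about that Walsh sum, and makes the $m$-dependence of the theorem essentially cosmetic.

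For the forward direction, I would assume $f$ is $m$-bent, i.e.\ $g\equiv 1$. For $\var{z}\neq 0^n$ the orthogonality relation $\sum_{\var{u}\in\mathbb{F}_2^n}(-1)^{\var{u}\cdot\var{z}} = 2^n\delta_0(\var{z}) = 0$ (already used in the proof of Lemma~\ref{lem:m-spectra}) forces $C_f^{(m)}(\var{z}) = 0$. For $\var{z} = 0^n$ we have $wt(0^n) = 0$, so $\zeta_m^{wt(0^n)} = 1$ and the sum equals $\sum_{\var{u}}1 = 2^n$, giving $C_f^{(m)}(0^n) = 2^n$.

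For the converse, I would assume $C_f^{(m)}(\var{z}) = 0$ for all $\var{z}\neq 0^n$ and $C_f^{(m)}(0^n) = 2^n$. Dividing through by the nonzero factor $\zeta_m^{wt(\var{z})}$ translates the hypotheses into the statement that the Walsh-Hadamard transform $\widehat{g}(\var{z}) := \sum_{\var{u}}g(\var{u})(-1)^{\var{u}\cdot\var{z}}$ satisfies $\widehat{g}(\var{z}) = 0$ for $\var{z}\neq 0^n$ and $\widehat{g}(0^n) = 2^n$. Applying Walsh inversion (again via orthogonality) then recovers, for every $\var{u}$,
$$g(\var{u}) = 2^{-n}\sum_{\var{z}\in\mathbb{F}_2^n}\widehat{g}(\var{z})(-1)^{\var{u}\cdot\var{z}} = 2^{-n}\,\widehat{g}(0^n) = 1,$$
so $|\mathcal{H}_f^{(m)}(\var{u})| = 1$ for all $\var{u}$, which is exactly $m$-bentness.

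I do not expect a genuine obstacle: once Corollary~\ref{cor:m-cross-hadamard} is in hand the whole argument is Walsh-Hadamard inversion applied to $g = |\mathcal{H}_f^{(m)}|^2$. The only point deserving care is the one flagged above, namely verifying that the $\var{z}$-dependent prefactor $\zeta_m^{wt(\var{z})}$ is nonzero and cancels cleanly on dividing through, so that it affects neither the vanishing conclusions for $\var{z}\neq 0^n$ nor the normalization at $\var{z} = 0^n$.
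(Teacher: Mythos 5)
Your proposal is correct and follows essentially the same route as the paper: both directions rest on Corollary~\ref{cor:m-cross-hadamard} together with the orthogonality relation $\sum_{\var{u}}(-1)^{\var{u}\cdot\var{z}}=2^n\delta_0(\var{z})$, with the converse obtained by Walsh--Hadamard inversion applied to $|\mathcal{H}_f^{(m)}|^2$. The only difference is that you spell out the converse (including the harmless cancellation of the nonzero prefactor $\zeta_m^{wt(\var{z})}$) where the paper merely states that it follows similarly.
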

\begin{proof}
	If $f\in\mathcal{B}_n$ is $m$-bent, i.e., $|\mathcal{H}_{f}^{(m)}(\var{u})| = 1$ for all $\var{u}\in\mathbb{F}_2^n$, then $C_{f}^{(m)}(\var{z})=\zeta_{m}^{wt(\var{z})}\sum_{\var{u}\in\mathbb{F}_2^n}(-1)^{\var{u}\cdot\var{z}}$. Now, $\sum_{\var{u}\in\mathbb{F}_2^n}(-1)^{\var{u}\cdot\var{z}} = 0$ for all $\var{z}\in\mathbb{F}_2^n\setminus\{0^n\}$, and $2^n$, if $\var{z}=0^n$. Similarly, the converse follows.
\end{proof}

Next, we present the most generalized version of a Deutsch-Jozsa-like quantum algorithm, building upon the existing variations as shown in~\cite{dj,exdj}.

\subsection{Generalized Deutsch-Jozsa algorithm} 
Suppose, we are given the oracle access to an unknown Boolean function $f\in \mathcal{B}_n$. Similar to the standard Deutsch-Jozsa algorithm, we start with an $(n+1)$-qubit quantum circuit initialized to $\ket{0^n}\ket{-}$. We then apply $n$-many Hadamard gates to the first $n$ qubits, creating an equal superposition of all possible $n$-bit strings, $2^{-n/2}\sum_{\xv\in\mathbb{F}_2^n}\ket{\var{x}}$, which is fed into the quantum oracle $U_f$, thereby achieving quantum parallelism.

After the oracle query, instead of only applying the Hadamard gates to the first $n$ qubits, as in the standard Deutsch-Jozsa algorithm, we apply a sequence of omega-gates determined by a sequence of natural numbers $\var{d}=(d_1,\ldots ,d_n)$ where $d_i\in\mathbb{N}$ dictates the application of $\Omega_{d_i}$ gate at the $i$-th qubit. Finally, we measure the first $n$ qubits in the $\{\ket{0},\ket{1}\}$ basis, and the final state before measurement is given by
$$2^{-n}\sum_{\substack{\xv\in\mathbb{F}_2^n, \,d_i\in\mathbb{N} }}\sum_{\var{y}\in\mathbb{F}_2^n}(-1)^{f(\xv)\oplus \xv\cdot \var{y}}\zeta_{d_i}^{wt(\xv)}\ket{\var{y}}.$$
Therefore, the probability of observing any particular state $\var{y}\in\mathbb{F}_2^n$ becomes 
$$\pr{(\var{y})} = 2^{-2n}\left|\sum_{\substack{\xv\in\mathbb{F}_2^n, \, d_i\in\mathbb{N}}}(-1)^{f(\xv)\oplus \xv\cdot \var{y}}\zeta_{d_i}^{wt(\xv)}\right|^2.$$
A schematic diagram of the generalized Deutsch-Jozsa algorithm is presented in Fig.~\textup{\ref{fig:ndj}}, where $U_i=\Omega_{d_i}$ and $d_i\in\mathbb{N}$. If $d_i=m$ for all $1\leq i \leq n$ where $m$ is a fixed natural number, then we denote the generalized Deutsch-Jozsa algorithm by $\text{DJ}_{m}$.
\begin{figure}[ht]
	\centering
	\includegraphics[scale=1.1]{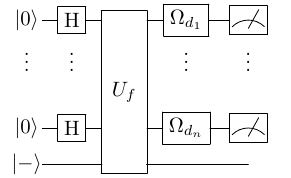}
	\caption{Quantum circuit for the most generalized Deutsch-Jozsa algorithm.}
	\label{fig:ndj}
\end{figure}

\begin{remark}
Depending on the exact sequence of natural numbers, $\var{d}$, we have the following observations.
\begin{enumerate}
    \item If $d_i=1$ for all $1\leq i\leq n$, i.e., $U_i=\Omega_1=\mathrm{H}$ for all $1\leq i\leq n$, then the algorithm $\text{DJ}_{1}$ corresponds to the standard Deutsch-Jozsa algorithm~\textup{\cite{dj}}, where the final state before measurement becomes $2^{-n}\sum_{\xv,\var{y}\in\mathbb{F}_2^n}(-1)^{f(\xv)\oplus \xv\cdot \var{y}}\ket{\var{y}} = 2^{-n/2}\sum_{\var{y}\in\mathbb{F}_2^n}W_f(\var{y})\ket{\var{y}}$, which is the normalized Walsh-Hadamard transform of $f$. As a result, the probability of observing any particular state $\var{y}\in\mathbb{F}_2^n$ becomes $2^{-n}\left|W_f(\var{y}) \right|^2$.
    \item When $d_i\in\{1,4\}$, i.e., $U_i\in\{\mathrm{H},\mathrm{N}\}$ for all $1\leq i\leq n$, then depending upon the choice of $\var{d}\in \{1,4\}^n$, a combination of the Hadamard and the nega-Hadamard gates are applied, resulting in the extended Deutsch-Jozsa algorithm as introduced in~\textup{\cite{exdj}}. Recall that in the original extended Deutsch-Jozsa algorithm, the choice between Hadamard and nega-Hadamard gates was determined by a binary sequence $\var{c}=(c_j)_{1\leq j\leq n} \in\mathbb{F}_2^n$, where $c_j=0$ corresponds to the Hadamard gate and $c_j=1$ corresponds to the nega-Hadamard gate. Similarly, in this context, $d_j=1$ implies $c_j=0$, while $d_j=4$ corresponds to $c_i=1$. Additionally, fixing $d_j=4$ for all $1\leq j\leq n$ (algorithm $\mathrm{DJ}_{4}$) the final state before measurement becomes $2^{-n}\sum_{\xv,\var{y}\in\mathbb{F}_2^n}(-1)^{f(\xv)\oplus \xv\cdot \var{y}}i^{wt(\xv)}\ket{\var{y}} = 2^{-n/2}\sum_{\var{y}\in\mathbb{F}_2^n}N_f(\var{y})\ket{\var{y}}$, which is the normalized nega-Hadamard transform of $f$. Consequently, the probability of observing any particular state $\var{y}\in\mathbb{F}_2^n$ is given by $2^{-n}\left|N_f(\var{y}) \right|^2$.
    \item Similarly, if $d_j=2^k$ for all $1\leq j\leq n$ and some fixed $k\in\mathbb{N}$ (corresponds to $\mathrm{DJ}_{2^k}$), then the final state before measurement becomes the normalized $2^k$-Hadamard transform, and the corresponding probability of observing any particular state $\var{y}\in\mathbb{F}_2^n$ becomes $2^{-n}\left|\mathcal{H}_f^{(2^k)}\right|^2$.
    \item Finally, in algorithm $\mathrm{DJ}_{m}$, where $d_j=m$ for all $1\leq j \leq n$ and some fixed $m\in\mathbb{N}$, the final pre-measurement state becomes the normalized $m$-Hadamard transform, and the probability of observing any particular state $\var{y}\in\mathbb{F}_2^n$ is given by $2^{-n}\left|\mathcal{H}_f^{(m)}\right|^2$.
	\end{enumerate}
\end{remark}
\subsection{$m$-Forrelation}
We now present the most generalized form of Forrelation, called $m$-Forrelation, which encompasses both the standard ($3$-fold) Forrelation~\cite{forr} and the nega-Forrelation~\cite{dcc} as specific instances.
\begin{definition}[$m$-Forrelation]
	\label{def:m-forr}
Given oracle access to $f_1,f_2,f_3\in\mathcal{B}_n$, the ($3$-fold) $m$-Forrelation measures a correlation between the Boolean function $f_1$, the $m$-Hadamard transform of $f_2$ and the conjugate $m$-Hadamard transform of $f_3$, precisely defined as 
$$\Phi^{(m)}_{f_1,f_2,f_3}=\displaystyle\frac{1}{2^{n}}\sum_{\var{x_1}\in\mathbb{F}_2^n}{(-1)^{f_1(\var{x_1})}\mathcal{H}^{(m)}_{f_2}(\var{x_1})\overline{\mathcal{H}^{(m)}_{f_3}(\var{x_1})}},$$
	which can be further decomposed to
	\begin{align*}
		\displaystyle\frac{1}{2^{2n}}\sum\limits_{\var{x_1}, \var{x_2} , \var{x_3}\in \mathbb{F}_2^n}
		(-1)^{f_1(\var{x_1})} \left((-1)^{f_2(\var{x_2})\oplus \var{x_1}\cdot \var{x_2}} \zeta_m^{wt(\var{x_2})}\right) \left((-1)^{f_3(\var{x_3})\oplus \var{x_1}\cdot \var{x_3}}(\overline{\zeta_m})^{wt(\var{x_3})}\right).
	\end{align*}
\end{definition}

\begin{remark}
	From the definition of ($3$-fold) $m$-Forrelation, we have the following remarks:
	\begin{enumerate}
		\item The ($3$-fold) $m$-Forrelation, $\Phi^{(m)}_{f_1,f_2,f_3}$ is complex-valued and not symmetric, meaning its values depend on the order of the Boolean functions.
		\item When $f_2=f_3$, the product of the $m$-Hadamard transform and its conjugate equals the complex square of the $m$-Hadamard transform of $f$. Consequently, $\Phi^{(m)}_{f_1,f_2,f_2}$ is always a real number.
		\item For $m=1$, $\Phi^{(m)}_{f_1,f_2,f_3} = \Phi_{f_1,f_2,f_3}$, the standard ($3$-fold) Forrelation as provided in~\cite{forr}. For $m=4$, $\Phi^{(m)}_{f_1,f_2,f_3} = \eta_{f_1,f_2,f_3}$, the $3$-fold nega-Forrelation as introduced in~\textup{\cite{dcc}}.
	\end{enumerate}
\end{remark}

Additionally, note that, similar to the Forrelation, the $m$-Forrelation formulation can be extended to accommodate $k$ many Boolean functions, $f_1,\ldots,f_k\in\mathcal{B}_n$, referred to as $k$-fold $m$-Forrelation. However, in this paper, we focus mainly on the $3$-fold variation and simply use the term $m$-Forrelation to refer to the $3$-fold $m$-Forrelation. Following the approach in~\cite{forr,dcc}, we present two quantum algorithms for estimating the $m$-Forrelation values—one utilizing three sequential queries and another using two parallel queries.

We begin with the $3$-query quantum algorithm (see Fig.~\ref{fig:mforr-q3f3}). Given oracle access to the Boolean functions $f_1, f_2,f_3\in\mathcal{B}_n$, we begin with the state $\ket{0^n}\ket{-}$ and traverse through the following sequence of steps,
$$\mathrm{H}^{\otimes n}\rightarrow  U_{f_2}\rightarrow \Omega_m^{\otimes n}\rightarrow U_{f_1}\rightarrow \mathrm{H}^{\otimes n}\rightarrow  U_{f_3} \rightarrow \overline{\Omega_m}^{\otimes n}.$$
%where all the $n$-qubit gates $\left(H^{\otimes n},\Omega_m^{\otimes n},\overline{\Omega}_m^{\otimes n}\right)$ are applied to the $n$ query-qubits ($q_1,\ldots ,q_n$ in Fig.~\ref{fig:q3f3T}) and the oracles $\left(U_{f_2},U_{f_1},U_{f_3} \right)$ are applied to all the $n+1$ qubits ($q_1,\ldots,q_n,q_{n+1}$ in Fig.~\ref{fig:q3f3T}).
Ignoring the last qubit, the pre-measurement amplitude corresponding to the state $\ket{0^n}$ becomes
$$\frac{1}{2^{2n}}\displaystyle\sum_{\var{x_1},\var{x_2},\var{x_3}\in\mathbb{F}_2^n}(-1)^{f_2(\var{x_2})}\zeta_m^{wt(\var{x_2})}(-1)^{\var{x_1}\cdot \var{x_2}}(-1)^{f_1(\var{x_1})}(-1)^{\var{x_1}\cdot \var{x_3}}(\overline{\zeta_m})^{wt(\var{x_3})}(-1)^{f_3(\var{x_3})},$$
which is equal to $\Phi^{(m)}_{f_1,f_2,f_3}$.
Since, $\Phi^{(m)}_{f_1,f_2,f_3}$ is a complex number, the probability of observing the all-zero state upon measurement is given by the complex modulus square, $\left|\Phi^{(m)}_{f_1,f_2,f_3}\right|^2$. Let us denote the $3$-query $m$-Forrelation algorithm by $A^{(m)3,3}_n$. Fig.~\ref{fig:mforr-q3f3} provides a schematic diagram of the quantum circuit for $A_n^{(m)3,3}(f_1,f_2,f_3)$.
\begin{figure}[ht]
	\centering
	\includegraphics[scale=1.2]{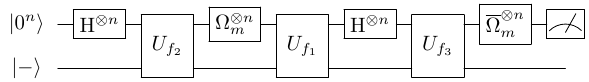}
	\caption{The $3$-query quantum circuit for estimating $m$-Forrelation, $\Phi^{(m)}_{f_1,f_2,f_3}$.}
	\label{fig:mforr-q3f3}
\end{figure}

Next, we present the $2$-query quantum algorithm for estimating the $m$-Forrelation (Fig.~\ref{fig:mforr-q2f3}). Given oracle access to $f_1, f_2,f_3\in\mathcal{B}_n$, we begin with an $(n+2)$-qubit state, $\ket{+}\ket{0^n}\ket{-}$, where the first qubit is termed as the `driving qubit' and the next $n$ many as the query-qubits. In the beginning, $n$ many Hadamard gates are applied to the $n$-query qubits, and distribute the state as follows:
$$\textstyle\ket{+}\ket{0^n}\ket{-} \xrightarrow{\mathrm{H}^{\otimes n}} \frac{1}{\sqrt{2^{n+1}}} \left(\sum_{\var{x_2}\in \mathbb{F}_2^n}\ket{0}\ket{\var{x_2}} + \sum_{\var{x_3}\in\mathbb{F}_2^n}\ket{1}\ket{\var{x_3}}\right) \ket{-}.$$
Then controlled on the driving qubit being $\ket{0}$, we sequentially apply $U_{f_2}\rightarrow \Omega_m^{\otimes n}\rightarrow U_{f_1} \rightarrow \mathrm{H}^{\otimes n}$ and obtain
$$\frac{\ket{0}}{\sqrt{2^{3n+1}}} \displaystyle \sum_{\var{x_1},\var{x_2},\var{x_3} \in\mathbb{F}_2^n} (-1)^{f_2(\var{x_2})}\zeta_m^{wt\left(\var{x_2}\right)}(-1)^{\var{x_1}\cdot \var{x_2}}(-1)^{f_1(\var{x_1})}(-1)^{\var{x_1}\cdot \var{x_3}}\ket{\var{x_3}}\ket{-}.$$
Similarly, controlled on the driving qubit being $\ket{1}$, we sequentially apply: $\mathrm{S}_m^{\otimes n}\rightarrow U_{f_3}$ and obtain
$$\frac{\ket{1}}{\sqrt{2^{n+1}}} \displaystyle \sum_{\var{x_3}\in \mathbb{F}_2^n}(-1)^{f_3(\var{x_3})}\zeta_m^{wt(\var{x_3})}\ket{\var{x_3}}\ket{-}.$$
Combining both the scenario, we obtain $\textstyle\ket{\psi}=\sum_{\var{x_3}\in\mathbb{F}_2^n}\left(\alpha_{\var{x_3}}\ket{0} + \beta_{\var{x_3}}\ket{1}\right)\ket{\var{x_3}} \ket{-}$, where 
\begin{align*} 
&\alpha_{\var{x_3}}=\left(\frac{1}{\sqrt{2^{3n+1}}} \sum_{\var{x_1},\var{x_2}\in \mathbb{F}_2^n} (-1)^{f_2(\var{x_2})} \zeta_m^{wt\left(\var{x_2}\right)}(-1)^{\var{x_1}\cdot \var{x_2}} (-1)^{f_1(\var{x_1})}(-1)^{\var{x_1}\cdot \var{x_3}} \right) \text{ and } \\
& \displaystyle\beta_{\var{x_3}} = \frac{1}{\sqrt{2^{n+1}}} (-1)^{f_3(\var{x_3})} \zeta_m^{wt\left(\var{x_3}\right)}.
\end{align*}
Next, we apply a Hadamard gate to the `driving qubit', and obtain
$$\frac{1}{\sqrt{2}} \left(\displaystyle \sum_{\var{x_3}\in \mathbb{F}_2^n}\left(\alpha_{\var{x_3}} + \beta_{\var{x_3}}\right)\ket{0} + \displaystyle\sum_{\var{x_3}\in \mathbb{F}_2^n} \left(\alpha_{\var{x_3}} - \beta_{\var{x_3}}\right)\ket{1}\right)\ket{\var{x_3}}\ket{-}.$$
Finally, we measure the `driving qubit' in the computational basis. The probability of observing $\ket{0}$ is given by
$$
\frac{1}{2}\displaystyle\sum_{\var{x_3}\in\mathbb{F}_2^n} \left|\alpha_{\var{x_3}} + \beta_{\var{x_3}} \right|^2 = \frac{1}{2}\displaystyle\left[\sum_{\var{x_3}\in\mathbb{F}_2^n} \left(\left|\alpha_{\var{x_3}} \right|^2 + \left|\beta_{\var{x_3}} \right|^2\right) + 2 \Re{\left(\alpha_{\var{x_3}}\overline{\beta}_{\var{x_3}}\right)}\right],
$$
where $\Re{\left(z\right)}$ denotes the real part of the complex number $z$. Note that the expression
$\sum_{\var{x_3}\in\mathbb{F}_2^n}\left|\alpha_{\var{x_3}}\right|^2 + \left|\beta_{\var{x_3}}\right|^2$ represents the sum of the squared amplitudes of the quantum state $\ket{\psi}$, which is equal to $1$. Moreover,
\allowdisplaybreaks
\begin{align*}
\sum_{\var{x_3}\in\mathbb{F}_2^n} 2\Re{\left(\alpha_{\var{x_3}}\overline{\beta}_{\var{x_3}}\right)} = & 2\Re{\left(\sum_{\var{x_3}\in\mathbb{F}_2^n} \alpha_{\var{x_3}}\overline{\beta}_{\var{x_3}}\right)}\\
= & \Re{\left( \frac{1}{2^{2n}}\sum_{\var{x_1,x_2,x_3}\in\mathbb{F}_2^n} (-1)^{f_2(\var{x_2})} \zeta_m^{wt(\var{x_2})}(-1)^{\var{x_1}\cdot \var{x_2}} (-1)^{f_1(\var{x_1})} (-1)^{\var{x_1}\cdot \var{x_3}} (\overline{\zeta_m})^{wt(\var{x_3})}(-1)^{f_3(\var{x_3})} \right)}\\
= & \Re{\left( \Phi^{(m)}_{f_1,f_2,f_3} \right) }.
\end{align*}
Therefore, the probability of observing $\ket{0}$ upon measuring the driving qubit is given by $\frac{1}{2}\left(1+\Re{\left( \Phi^{(m)}_{f_1,f_2,f_3} \right)}\right)$. Let us denote the $2$-query $m$-Forrelation algorithm as $A_n^{(m)2,3}$. In $A_n^{(m)2,3}$, since $U_{f_3}$ is applied in parallel with $U_{f_2}$ and $U_{f_1}$, the effective query complexity remains $2$, and hence called the $2$-query algorithm.
\begin{remark}
Note that, in the 3-query $m$-fold $m$-Forrelation circuit (Fig.~\textup{\ref{fig:mforr-q3f3}}), the second and fourth layers employ the $\Omega_{m}$ and $\overline{\Omega}_m$ gates, respectively, in place of the nega-Hadamard and conjugate nega-Hadamard gates used in nega-Forrelation, or the Hadamard layers used in standard Forrelation algorithms.

Similarly, in the 2-query 3-fold $m$-Forrelation circuit (Fig.~\textup{\ref{fig:mforr-q2f3}}), a series of $\Omega_{m}$ gates is applied between the oracles for $f_1$ and $f_2$, replacing the nega-Hadamard gates found in nega-Forrelation. These modifications are essential to ensure that the measurement probabilities reflect the $m$-Forrelation spectra.
\end{remark}
Fig.~\ref{fig:mforr-q2f3} provides a schematic diagram of the quantum circuit for $A_n^{(m)2,3}(f_1,f_2,f_3)$.
\begin{figure}[ht]
	\centering
	\includegraphics[scale=1.2]{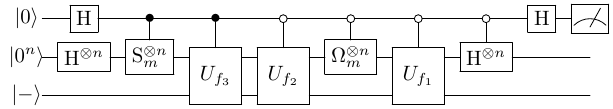}
	\caption{The $2$-query quantum circuit for estimating $m$-Forrelation, $\Phi^{(m)}_{f_1,f_2,f_3}$.}
	\label{fig:mforr-q2f3}
\end{figure}

\begin{remark}
	Since the $m$-Forrelation value, $\Phi^{(m)}_{f_1,f_2,f_3}$ is a complex number for $m>2$, using the $2$-query algorithm, we can estimate only the real part of $\Phi^{(m)}_{f_1,f_2,f_3}$ and not the complete $m$-Forrelation value. Furthermore, since $\Phi^{(m)}_{f_1,f_2,f_3}$ is real when $f_2=f_3$, the probability of observing $\ket{0}$ upon measuring the driving qubit after executing $A_n^{(m)2,3}(f_1,f_2,f_2)$ is given by $\frac{1}{2}\left(1 + \Phi^{(m)}_{f_1,f_2,f_2} \right)$. Consequently, the probability of observing $\ket{1}$ is $\frac{1}{2}\left(1 - \Phi^{(m)}_{f_1,f_2,f_2} \right)$.
\end{remark}

In the next section, we present different strategies for sampling the $m$-Hadamard transform and $m$-crosscorrelation (consequently $m$-autocorrelation) values using the $m$-Forrelation algorithms, $A_n^{(m)3,3}$ and $A_n^{(m)2,3}$.

\section{Sampling of generalized spectra using $m$-Forrelation}
\label{sec:cont2}
Given $f\in\mathcal{B}_n$, and a set of points $S\subseteq \mathbb{F}_2^n$, the goal is to estimate the $m$-Hadamard transform values of $f$ at all the points in $S$. Recall that using algorithm $\text{DJ}_{m}$, one can sample the $m$-Hadamard transform of $f$ at $S\subseteq \mathbb{F}_2^n$ with probability 
$\frac{1}{2^n}\sum_{\xv\in S}\left|\mathcal{H}^{(m)}_{f}(\var{x})\right|^2=: p$ (say),
where $0\leq p\leq 1$. We now present the strategies for sampling the $m$-Hadamard transforms of $f$ using the $m$-Forrelation algorithms, $A_n^{(m)3,3}$ and $A_n^{(m)2,3}$.

Suppose, $g\in\mathcal{B}_n$ such that $g(\var{x})=1$ for all $\var{x}\in S$,
and $g(\var{x})=0$ otherwise. From the definition of $\Phi^{(m)}_{f_1,f_2,f_3}$, with $f_2=f_3=f$ and $f_1=g$, we have
$$\Phi^{(m)}_{g,f,f}=\frac{1}{2^{n}}\sum_{\var{x}\in\mathbb{F}_2^n}{(-1)^{g(\var{x})}\left|\mathcal{H}^{(m)}_{f}(\var{x})\right|^2}=\frac{1}{2^{n}}\left(\sum_{\var{x}\not\in S}{\left|\mathcal{H}^{(m)}_{f}(\var{x})\right|^2}-\sum_{\var{x}\in S}{\left|\mathcal{H}^{(m)}_{f}(\var{x})\right|^2}\right).$$
Using the $m$-Parseval's identity, $\sum_{\var{x}\in\mathbb{F}_2^n}{\left|\mathcal{H}^{(m)}_{f}(\var{x})\right|^2}=2^n$ we obtain,
\begin{align*}
	\Phi^{(m)}_{g,f,f}=& \frac{1}{2^n}\left(2^n -\sum_{\var{x}\in S}{\left|\mathcal{H}^{(m)}_{f}(\var{x})\right|^2} - \sum_{\var{x}\in S}{\left|\mathcal{H}^{(m)}_{f}(\var{x})\right|^2}\right)
	= 1-\frac{2}{2^n}\left(\sum_{\var{x}\in S}{\left|\mathcal{H}^{(m)}_{f}(\var{x})\right|^2} \right)=1-2p.
\end{align*}
This implies, $p=\frac{1}{2}\left(1 - \Phi^{(m)}_{g,f,f} \right)$,
which is same as the probability of observing $\ket{1}$ upon measuring the `driving qubit' from running the algorithm $A_n^{(m)2,3}(g,f,f)$.
\begin{proposition}
	\label{prop:sample}
	Given $f,g\in\mathcal{B}_n$ and a set of points $S\subseteq \mathbb{F}_2^n$ such that $g(\xv)=1$ for all $\xv\in S$ and $g(\xv)=0$ otherwise, the probability of observing $\ket{1}$ upon measuring the driving qubit, from executing $A_n^{(m)2,3}(g,f,f)$ is given by $p$.
\end{proposition}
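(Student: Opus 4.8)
The plan is to reduce the statement to the $m$-Parseval identity together with the already-established behavior of the two-query circuit $A_n^{(m)2,3}$. First I would instantiate the $m$-Forrelation with $f_1 = g$ and $f_2 = f_3 = f$, and compute $\Phi^{(m)}_{g,f,f}$ directly from Definition~\ref{def:m-forr}. Because $f_2 = f_3$, the product $\mathcal{H}^{(m)}_{f}(\var{x})\overline{\mathcal{H}^{(m)}_{f}(\var{x})} = |\mathcal{H}^{(m)}_{f}(\var{x})|^2$ is real and nonnegative, so $\Phi^{(m)}_{g,f,f}$ is a real number (as recorded in the remark following Definition~\ref{def:m-forr}).

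Next I would split the sum over $\var{x}\in\mathbb{F}_2^n$ according to the definition of $g$: on $S$ the factor $(-1)^{g(\var{x})}$ equals $-1$, while off $S$ it equals $+1$. Replacing the off-$S$ contribution by means of the $m$-Parseval identity $\sum_{\var{x}\in\mathbb{F}_2^n}|\mathcal{H}^{(m)}_{f}(\var{x})|^2 = 2^n$ (Corollary~\ref{cor:m-cross-hadamard} specialized to $\var{z}=0^n$) yields $\Phi^{(m)}_{g,f,f} = 1 - 2p$, where $p = 2^{-n}\sum_{\var{x}\in S}|\mathcal{H}^{(m)}_{f}(\var{x})|^2$, exactly as displayed just before the statement.

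Finally I would invoke the analysis of the two-query algorithm: running $A_n^{(m)2,3}(g,f,f)$ measures $\ket{0}$ on the driving qubit with probability $\frac{1}{2}\left(1 + \Re\left(\Phi^{(m)}_{g,f,f}\right)\right)$, hence $\ket{1}$ with the complementary probability $\frac{1}{2}\left(1 - \Re\left(\Phi^{(m)}_{g,f,f}\right)\right)$. Since $\Phi^{(m)}_{g,f,f}$ is real, this equals $\frac{1}{2}\left(1 - \Phi^{(m)}_{g,f,f}\right) = \frac{1}{2}\left(1 - (1-2p)\right) = p$, establishing the claim.

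There is no substantive obstacle here: every ingredient — the explicit form of $\Phi^{(m)}_{g,f,f}$, the $m$-Parseval identity, and the measurement probabilities of $A_n^{(m)2,3}$ — is already in place. The only point requiring a moment of care is confirming that the two-query circuit genuinely outputs probability $\frac{1}{2}\left(1 - \Phi^{(m)}_{g,f,f}\right)$ for $\ket{1}$ rather than merely an estimate of its real part; this is guaranteed precisely because the choice $f_2 = f_3 = f$ forces $\Phi^{(m)}_{g,f,f}\in\mathbb{R}$, so the real-part caveat relevant for $m>2$ is vacuous in this setting.
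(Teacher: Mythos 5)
Your proposal is correct and follows essentially the same route as the paper: instantiate $\Phi^{(m)}_{g,f,f}$ with $f_1=g$, $f_2=f_3=f$, split the sum over $S$ and its complement, apply the $m$-Parseval identity to obtain $\Phi^{(m)}_{g,f,f}=1-2p$, and then read off the $\ket{1}$-probability $\frac{1}{2}\left(1-\Phi^{(m)}_{g,f,f}\right)=p$ from the two-query circuit's measurement statistics. Your added remark that the real-part caveat is vacuous because $f_2=f_3$ forces $\Phi^{(m)}_{g,f,f}\in\mathbb{R}$ is a worthwhile clarification that the paper leaves implicit.
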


Therefore, the $2$-query $m$-Forrelation algorithm $A_n^{(m)2,3}(g,f,f)$ samples the $m$-Hadamard transform with a probability exactly equal to the sampling probability of Algorithm $\text{DJ}_{m}$.

Next, we compute the sampling probability of the $m$-Hadamard transform using the $3$-query algorithm, $A_n^{(m)3,3}$. Upon measurement, the probability of observing the all-zero state is given by $\left(\Phi^{(m)}_{g,f,f}\right)^2$. Consequently, the probability of observing a state with at least one $\ket{1}$ in the output is $1-\left(\Phi^{(m)}_{g,f,f}\right)^2$. Substituting $\Phi^{(m)}_{g,f,f}=1-2p$, we obtain:
$$1-\left(\Phi^{(m)}_{g,f,f}\right)^2=1-\left(1-2p \right)^2=4p-4p^2 \approx 4p.$$
Therefore, we have the following theorem.
\begin{theorem}
	\label{thm:sample-mHadamard}
	Given $f,g\in\mathcal{B}_n$ and a set of points $S\subseteq \mathbb{F}_2^n$ such that $g(\xv)=1$ for all $\xv\in S$ and $g(\xv)=0$ otherwise, the probability of one of the states in the measurement outcomes being $\ket{1}$ from executing the $3$-query quantum algorithm $A_n^{(m)3,3}(g,f,f)$ is given by $4p-4p^2$. 
\end{theorem}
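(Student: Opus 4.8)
The plan is to reduce the statement to two facts already in hand for the circuit $A_n^{(m)3,3}$. First, I would recall from the step-by-step analysis of the $3$-query algorithm (Fig.~\ref{fig:mforr-q3f3}) that, ignoring the last qubit, the pre-measurement amplitude attached to the all-zero state $\ket{0^n}$ equals $\Phi^{(m)}_{f_1,f_2,f_3}$; hence the probability of observing $\ket{0^n}$ is the complex modulus square $\left|\Phi^{(m)}_{f_1,f_2,f_3}\right|^2$. Specializing to $f_1=g$ and $f_2=f_3=f$, the second remark following Definition~\ref{def:m-forr} guarantees that $\Phi^{(m)}_{g,f,f}$ is real-valued, so $\left|\Phi^{(m)}_{g,f,f}\right|^2=\left(\Phi^{(m)}_{g,f,f}\right)^2$ and the all-zero outcome occurs with probability exactly $\left(\Phi^{(m)}_{g,f,f}\right)^2$.

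Next, I would invoke the identity $\Phi^{(m)}_{g,f,f}=1-2p$ established immediately before Proposition~\ref{prop:sample}, where $p=\frac{1}{2^n}\sum_{\xv\in S}\left|\mathcal{H}^{(m)}_{f}(\xv)\right|^2$; this used the $m$-Parseval identity $\sum_{\xv\in\mathbb{F}_2^n}\left|\mathcal{H}^{(m)}_{f}(\xv)\right|^2=2^n$ together with the indicator form of $g$, which splits the sum over $S$ and its complement. The event that at least one measured qubit equals $\ket{1}$ is precisely the complement of observing the all-zero state $\ket{0^n}$, so its probability is $1-\left(\Phi^{(m)}_{g,f,f}\right)^2$.

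Finally, substituting the value of $\Phi^{(m)}_{g,f,f}$ yields
\[
1-\left(\Phi^{(m)}_{g,f,f}\right)^2=1-\left(1-2p\right)^2=4p-4p^2,
\]
which is the claimed quantity. I do not anticipate a genuine obstacle here, since all the analytical work lives in the amplitude computation for $A_n^{(m)3,3}$ and in the derivation of $\Phi^{(m)}_{g,f,f}=1-2p$, both already available. The only points requiring care are the use of the reality of $\Phi^{(m)}_{g,f,f}$ to drop the modulus (so that the squared probability is literally $\left(\Phi^{(m)}_{g,f,f}\right)^2$ and not merely its absolute square), and the clean identification of the measured event ``some qubit is $\ket{1}$'' with the complement of the all-zero outcome.
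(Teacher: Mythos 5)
Your proposal is correct and follows essentially the same route as the paper: the probability of the all-zero outcome is $\left(\Phi^{(m)}_{g,f,f}\right)^2$, the identity $\Phi^{(m)}_{g,f,f}=1-2p$ from the $m$-Parseval computation is substituted, and the complement event gives $1-(1-2p)^2=4p-4p^2$. Your explicit justification that $\Phi^{(m)}_{g,f,f}$ is real (so the modulus square equals the ordinary square) is a small point the paper leaves implicit, but otherwise the arguments coincide.
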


From Theorem~\ref{thm:sample-mHadamard}, it is evident that when $p<0.75$ (i.e., when $p<4p-4p^2$), the sampling probability obtained from $A_n^{(m)3,3}$ surpasses that of algorithm $\text{DJ}_{m}$, in terms of the required number of queries. More specifically, for any $f\in\mathcal{B}_n$, the algorithm $A_n^{(m)3,3}$ samples the smaller values of the $m$-Hadamard transform more efficiently than the algorithm $\text{DJ}_{m}$. Conversely, for $p>0.75$, algorithm $\text{DJ}_{m}$ alone is sufficient for estimating the $m$-Hadamard transform values.

Here, one might argue that the algorithm $\text{DJ}_{m}$ requires only a single query to the oracle $U_f$, whereas the algorithm $A_n^{(m)3,3}$ necessitates two queries to $U_f$. However, even after executing $\text{DJ}_{m}$ twice (equivalent to two queries to $U_f$), the resulting sampling probability, $1-(1-p)^2 = 2p-p^2 \approx 2p$, remains lower than that achieved by $A_n^{(m)3,3}$. Furthermore, the sampling probability obtained from $A_n^{(m)3,3}$ also exceeds that of performing $\text{DJ}_{m}$ once, followed by a single round of amplitude amplification, which likewise requires two queries to $U_f$, and the corresponding sampling probability is given by $\sin\left(3\sin^{-1} p\right)\approx 3p$. For a graphical comparison of the sampling probabilities across different approaches, refer to Fig.~\ref{fig:graph}, which naturally resembles~\cite[Fig. 4]{amc} and~\cite[Fig. 3]{dcc}.
\begin{figure}[ht]
	\begin{center}
		\includegraphics[scale=0.55]{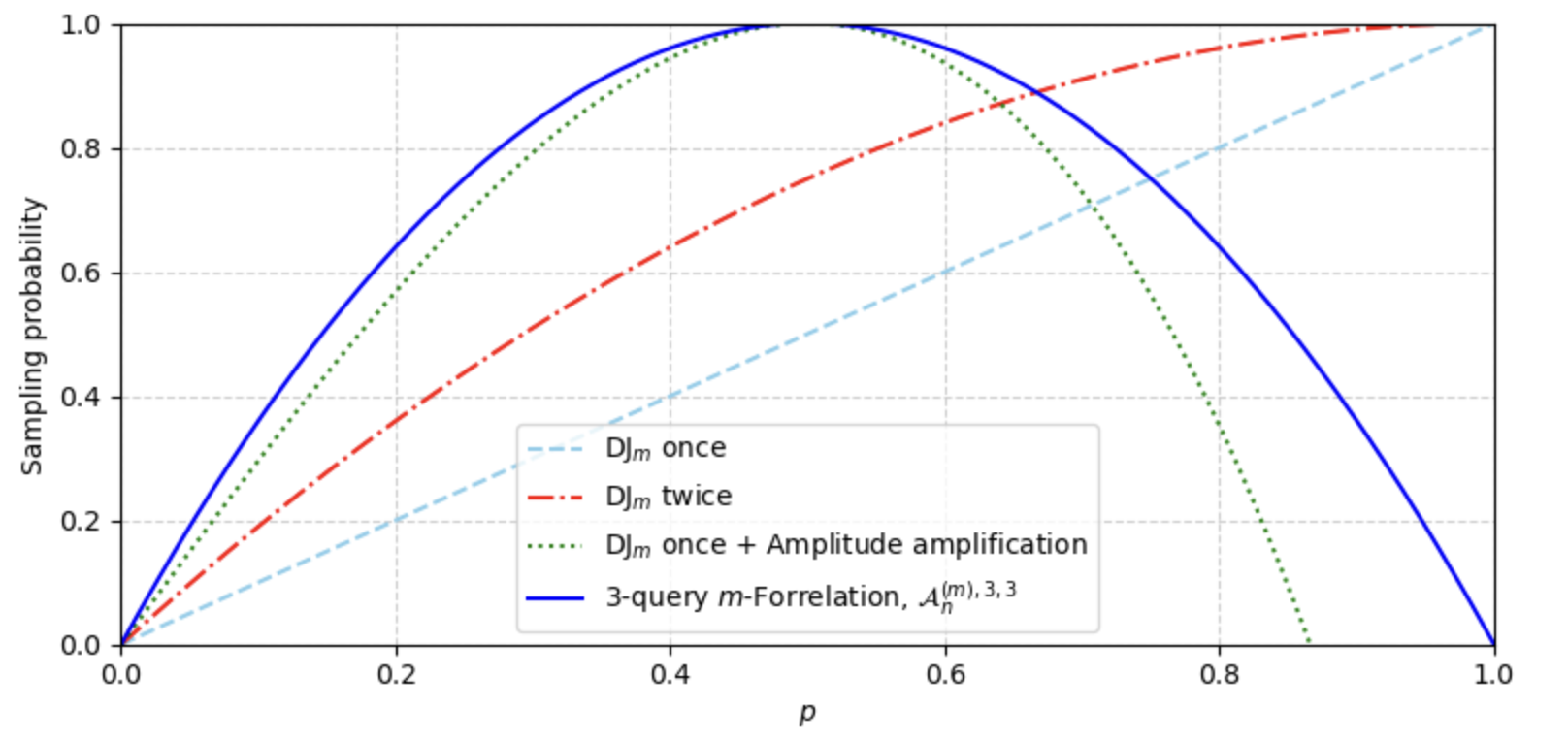}
		\caption{Sampling probabilities of $m$-Hadamard transform using different algorithms.}
		\label{fig:graph}
	\end{center}
\end{figure}

Next, we provide an efficient sampling of the $m$-crosscorrelation and $m$-autocorrelation, using the $m$-Forrelation algorithms, $A_n^{(m)3,3}$ and $A_n^{(m)2,3}$.

\begin{theorem}
\label{thm:sample-mcross}
Given oracle access to $f, g\in\mathcal{B}_n$, the $3$-query $m$-Forrelation algorithm, $A_n^{(m)3,3}$, estimates the $m$-crosscorrelation value of $f$ and $g$ at any given point $\var{y}\in \mathbb{F}_2^n$ by evaluating the probability of measuring the all-zero state, which is given by
$$\pr{\left(\ket{0^n}\right)}=\displaystyle2^{-2n}\left| C^{(m)}_{f, g}(\var{y})\right|^2.$$
Further, the $2$-query $m$-Forrelation algorithm, $A_n^{(m)2,3}$, estimates the real part of the $m$-crosscorrelation value of $f$ and $g$ at any given point $\var{y}\in\mathbb{F}_2^n$ by evaluating the probability of measuring the `driving qubit' in the $\ket{0}$ state, which is given by $$\pr{\left(\ket{0}\right)}=\displaystyle\frac{1}{2}\left(1+\Re{\left( 2^{-n}\zeta_m^{-wt(\var{y})}C^{(m)}_{f,g}(\var{y})\right)}\right).$$
\end{theorem}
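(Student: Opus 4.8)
The plan is to analyze the two $m$-Forrelation circuits under the specific instantiation $f_1 = f$, $f_2 = \mathbb{L}_{\var{y}}$, and $f_3 = g$, where $\mathbb{L}_{\var{y}}(\xv) = \var{y}\cdot\xv$ is linear. The strategy mirrors the derivation already carried out for the general $m$-Forrelation circuits in Section~\ref{sec:cont1}, so the bulk of the work is to specialize the known output amplitudes and then to recognize the resulting sums as the $m$-crosscorrelation via Theorem~\ref{thm:m-cross-hadamard}. For the $3$-query algorithm $A_n^{(m)3,3}$, I would start from the fact already established in the excerpt that the pre-measurement amplitude of $\ket{0^n}$ equals $\Phi^{(m)}_{f_1,f_2,f_3}$, hence the probability of observing the all-zero state is $\bigl|\Phi^{(m)}_{f_1,f_2,f_3}\bigr|^2$. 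Thus it suffices to show $\Phi^{(m)}_{f,\mathbb{L}_{\var{y}},g} = 2^{-n}C^{(m)}_{f,g}(\var{y})$, which when squared in modulus gives the claimed $2^{-2n}\bigl|C^{(m)}_{f,g}(\var{y})\bigr|^2$.

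**The core identity.**

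First I would expand $\Phi^{(m)}_{f,\mathbb{L}_{\var{y}},g}$ using the definition: it equals $2^{-n}\sum_{\var{x_1}}(-1)^{f(\var{x_1})}\mathcal{H}^{(m)}_{\mathbb{L}_{\var{y}}}(\var{x_1})\overline{\mathcal{H}^{(m)}_{g}(\var{x_1})}$. The key simplification comes from Lemma~\ref{lem:m-forr-prop}(a), which tells us that $\mathcal{H}^{(m)}_{\mathbb{L}_{\var{y}}}(\var{x_1})$ is a closed-form product of $\cos(\pi/m)$, $(-i\tan(\pi/m))^{wt(\var{x_1}\oplus\var{y})}$ and a root-of-unity factor. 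An alternative and cleaner route, which I would prefer, is to recognize that inserting $f_2 = \mathbb{L}_{\var{y}}$ effectively replaces the middle $m$-Hadamard transform by a shift-inducing phase: working directly from the three-sum decomposition, the $\var{x_2}$-sum over $(-1)^{\var{y}\cdot\var{x_2}\oplus\var{x_1}\cdot\var{x_2}}\zeta_m^{wt(\var{x_2})}$ collapses via the linear exponent, tying $\var{x_1}$ to a shift by $\var{y}$. Carrying this through, $\Phi^{(m)}_{f,\mathbb{L}_{\var{y}},g}$ becomes $2^{-n}\sum_{\var{u}}\mathcal{H}^{(m)}_{f}(\var{u})\overline{\mathcal{H}^{(m)}_{g}(\var{u})}(-1)^{\var{u}\cdot\var{y}}$ up to the weight-phase factor $\zeta_m^{wt(\var{y})}$, and Theorem~\ref{thm:m-cross-hadamard} identifies exactly this as $2^{-n}C^{(m)}_{f,g}(\var{y})$. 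Taking modulus squared then yields the first claim.

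**The 2-query part.**

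For $A_n^{(m)2,3}$ I would invoke the general result already proven in the excerpt: the probability of observing $\ket{0}$ on the driving qubit equals $\tfrac{1}{2}\bigl(1+\Re(\Phi^{(m)}_{f_1,f_2,f_3})\bigr)$. Substituting the identity $\Phi^{(m)}_{f,\mathbb{L}_{\var{y}},g} = 2^{-n}C^{(m)}_{f,g}(\var{y})$ would naively give $\tfrac{1}{2}\bigl(1+\Re(2^{-n}C^{(m)}_{f,g}(\var{y}))\bigr)$, but the stated target carries an extra factor $\zeta_m^{-wt(\var{y})}$. This discrepancy is the subtle point and the place I would be most careful: the difference reflects exactly how $\mathbb{L}_{\var{y}}$ is loaded in the $2$-query circuit versus the $3$-query circuit. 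In the parallel architecture the middle layer applies $\Omega_m$ but the $f_2$-oracle encodes $\mathbb{L}_{\var{y}}$ \emph{before} the weight phase is absorbed, so the effective quantity estimated is $\zeta_m^{-wt(\var{y})}\Phi^{(m)}$ rather than $\Phi^{(m)}$ itself; equivalently the $\zeta_m^{wt(\var{y})}$ prefactor in Theorem~\ref{thm:m-cross-hadamard} is cancelled by the conjugate phase arising from the driving-qubit overlap. I would reconcile this explicitly by recomputing $\Re(\alpha_{\var{x_3}}\overline{\beta}_{\var{x_3}})$ with $f_2=\mathbb{L}_{\var{y}}$ substituted, tracking the weight phases carefully, to confirm the $\zeta_m^{-wt(\var{y})}$ appears. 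The main obstacle throughout is precisely this bookkeeping of the $\zeta_m^{wt(\cdot)}$ factors: the linear function $\mathbb{L}_{\var{y}}$ interacts with the weight-dependent phases of the $\Omega_m$ and $\mathrm{S}_m$ gates, and getting the right overall root-of-unity factor (rather than an off-by-$\zeta_m^{\pm wt(\var{y})}$ error) requires matching the circuit layers to the decomposition term-by-term. The autocorrelation statement then follows immediately by setting $g=f$.
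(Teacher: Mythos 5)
Your proposal has a genuine gap: the slot assignment $f_1=f$, $f_2=\mathbb{L}_{\var{y}}$, $f_3=g$ does not work for the $m$-Forrelation of this paper, and the ``core identity'' $\Phi^{(m)}_{f,\mathbb{L}_{\var{y}},g}=2^{-n}C^{(m)}_{f,g}(\var{y})$ on which you build is false for general $m$. The collapse you invoke requires
$\sum_{\var{x_2}}(-1)^{(\var{y}\oplus\var{x_1})\cdot\var{x_2}}\zeta_m^{wt(\var{x_2})}=\prod_{j}\bigl(1+(-1)^{y_j\oplus x_{1,j}}\zeta_m\bigr)=(1+\zeta_m)^{\,n-w}(1-\zeta_m)^{\,w}$, with $w=wt(\var{y}\oplus\var{x_1})$, to vanish unless $\var{x_1}=\var{y}$; that happens only when $\zeta_m=\pm1$, i.e., $m\in\{1,2\}$. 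Equivalently, by Lemma~\ref{lem:m-forr-prop}(a) the transform $\mathcal{H}^{(m)}_{\mathbb{L}_{\var{y}}}$ is supported everywhere, with weights proportional to $\left(-i\tan(\pi/m)\right)^{wt(\cdot\,\oplus\var{y})}$, not a delta at $\var{y}$. The collapse you remember is a feature of the \emph{chain-structured} $k$-fold Forrelation of Definition~\ref{def:forr} (the setting of Theorem~\ref{thm:sample-cross}), where the middle variable couples to both of its neighbours; the $m$-Forrelation of Definition~\ref{def:m-forr} has a \emph{star} structure in which $f_1$ sits at the centre and $\mathcal{H}^{(m)}_{f_2}(\var{x_1})\overline{\mathcal{H}^{(m)}_{f_3}(\var{x_1})}$ are both evaluated at the same point $\var{x_1}$, so the three slots are not interchangeable in the way you assume.

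The paper's proof instead places the linear function in the \emph{first} slot: with $h(\xv)=\xv\cdot\var{y}$, Definition~\ref{def:m-forr} gives $2^{n}\Phi^{(m)}_{h,f,g}=\sum_{\var{u}}\mathcal{H}^{(m)}_{f}(\var{u})\overline{\mathcal{H}^{(m)}_{g}(\var{u})}(-1)^{\var{u}\cdot\var{y}}$, which by Theorem~\ref{thm:m-cross-hadamard} equals $\zeta_m^{-wt(\var{y})}C^{(m)}_{f,g}(\var{y})$. Substituting $\Phi^{(m)}_{h,f,g}=2^{-n}\zeta_m^{-wt(\var{y})}C^{(m)}_{f,g}(\var{y})$ into the already-established output probabilities $\bigl|\Phi^{(m)}\bigr|^2$ and $\frac{1}{2}\left(1+\Re\left(\Phi^{(m)}\right)\right)$ of $A_n^{(m)3,3}$ and $A_n^{(m)2,3}$ yields both claims at once (the unimodular phase drops out of the first). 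In particular, the factor $\zeta_m^{-wt(\var{y})}$ that you flag as a ``discrepancy'' to be reconciled by circuit-level phase bookkeeping is nothing of the sort: it is exactly the prefactor of Theorem~\ref{thm:m-cross-hadamard}, and no re-derivation of $\Re\left(\alpha_{\var{x_3}}\overline{\beta}_{\var{x_3}}\right)$ is needed. Your overall plan---reduce to an $m$-Forrelation value of a triple involving a linear function and then apply Theorem~\ref{thm:m-cross-hadamard}---is the right one; only the placement of $\mathbb{L}_{\var{y}}$ must be corrected.
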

\begin{proof}
	Suppose, $h(\xv)\in \mathcal{B}_n$ such that $h(\xv)={\xv\cdot\var{y}}$. Then, from Theorem~\ref{thm:m-cross-hadamard}, the $m$-crosscorrelation value at $\var{y}\in\mathbb{F}_2^n$ can be written as
	$$C^{(m)}_{f,g}(\var{y}) = \zeta_m^{wt(\var{y})}\displaystyle\sum_{\xv \in \mathbb{F}_2^n} \mathcal{H}^{(m)}_{f}(\xv)\overline{\mathcal{H}^{(m)}_{g}(\xv)} (-1)^{h(\xv)} = \zeta_m^{wt(\var{y})}2^n\cdot \Phi^{(m)}_{h,f,g}.$$
	This implies $\Phi^{(m)}_{h,f,g}=2^{-n}\zeta_m^{-wt(\var{y})}C^{(m)}_{f,g}(\var{y})$. Since the $m$-Forrelation values $\Phi^{(m)}_{h,f,g}$ can be estimated from  executing the algorithms $A_n^{(m)3,3}$ and $A_n^{(m)2,3}$, the rest of the proof follows.
\end{proof}

This gives us a constant query algorithm for sampling the $m$-crosscorrelation values of any two Boolean functions, $f,g\in\mathcal{B}_n$ at any given point, $\yv\in\mathbb{F}_2^n$. For $g=f$, we obtain a constant query sampling of $m$-autocorrelation, $C^{(m)}_{f}$ as an immediate corollary.
\begin{corollary}
\label{cor:sample-mauto}
Given oracle access to $f\in\mathcal{B}_n$, the $3$-query $m$-Forrelation algorithm, $A_n^{(m)3,3}$, estimates the $m$-autocorrelation value of $f$ at any given point $\var{y}\in\mathbb{F}_2^n$ by evaluating the probability of measuring the all-zero state, which is given by
$$\pr{\left(\ket{0^n}\right)}=\textstyle 2^{-2n} |C^{(m)}_{f}(\var{y})|^2.$$
Furthermore, the $2$-query $m$-Forrelation algorithm, $A_n^{(m)2,3}$, estimates the real part of the $m$-autocorrelation value of $f$ at any given point $\var{y}\in\mathbb{F}_2^n$ by evaluating the probability of measuring the `driving qubit' in the $\ket{0}$ state, which is given by $$\pr{\left(\ket{0}\right)}=\textstyle\frac{1}{2}\left[1+\Re{\left(2^{-n} \zeta_m^{-wt(\var{y})}C^{(m)}_{f}(\var{y})\right)}\right].$$
\end{corollary}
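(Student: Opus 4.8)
The plan is to obtain Corollary~\ref{cor:sample-mauto} as the immediate specialization of Theorem~\ref{thm:sample-mcross} to the case $g=f$. First I would note that, straight from Definition~\ref{def:m-crosscorrelation}, the $m$-autocorrelation coincides with the $m$-crosscorrelation of $f$ with itself, i.e., $C^{(m)}_{f}(\var{y}) = C^{(m)}_{f,f}(\var{y})$. Consequently, every quantity and measurement probability established for the pair $(f,g)$ in Theorem~\ref{thm:sample-mcross} carries over verbatim once the two input functions are taken to be equal.

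Concretely, I would instantiate the two algorithms of Theorem~\ref{thm:sample-mcross} on the inputs $(h,f,f)$, where $h(\xv)=\xv\cdot\var{y}$ is the linear function attached to the query point $\var{y}$. The key relation, which is just the $g=f$ case of the identity derived inside the proof of Theorem~\ref{thm:sample-mcross}, reads $\Phi^{(m)}_{h,f,f} = 2^{-n}\zeta_m^{-wt(\var{y})}C^{(m)}_{f,f}(\var{y}) = 2^{-n}\zeta_m^{-wt(\var{y})}C^{(m)}_{f}(\var{y})$. Feeding this into the two output probabilities---$|\Phi^{(m)}_{h,f,f}|^2$ for the all-zero outcome of $A_n^{(m)3,3}$, and $\frac{1}{2}\left(1+\Re(\Phi^{(m)}_{h,f,f})\right)$ for the $\ket{0}$ outcome of the driving qubit in $A_n^{(m)2,3}$---reproduces exactly the two claimed expressions $2^{-2n}|C^{(m)}_{f}(\var{y})|^2$ and $\frac{1}{2}\left(1+\Re\left(2^{-n}\zeta_m^{-wt(\var{y})}C^{(m)}_{f}(\var{y})\right)\right)$.

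There is no real obstacle in this argument, since it is a pure substitution. The only point I would flag is that, with $f_2=f_3=f$, the $m$-Forrelation $\Phi^{(m)}_{h,f,f}$ is guaranteed to be real (as already observed in the remarks following Definition~\ref{def:m-forr}); hence the $\Re(\cdot)$ in the $2$-query probability acts trivially on $2^{-n}\zeta_m^{-wt(\var{y})}C^{(m)}_{f}(\var{y})$, and retaining the $\Re$ symbol merely preserves formal parallelism with the crosscorrelation statement. Once the identification $C^{(m)}_{f}=C^{(m)}_{f,f}$ is in place, the corollary follows at once.
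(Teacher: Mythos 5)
Your proposal is correct and matches the paper exactly: the paper also obtains this corollary as the immediate specialization $g=f$ of Theorem~\ref{thm:sample-mcross}, using $C^{(m)}_{f}(\var{y})=C^{(m)}_{f,f}(\var{y})$ and substituting into the two probability expressions. Your additional remark that $\Phi^{(m)}_{h,f,f}$ is real when $f_2=f_3$ is a correct observation consistent with the remarks following Definition~\ref{def:m-forr}.
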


Moreover, following Theorem~\ref{thm:autobent}, we have another direct corollary as follows.
\begin{corollary}
	\label{cor:mautobent}
	Let $f\in\mathcal{B}_n$ be an $m$-bent function, and let $h\in\mathcal{B}_n$ be a linear Boolean function (i.e., either constant or balanced). Then the presence or absence of the all-zero state in the measurement outcome of the $3$-query algorithm $A_n^{(m)2,3}(g,f,f)$ determines whether $h$ is constant, i.e., $h(\xv)=0$, for all $\xv\in\mathbb{F}_2^n$, or balanced, respectively.
\end{corollary}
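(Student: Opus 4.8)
The plan is to read this corollary off as an immediate consequence of Corollary~\ref{cor:sample-mauto} combined with the $m$-bent characterization of Theorem~\ref{thm:autobent}; the whole statement is really the $m$-Forrelation analogue of the Deutsch--Jozsa constant-versus-balanced dichotomy, with the $m$-bent function $f$ playing the role of the ``resource'' that makes the decision deterministic. First I would pin down the roles of the inputs inside the algorithm. The linear function must enter as the first Boolean input, so I take $f_1=h=\mathbb{L}_{\var{y}}$ with $h(\xv)=\var{y}\cdot\xv$, together with $f_2=f_3=f$, and run the $3$-query circuit $A_n^{(m)3,3}(h,f,f)$. The key observation bridging ``linear'' and ``point'' is that the mask $\var{y}$ is exactly the encoding of $h$: the function $h$ is constant (indeed identically $0$) precisely when $\var{y}=0^n$, and balanced for every $\var{y}\neq 0^n$.

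Next I would invoke Corollary~\ref{cor:sample-mauto}, which states that with this setup the probability of observing the all-zero state is $\pr(\ket{0^n})=2^{-2n}\,\bigl|C_f^{(m)}(\var{y})\bigr|^2$. Since $f$ is $m$-bent, Theorem~\ref{thm:autobent} supplies the exact autocorrelation values $C_f^{(m)}(\var{y})=0$ for all $\var{y}\neq 0^n$ and $C_f^{(m)}(0^n)=2^n$. Substituting these two cases into the probability expression gives $\pr(\ket{0^n})=2^{-2n}(2^n)^2=1$ when $\var{y}=0^n$, and $\pr(\ket{0^n})=0$ when $\var{y}\neq 0^n$. Hence the measurement outcome is fully deterministic, never merely biased.

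Finally I would translate the $\var{y}$-dichotomy back into a statement about $h$: the all-zero state is observed with certainty exactly when $\var{y}=0^n$, i.e. when $h\equiv 0$ is constant, and it is never observed when $\var{y}\neq 0^n$, i.e. when $h$ is balanced. Therefore a single execution of the circuit followed by one measurement distinguishes the two cases with certainty, which is precisely the claimed conclusion.

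I do not expect a genuine mathematical obstacle, since the core of the argument is a one-line substitution of the two $m$-bent autocorrelation values into the probability formula already established upstream. The only care required is interpretive bookkeeping: reconciling the symbol used for the linear input (the statement writes the algorithm call as $A_n^{(m)2,3}(g,f,f)$, while the linear function is named $h$), and noting that the ``all-zero state in the measurement outcome'' together with the ``$3$-query'' description points to the full-register measurement of $A_n^{(m)3,3}$ rather than the driving-qubit measurement of the $2$-query variant $A_n^{(m)2,3}$. I would therefore make explicit at the outset that it is the $3$-query algorithm whose all-zero-register outcome carries the $\bigl|C_f^{(m)}(\var{y})\bigr|^2$ signal, so that the clean $0/1$ probabilities above apply verbatim.
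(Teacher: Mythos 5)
Your proof is correct and follows essentially the same route as the paper: substitute the $m$-bent autocorrelation values from Theorem~\ref{thm:autobent} into the probability formula of Corollary~\ref{cor:sample-mauto} and read off the zero/nonzero dichotomy. Your added observations --- that the probabilities are exactly $1$ and $0$, and that the statement's label $A_n^{(m)2,3}(g,f,f)$ is inconsistent with the ``$3$-query'' description and should refer to $A_n^{(m)3,3}(h,f,f)$ --- are both accurate and go slightly beyond what the paper makes explicit.
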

\begin{proof}
	From Corollary~\ref{cor:sample-mauto}, $\pr{\left(\ket{0^n}\right)}=\textstyle 2^{-2n}|C^{(m)}_{f}(\var{y})|^2.$ Since, $f\in\mathcal{B}_n$ be an $m$-bent, from Theorem~\ref{thm:autobent}, $\pr{\left(\ket{0^n}\right)} \neq 0$ if $\var{y} = 0^n$, i.e., $h(\var{x})=\var{x}\cdot \var{y} = 0$, and $\pr{\left(\ket{0^n}\right)} = 0$ if $\var{y} \neq 0^n$, i.e., $h(\var{x})=\var{x}\cdot \var{y}$, a balanced Boolean function.
\end{proof}

%\begin{theorem}
%Let $f_1,f_2\in\mathcal{B}_n$ with the promise that either $f_1=f_2$ or $f_2$ is a shifted version $f_1$, i.e., $f_2(\xv)=f_1(\xv\oplus \var{y})$. Then the $3$-fold nega-Forrelation algorithm, $\tilde{A}_n^{3,3}$ deterministically resolves which one it is using a single query to $f_1$ and $f_2$, each.
%\end{theorem}

Theorem~\ref{thm:sample-mcross} and Corollary~\ref{cor:sample-mauto} estimate the $m$-crosscorrelation and the $m$-autocorrelation values, respectively, at any specified point $\var{y}\in\mathbb{F}_2^n$, by choosing the linear function $h(\xv)={\xv\cdot \var{y}}$, which is dependent upon $\var{y}$. Next, we attempt to sample from the complete spectrum of the $m$-crosscorrelation (and thus the $m$-autocorrelation) by placing a superposition of all possible linear Boolean function, in place of $h$ (shown in Fig.~\ref{fig:algo}).
\begin{figure}[ht]
\begin{center}
\includegraphics[scale=1.1]{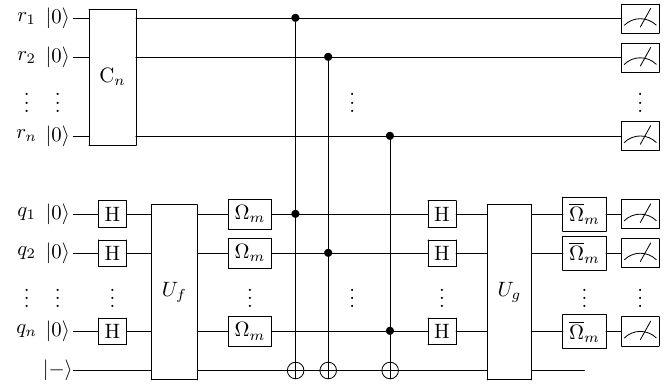}
\caption{Quantum circuit for sampling the complete $m$-crosscorrelation spectrum, denoted by Algorithm $\mathbb{A}^{(m)}(\mathrm{C}_n)$.}
\label{fig:algo}
\end{center}
\end{figure}

Let $\mathrm{C}_n$ be a $2^n \times 2^n$ unitary operator such that $\mathrm{C}_n \left(\ket{0^n}\right) = \sum_{\xv \in \mathbb{F}_2^n} \alpha_{\xv}\ket{\xv}$ with $\alpha_{\xv}\in\mathbb{C}$, for all $\xv\in\mathbb{F}_2^n$ satisfying $\sum_{\xv \in \mathbb{F}_2^n} \left|\alpha_{\xv}\right|^2 = 1$. Then, starting from the all-zero state, the algorithm $\mathbb{A}^{(m)}(\mathrm{C}_n)$ as in Fig.~\ref{fig:algo} has the pre-measurement state 
$$
\displaystyle \sum_{\var{y} \in \mathbb{F}_2^n} \alpha_{\var{y}}\ket{\var{y}} 
\left( \frac{C^{(m)}_{f,g}(\var{y})}{2^n}\ket{0^n} + \beta_{\var{y}}\ket{W_{\var{y}}} \right),
$$
where $\ket{W_\var{y}}$ is an $n$-qubit superposition state such that the amplitude of the state $\ket{0^n}$ is $0$. Theorem~\ref{thm:sample-fullcorr} shows that the specific choice of $\mathrm{C}_n$ helps in sampling the complete spectrum of $m$-crosscorrelation, as follows.

\begin{theorem}
\label{thm:sample-fullcorr}
Fixing $\mathrm{C}_n = \mathrm{H}^{\otimes n}$, the probability of observing $\ket{\var{y}}\ket{0^n}$ upon measuring the first $2n$ qubits from Algorithm ${\mathbb{A}}^{(m)}$ is given by $2^{-3n} \left|C^{(m)}_{f,g}(\var{y})\right|^2$ for all $\var{y} \in \mathbb{F}_2^n$. 
\end{theorem}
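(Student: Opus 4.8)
The plan is to read the target amplitude directly off the pre-measurement state displayed just above the theorem, after specializing $\mathrm{C}_n$ to $\mathrm{H}^{\otimes n}$; once the form of that state is granted, the theorem reduces to a single application of the Born rule. Accordingly, the substantive work is to confirm the pre-measurement state, and the final probability is then immediate.

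First I would justify the pre-measurement state branch by branch. The register prepared by $\mathrm{C}_n$ carries $\sum_{\var{y}\in\mathbb{F}_2^n}\alpha_{\var{y}}\ket{\var{y}}$, and conditioned on this register holding $\ket{\var{y}}$ the remaining circuit is precisely the $3$-query $m$-Forrelation circuit $A_n^{(m)3,3}$ run with the middle (linear) function $h(\xv)=\xv\cdot\var{y}$, since the controlled phases from the $\mathrm{C}_n$-register reproduce the phase pattern $(-1)^{\xv\cdot\var{y}}$ of $U_h$. Hence, exactly as in the proof of Theorem~\ref{thm:sample-mcross}, the amplitude of $\ket{0^n}$ in that branch equals $\Phi^{(m)}_{h,f,g}=2^{-n}\zeta_m^{-wt(\var{y})}C^{(m)}_{f,g}(\var{y})$ by Theorem~\ref{thm:m-cross-hadamard}. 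Up to the unimodular factor $\zeta_m^{-wt(\var{y})}$ (immaterial after taking moduli) this is the coefficient $C^{(m)}_{f,g}(\var{y})/2^n$ appearing above, with the orthogonal remainder collected into $\beta_{\var{y}}\ket{W_{\var{y}}}$, where $\ket{W_{\var{y}}}$ carries no $\ket{0^n}$ component.

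Second, I would set $\mathrm{C}_n=\mathrm{H}^{\otimes n}$, so that $\mathrm{C}_n\ket{0^n}=2^{-n/2}\sum_{\var{y}\in\mathbb{F}_2^n}\ket{\var{y}}$ and therefore $\alpha_{\var{y}}=2^{-n/2}$ for every $\var{y}$. Because $\ket{W_{\var{y}}}$ contributes nothing on $\ket{0^n}$, the only contribution to the computational-basis state $\ket{\var{y}}\ket{0^n}$ comes from the displayed $\ket{0^n}$ term, so its amplitude is
$$\alpha_{\var{y}}\cdot\frac{C^{(m)}_{f,g}(\var{y})}{2^n}=2^{-n/2}\cdot\frac{C^{(m)}_{f,g}(\var{y})}{2^n}=\frac{C^{(m)}_{f,g}(\var{y})}{2^{3n/2}}.$$
Taking the squared modulus then gives
$$\pr\left(\ket{\var{y}}\ket{0^n}\right)=\left|\frac{C^{(m)}_{f,g}(\var{y})}{2^{3n/2}}\right|^2=2^{-3n}\left|C^{(m)}_{f,g}(\var{y})\right|^2,$$
as claimed.

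The main obstacle is not this last computation but the branch-by-branch justification of the pre-measurement state: pinning down the $\ket{0^n}$ coefficient via Theorem~\ref{thm:m-cross-hadamard}, and verifying that the distinct $\var{y}$-branches do not interfere. The latter holds because the value $\var{y}$ in the $\mathrm{C}_n$-register is itself part of the measured first $2n$ qubits, so different $\var{y}$ label mutually orthogonal subspaces and no cross-terms survive when the probability is formed.
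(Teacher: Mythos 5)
Your proof is correct and follows essentially the same route as the paper: specialize $\alpha_{\var{y}}=2^{-n/2}$ in the displayed pre-measurement state and apply the Born rule, using that distinct $\var{y}$-branches are orthogonal and that $\ket{W_{\var{y}}}$ has no $\ket{0^n}$ component. The additional work you do --- deriving the $\ket{0^n}$-coefficient of each branch from Theorem~\ref{thm:m-cross-hadamard} and noting that the unimodular factor $\zeta_m^{-wt(\var{y})}$ is immaterial after taking moduli --- merely fills in details the paper asserts without proof in the discussion preceding the theorem.
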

\begin{proof}
Fixing $\mathrm{C}_n= \mathrm{H}^{\otimes n}$, the pre-measurement state becomes 
$\displaystyle \sum_{\var{y} \in \mathbb{F}_2^n} 2^{-\frac{n}{2}}\ket{\var{y}} \left( \frac{C^{(m)}_{f,g}(\var{y})}{2^n}\ket{0^n} + \beta_{\var{y}}\ket{W_{\var{y}}}\right).$
Thus the probability of observing the state $\ket{\var{y}}\ket{0^n}$ upon measuring the first $2n$ many qubits is given by $2^{-3n} \left|C^{(m)}_{f,g}(\var{y})\right|^2$.
\end{proof}

The algorithm $\mathbb{A}^{(m)}$ is a generalization, where $m=1$ corresponds to sampling the complete spectrum of cross-correlation, as presented in~\cite[Algorithm 1]{amc}. For $m=4$, it enables the sampling of the complete spectrum of nega-crosscorrelation, as described in~\cite[Algorithm 1]{dcc}. Furthermore, for $m=2^k$, the algorithm can also be used for sampling the complete spectrum of $2^k$-crosscorrelation. In this context, we present the following corollary.

\begin{corollary}
\label{cor:mautobent-full}
Let $f\in\mathcal{B}_n$ be a $m$-bent function. Then, from executing the algorithm $\mathbb{A}^{(m)}(\mathrm{H}^{\otimes n})$ with $g=f$, the probability of observing the all-zero state, $\ket{0^n}\ket{0^n}$, upon measuring all the $2n$ qubits, is given by $2^{-n}$. Moreover, the probability of observing a state $\ket{\xv}\ket{0^n}$, where $\var{x}\neq 0^n$, is $0$.
\end{corollary}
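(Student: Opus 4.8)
The plan is to combine Theorem~\ref{thm:sample-fullcorr} with the $m$-autocorrelation characterization of $m$-bent functions provided in Theorem~\ref{thm:autobent}. Since the algorithm $\mathbb{A}^{(m)}(\mathrm{H}^{\otimes n})$ is executed with $g=f$, the $m$-crosscorrelation appearing in Theorem~\ref{thm:sample-fullcorr} specializes to the $m$-autocorrelation $C^{(m)}_{f}(\var{y})$. Hence the probability of observing $\ket{\var{y}}\ket{0^n}$ upon measuring the first $2n$ qubits is $2^{-3n}\left|C^{(m)}_{f}(\var{y})\right|^2$ for every $\var{y}\in\mathbb{F}_2^n$.

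Next, I would invoke the $m$-bent hypothesis through Theorem~\ref{thm:autobent}: because $f$ is $m$-bent, we have $C^{(m)}_{f}(\var{z})=0$ for all $\var{z}\in\mathbb{F}_2^n\setminus\{0^n\}$ and $C^{(m)}_{f}(0^n)=2^n$. Substituting $\var{y}=0^n$ yields probability $2^{-3n}\cdot (2^n)^2 = 2^{-n}$ for the all-zero state $\ket{0^n}\ket{0^n}$, whereas for any $\var{y}\neq 0^n$ the squared modulus vanishes, giving probability $0$ for $\ket{\var{y}}\ket{0^n}$. This establishes both assertions simultaneously.

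There is essentially no difficult step: the corollary is a direct consequence of the two cited results, and can be regarded as the full-spectrum analogue of Corollary~\ref{cor:sample-mauto}. The only point requiring a little care is tracking the normalization constants, namely confirming that the $2^{-3n}$ prefactor inherited from Theorem~\ref{thm:sample-fullcorr} combines with $\left|C^{(m)}_{f}(0^n)\right|^2 = 2^{2n}$ to produce exactly $2^{-n}$ rather than some other power of two. Once this bookkeeping is verified, the result follows immediately.
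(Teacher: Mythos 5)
Your proposal is correct and matches the paper's own (very brief) justification: both derive the result by specializing the full-spectrum sampling probability $2^{-3n}\left|C^{(m)}_{f}(\var{y})\right|^2$ from Theorem~\ref{thm:sample-fullcorr} to $g=f$ and then applying the $m$-autocorrelation characterization of $m$-bent functions, with the normalization $2^{-3n}\cdot 2^{2n}=2^{-n}$ checked correctly. No gaps.
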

The proof follows from Corollary~\ref{cor:mautobent}, using the fact that the $m$-autocorrelation value of an $m$-bent function is $0$ at any nonzero point.

A Dicke-state is an $n$-qubit quantum state with an equal superposition of all $\binom{n}{k}$ basis states with Hamming weight $k$. According to~\cite{dicke}, it is known that starting from $\ket{0^n}$, any Dicke state $\ket{D_k^n}$ can be deterministically prepared using $\mathcal{O}\left( n^2 \right)$ CNOT gates and $\mathcal{O}\left( n^2 \right)$ many single-qubit gates. Let $UD_k^n$ be the $2^n\times 2^n$ unitary operator that prepares the Dicke-state $\ket{D_k^n}$ of weight $k$ such that
$$\mathrm{UD}^{n}_{k} \left(\ket{0^n}\right) = \frac{1}{\sqrt{\binom{n}{k}}} \sum_{\xv: wt(\xv)=k} \ket{\xv}.$$
Consequently, if we replace $\mathrm{C}_n = \mathrm{UD}_k^n$ with $k < n$, the probability of observing $\ket{\var{y}}\ket{0^n}$ is given by $\frac{\left|C^{(m)}_{f,g}(\var{y})\right|^2}{\binom{n}{k}2^{2n}}$, where the Hamming weight of $\var{y}$ is $k$ and the probability of observing
$\ket{\var{y}}\ket{0^n}$ is zero if $wt(\var{y}) \neq k$. 
In this manner, one can sample the $m$-crosscorrelation (and hence the $m$-autocorrelation) values at all the points having an equal Hamming weight $k$.

\section{On affine transformation and shift of (generalized) bent functions}
\label{sec:cont3}
In this section, we study the affine transformations of (generalized) bent functions, beginning with an interesting observation on the affine properties of $m$-bent ones.
\begin{theorem}
    Let $f,g\in\mathcal{B}_n$ be two $m$-bent function such that $g(\xv) = f(A\xv\op \var{b}) \op \var{c}\cdot\xv \op d$, where $A$ is an $n$-dimensional orthogonal matrix, $\var{b}, \var{c}\in\mathbb{F}_2^n$, $d\in\mathbb{F}_2$. Then, for $\var{b}=0^n$, if $f$ is $m$-bent, so is $g$.
\end{theorem}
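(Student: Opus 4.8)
The plan is to verify the defining condition of $m$-bentness for $g$ directly, namely that $\left|\mathcal{H}^{(m)}_g(\var{u})\right| = 1$ for every $\var{u} \in \mathbb{F}_2^n$, by reducing the computation of $\mathcal{H}^{(m)}_g$ to that of $\mathcal{H}^{(m)}_f$ through the two structural identities already recorded in Lemma~\ref{lem:m-forr-prop}. Setting $\var{b} = 0^n$, the defining relation simplifies to $g(\xv) = f(A\xv) \oplus \var{c}\cdot\xv \oplus d$, which decomposes the affine map into an orthogonal coordinate change of the input followed by a linear shift together with a constant bit. Each of these two operations is handled by a separate part of the lemma.

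Concretely, I would first introduce the auxiliary function $h(\xv) = f(A\xv)$. Part (c) of Lemma~\ref{lem:m-forr-prop} then gives $\mathcal{H}^{(m)}_h(\var{u}) = \mathcal{H}^{(m)}_f(A\var{u})$, since $A$ is orthogonal over $\mathbb{F}_2$. Next, writing $g(\xv) = h(\xv) \oplus \var{c}\cdot\xv \oplus d$ and applying part (a) of the same lemma (with $h$ playing the role of $f$) yields $\mathcal{H}^{(m)}_g(\var{u}) = (-1)^d\,\mathcal{H}^{(m)}_h(\var{u}\oplus\var{c})$. Composing the two identities produces the closed form
$$\mathcal{H}^{(m)}_g(\var{u}) = (-1)^d\,\mathcal{H}^{(m)}_f\!\left(A(\var{u}\oplus\var{c})\right),$$
from which $\left|\mathcal{H}^{(m)}_g(\var{u})\right| = \left|\mathcal{H}^{(m)}_f\!\left(A(\var{u}\oplus\var{c})\right)\right|$, because the global sign $(-1)^d$ has modulus one. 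Since $f$ is $m$-bent, the right-hand side equals $1$ for every $\var{u}$, and as $\var{u}\mapsto A(\var{u}\oplus\var{c})$ merely permutes the evaluation points, the condition holds uniformly. This establishes that $g$ is $m$-bent.

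I do not anticipate a genuine obstacle here: the result is essentially a bookkeeping composition of two previously proven transformation rules, and the modulus is preserved at each stage. The one point worth flagging is the necessity of the hypothesis $\var{b} = 0^n$. For a nonzero translation $f(A\xv\oplus\var{b})$, the weight factor $\zeta_m^{wt(\xv)}$ appearing in the $m$-Hadamard transform does not factor out as a pure phase under the substitution $\xv \mapsto \xv \oplus \var{b}$ (in contrast to the plain Walsh case $m=1$, where translation contributes only a sign $(-1)^{\var{b}\cdot\var{u}}$). Consequently the clean multiplicative behaviour used above breaks down, which is exactly why the statement restricts attention to $\var{b}=0^n$; I would note this limitation explicitly rather than attempt to push the argument through for general $\var{b}$.
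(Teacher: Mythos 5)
Your proposal is correct and follows essentially the same route as the paper: both combine Lemma~\ref{lem:m-forr-prop}(a) and (c) to obtain $\mathcal{H}^{(m)}_g(\var{u}) = (-1)^d\,\mathcal{H}^{(m)}_f\left(A(\var{u}\oplus\var{c})\right)$ and then conclude by taking moduli. Your explicit remark on why the weight factor $\zeta_m^{wt(\xv)}$ obstructs the case $\var{b}\neq 0^n$ is a useful addition but does not change the argument.
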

\begin{proof}
Combining Lemma~\ref{lem:m-forr-prop}~(a) and Lemma~\ref{lem:m-forr-prop}~(c), we obtain:
$$\mathcal{H}_g^{(m)}(\bm{\omega}) = (-1)^{d} 2^{-n/2} \sum_{\xv\in\mathbb{F}_2^n} (-1)^{f(\yv) \op A\left(\var{c}\oplus \bm{\omega}\right)\cdot \yv }\zeta_m^{wt(\yv)} = (-1)^{d}\mathcal{H}_f^{(m)}(A(\var{c}\op \bm{\omega})).$$
Therefore, $\left|\mathcal{H}_g^{(m)}(\bm{\omega}) \right| = \left|(-1)^{d}\mathcal{H}_f^{(m)}(A(\var{c}\op \bm{\omega})) \right| = \left| \mathcal{H}_f^{(m)}(A(\var{c}\op \bm{\omega})) \right|$. Hence, if $f$ is $m$-bent, then so is $g$.
%From Definition~\ref{def:m-forr}, we have
%$$\mathcal{H}_g^{(m)}(\bm{\omega}) = 2^{-n/2}\sum_{\var{x}}(-1)^{f(A\xv\op \var{b}) \op \var{c}\cdot\xv \op d \oplus \bm{\omega}\cdot\xv} \zeta_m^{wt(\xv)} = (-1)^{d}2^{-n/2} \sum_{\var{x}}(-1)^{f(A\xv\op \var{b}) \op (\var{c}\oplus \bm{\omega})\cdot\xv} \zeta_m^{wt(\xv)}.$$
%Since, $A$ is orthogonal, i.e., $A^TA = AA^T = I_n$, we set a new variable $\yv = A\xv \op \var{b}$, which implies $\xv = A^T(\yv \op \var{b})$. Further, $wt(\xv) = \xv^T I_n \xv$ implies that
%$$wt(A^T(\yv \op \var{b})) = \left( A^T(\yv \op \var{b}) \right)^T I_n \left(A^T(\yv \op \var{b})\right) = \left(\yv \op \var{b}\right)^T AI_n A^T \left( \yv \op \var{b}\right) = \left(\yv \op \var{b}\right)^T I_n \left( \yv \op \var{b}\right) = wt\left( \yv \op \var{b}\right)$$
%Therefore,
%\begin{align*}
    %\mathcal{H}_g^{(m)}(\bm{\omega}) = & (-1)^d 2^{-n/2} \sum_{\xv} (-1)^{f(\yv) \op (\var{c}\oplus \bm{\omega})\cdot A^T(\yv \op \var{b})}\zeta_m^{wt(\yv \op \var{b})}\\
    %= & (-1)^d 2^{-n/2} \sum_{\xv} (-1)^{f(\yv) \op A\left(\var{c}\oplus \bm{\omega}\right)\cdot (\yv \op \var{b})}\zeta_m^{wt(\yv \op \var{b})}\\
    %= & (-1)^{d\op A(\var{c}\op \bm{\omega})\cdot b} 2^{-n/2} \sum_{\xv} (-1)^{f(\yv) \op A\left(\var{c}\oplus \bm{\omega}\right)\cdot \yv }\zeta_m^{wt(\yv \op \var{b})}
%\end{align*}
\end{proof}

Notably, the similar results are already known for specific subcases such as standard bent functions, negabent functions, and $k$-bent functions. Consider two Boolean functions $f, g \in \mathcal{B}_n$ related by the affine transformation $g(\xv) = f(A\xv \op \var{b}) \op \var{c}\cdot \xv \op d$, where $A$ is an $n\times n$ orthogonal matrix over $\mathbb{F}_2$ (i.e., $AA^T = A^TA = I_n$), $\var{b}, \var{c} \in \mathbb{F}_2^n$, and $d\in\mathbb{F}_2$. Then, the Walsh-Hadamard transforms of $f$ and $g$ satisfy
\begin{align*}
    W_g(\bm{\omega}) = & 2^{-n/2}\sum_{\xv\in\mathbb{F}_2^n}(-1)^{f(A\xv \op \var{b}) \op \var{c}\cdot \xv \op d} (-1)^{\bm{\omega} \cdot \xv}\\
    = & (-1)^{d \op A(\bm{\omega} \op \var{c})\cdot b}2^{-n/2}\sum_{\yv\in\mathbb{F}_2^n}(-1)^{f(\yv)\op A(\bm{\omega} \op \var{c})\cdot \yv}\\
    = & (-1)^{d \op A(\bm{\omega} \op \var{c})\cdot b} W_f(A(\bm{\omega} \op \var{c})),
\end{align*}
which implies that if $g$ is bent, then $f$ is also bent~\cite[Theorem 2]{schmidt2008}. Similarly, for the nega-Hadamard transforms:
\begin{align*}
    N_g(\bm{\omega}) = & 2^{-n/2}\sum_{\xv\in\mathbb{F}_2^n}(-1)^{f(A\xv \op \var{b}) \op \var{c}\cdot \xv \op d} (-1)^{\bm{\omega} \cdot \xv} (i)^{wt(\xv)}\\
    = & (-1)^{d \op A(\var{c}\op \bm{\omega})\cdot b}2^{-n/2}\sum_{\yv\in\mathbb{F}_2^n}(-1)^{f(\yv)\op A(\var{c}\op \bm{\omega})\cdot \yv}(i)^{wt(\yv \op \var{b})}\\
    = & (-1)^{d \op A(\var{c}\op \bm{\omega})\cdot b} (i)^{wt(\var{b})}2^{-n/2}\sum_{\yv\in\mathbb{F}_2^n}(-1)^{f(\yv)\op A(\var{c}\op \bm{\omega})\cdot \yv \op \yv \cdot \var{b}}(i)^{wt(\yv)}\\
    = & (-1)^{d \op A(\var{c}\op \bm{\omega})\cdot b}(i)^{wt(\var{b})} N_f(A(\var{c}\op \bm{\omega}) \op \var{b}),
\end{align*}
which shows that if $g$ is negabent, then $f$ is also negabent~\cite[Theorem 3(d)]{pre}. A similar argument applies to the $2^k$-Hadamard transforms, yielding:
$$\mathcal{H}_g^{(2^k)}(\bm{\omega}) = (-1)^{d}\mathcal{H}_f^{(2^k)}(A(\var{c}\op \bm{\omega})).$$
Studying isomorphisms among Boolean functions is an interesting area of research from computational points of view (see~\cite{isomorphism-2011,isomorphism-2023} and the references therein). In this initiative, while we establish the theoretical associativity, it is not immediate how a quantum algorithm can efficiently recover the hidden parameters of the affine transformations given oracle access to the $m$-bent functions. Designing such algorithms remains an open and promising research direction for future work. On the other hand, for certain specific instances of bent and negabent functions, we discuss certain algorithms in the following subsections.

\subsection{Quantum algorithm for finding the hidden shift of bent functions}
In~\cite{dcc}, it was noted that the hidden shift finding algorithm for bent functions, proposed in~\cite[Theorem 4.1]{shift}, is structurally similar to the $2$-fold Forrelation algorithm. Given two bent functions $f,g\in \mathcal{B}_n$ satisfying $g(\xv) = f(\xv \op \var{b})$, the algorithm recovers the hidden shift $\var{b}$ using a single query to $g$ and the dual of $f$. Here, we extend this approach to analyze the affine properties of bent functions, in a more general setup.
\begin{theorem}
Let $f,g\in\mathcal{B}_n$ be bent functions such that $g(\xv) = f(\xv\op \var{b}) \op \var{c}\cdot\xv \op d$, where $\var{b}, \var{c}\in\mathbb{F}_2^n$ and $d\in\mathbb{F}_2$. Then, running the 2-fold Forrelation algorithm with $U_g$ as the first oracle and the dual of $f$, $U_{\tilde{f}}$, as the second (see Fig.~\textup{\ref{fig:affine-bent}}), the state before measurement is given by $(-1)^{\var{b}\cdot \var{c} \op d}2^{-3n/2}\sum_{\var{y},\var{z}\in\mathbb{F}_2^n}(-1)^{\tilde{f}(\yv \op \var{c})\op \tilde{f}(\yv) \op \yv\cdot (\var{z}\op \var{b})}\ket{\var{z}}$, and the probability of observing any state $\var{z} \in \mathbb{F}_2^n$ is
$$\pr{\left(\var{z}\right)} = 2^{-3n}\big|\sum_{\yv\in\mathbb{F}_2^n}(-1)^{\tilde{f}(\yv \op \var{c})\op \tilde{f}(\yv) \op \yv\cdot (\var{z}\op \var{b})}\big|^2.$$
\end{theorem}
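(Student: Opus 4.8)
The plan is to trace the $2$-fold Forrelation algorithm step by step, as was done in the earlier Forrelation analyses, but now with the first oracle being $U_g$ and the second being $U_{\tilde f}$, the dual of $f$. Recall that the $2$-fold Forrelation circuit starts from $\ket{0^n}\ket{-}$, applies $\mathrm{H}^{\otimes n}$, the first oracle (phase-kickback), $\mathrm{H}^{\otimes n}$ again, the second oracle, and a final $\mathrm{H}^{\otimes n}$ before measurement. First I would write the pre-measurement state in the generic form
\[
2^{-3n/2}\sum_{\var{y},\var{z}\in\mathbb{F}_2^n}(-1)^{g(\yv)\op \tilde f(\yv)\op \yv\cdot\var{z}}\ket{\var{z}}
\]
where the intermediate summation variable has been relabeled, so that the two oracle phases $(-1)^{g(\yv)}$ and $(-1)^{\tilde f(\yv)}$ both sit at the same point $\yv$, and the final Hadamard layer contributes $(-1)^{\yv\cdot\var{z}}$.

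The key step is then to substitute the affine relation $g(\xv)=f(\xv\op\var{b})\op\var{c}\cdot\xv\op d$ into the phase $g(\yv)$. Since $f$ and $g$ are bent with $\tilde f$ the dual, I would use the duality relation to rewrite $f(\yv\op\var{b})$ in terms of $\tilde f$; concretely, the dual of an affine-shifted bent function shifts the argument of the dual and introduces a controllable linear/constant phase, so that $g(\yv)$ becomes $\tilde f(\yv\op\var{c})$ up to the global sign $(-1)^{\var{b}\cdot\var{c}\op d}$ and a contribution $\yv\cdot\var{b}$ in the exponent. Collecting the phases $g(\yv)\op\tilde f(\yv)\op\yv\cdot\var{z}$ and pulling out the $\var{z}$-independent global sign $(-1)^{\var{b}\cdot\var{c}\op d}$ then yields exactly the claimed state
\[
(-1)^{\var{b}\cdot\var{c}\op d}\,2^{-3n/2}\sum_{\var{y},\var{z}\in\mathbb{F}_2^n}(-1)^{\tilde f(\yv\op\var{c})\op \tilde f(\yv)\op \yv\cdot(\var{z}\op\var{b})}\ket{\var{z}},
\]
and taking the squared modulus of the amplitude of $\ket{\var{z}}$ gives the stated probability.

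The main obstacle I expect is getting the duality bookkeeping exactly right: specifically, tracking how the affine data $(\var{b},\var{c},d)$ on $f$ transforms under the dual map. I would rely on the self-consistency of the bent/dual relationship $(-1)^{\tilde f(\xv)}=W_f(\xv)$ (so that the phases produced by $U_{\tilde f}$ correspond to Walsh values of $f$), and carefully verify that the shift $\var{b}$ attached to the argument of $f$ migrates to a linear phase $\yv\cdot\var{b}$ on the dual side while $\var{c}$ becomes a shift $\yv\op\var{c}$ of the dual's argument. Once the phase $g(\yv)\op\tilde f(\yv)$ is correctly identified as $\tilde f(\yv\op\var{c})\op\tilde f(\yv)$ modulo the global sign and the $\yv\cdot\var{b}$ term, the remaining manipulation — factoring the global sign and reading off $\pr(\var{z})$ as the absolute square — is purely mechanical.
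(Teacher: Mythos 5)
Your overall strategy --- trace the $2$-fold Forrelation circuit and combine duality with the affine transformation law --- is the same as the paper's, but the central step as you describe it contains a genuine error. The pre-measurement state of this circuit is a \emph{triple} sum, $2^{-3n/2}\sum_{\xv,\yv,\var{z}}(-1)^{g(\xv)\op \xv\cdot\yv\op\tilde f(\yv)\op \yv\cdot\var{z}}\ket{\var{z}}$: the first oracle's phase is attached to the variable $\xv$ that the middle Hadamard layer sums out, so no relabelling of summation variables can make it ``sit at the same point'' as $\tilde f(\yv)$. What survives at $\yv$ after that layer is $2^{n/2}W_g(\yv)$, i.e.\ (since $g$ is bent) the phase $(-1)^{\tilde g(\yv)}$ of the \emph{dual} of $g$, not $(-1)^{g(\yv)}$. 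Your proposed generic form therefore conflates $g$ with $\tilde g$, and the subsequent move --- substituting $g(\yv)=f(\yv\op\var{b})\op\var{c}\cdot\yv\op d$ and then ``rewriting $f(\yv\op\var{b})$ in terms of $\tilde f$'' --- has no valid justification: duality is the spectral identity $W_f(\bm{\omega})=(-1)^{\tilde f(\bm{\omega})}$, not a pointwise identity for $f$, so the value $f(\yv\op\var{b})$ cannot be traded for a value of $\tilde f$.

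The repair is exactly what the paper does: keep the intermediate variable and evaluate $\sum_{\xv}(-1)^{f(\xv\op\var{b})\op\var{c}\cdot\xv\op\xv\cdot\yv} =(-1)^{\var{b}\cdot\var{c}\op\var{b}\cdot\yv}\sum_{\xv}(-1)^{f(\xv\op\var{b})\op(\xv\op\var{b})\cdot(\yv\op\var{c})} =(-1)^{\var{b}\cdot\var{c}\op\var{b}\cdot\yv}\,2^{n/2}W_f(\yv\op\var{c})$, and only then invoke bentness to replace $W_f(\yv\op\var{c})$ by $(-1)^{\tilde f(\yv\op\var{c})}$. This is the precise sense in which, as you correctly anticipate in your closing paragraph, the shift $\var{b}$ migrates to the linear phase $\var{b}\cdot\yv$ while $\var{c}$ becomes a shift of the dual's argument --- but it happens at the level of the Walsh transform of $g$, not of $g$ pointwise. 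Once that step is in place, applying $U_{\tilde f}$ and the final Hadamard layer and reading off the squared amplitude is mechanical, as you say.
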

\begin{proof}
We follow the 2-fold Forrelation circuit from~\cite{shift}, illustrated in Fig.~\ref{fig:affine-bent}.
\begin{figure}[ht]
    \centering
    \includegraphics[scale=1.2]{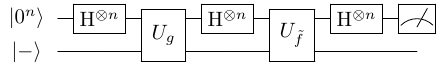}
    \caption{Quantum algorithm to study the affine properties of $f$}
    \label{fig:affine-bent}
\end{figure}
The starting state $\ket{0^n}$ evolves as:
\begin{align*}
    \ket{0^n} \xrightarrow{H^n} 2^{-n/2}\sum_{\xv\in\mathbb{F}_2^n}\ket{\xv}
    \xrightarrow{U_{g}} &\, 2^{-n/2}\sum_{\xv\in\mathbb{F}_2^n}(-1)^{f(\xv\op \var{b}) \op \var{c}\cdot\xv \op d}\ket{x}\\
    \xrightarrow{H^n} &\, (-1)^d2^{-n}\sum_{\xv,\yv \in \mathbb{F}_2^n}(-1)^{f(\xv\op \var{b}) \op \var{c}\cdot\xv \op \xv \cdot \yv}\ket{\yv}\\
    = &\, (-1)^{\var{b}\cdot \var{c} \op d}2^{-n}\sum_{\xv,\yv\in\mathbb{F}_2^n}(-1)^{\var{b}\cdot\yv}(-1)^{f(\xv\op \var{b}) \op (\xv \op \var{b})\cdot (\yv\op \var{c})}\ket{\yv}\\
    = &\, (-1)^{\var{b}\cdot \var{c} \op d}2^{-n}\sum_{\yv\in\mathbb{F}_2^n}(-1)^{\var{b}\cdot\yv}(-1)^{\tilde{f}(\yv \op \var{c})}\ket{\yv}\\
    \xrightarrow{U_{\tilde{f}}} &\, (-1)^{\var{b}\cdot \var{c} \op d}2^{-n}\sum_{\yv\in\mathbb{F}_2^n}(-1)^{\var{b}\cdot\yv}(-1)^{\tilde{f}(\yv \op \var{c})\op \tilde{f}(\yv)}\ket{\yv}\\
    \xrightarrow{H^n} &\, (-1)^{\var{b}\cdot \var{c} \op d}2^{-3n/2}\sum_{\yv,\var{z}\in\mathbb{F}_2^n}(-1)^{\var{b}\cdot\yv}(-1)^{\tilde{f}(\yv \op \var{c})\op \tilde{f}(\yv)}(-1)^{\yv \cdot \var{z}}\ket{\var{z}}.
\end{align*}
Hence, the final amplitude of any basis state $\var{z} \in \mathbb{F}_2^n$ is $(-1)^{\var{b}\cdot \var{c} \op d}2^{-3n/2}\sum_{\yv\in\mathbb{F}_2^n}(-1)^{\tilde{f}(\yv \op \var{c})\op \tilde{f}(\yv)}(-1)^{\yv \cdot (\var{b} \op \var{z})}$, and the corresponding measurement probability is given by $2^{-3n}\big|\sum_{\yv\in\mathbb{F}_2^n}(-1)^{\tilde{f}(\yv \op \var{c})\op \tilde{f}(\yv) \op \yv\cdot (\var{b}\op \var{z})}\big|^2.$
%The claim of our theorem is shown.
\end{proof}

Clearly, from the above theorem, the value of $d\in\mathbb{F}_2$ cannot be estimated. Furthermore, if $\var{c}=0^n$, the state $\var{b}$ is observed with probability 1, as shown in~\cite{shift}. When $\var{b}=0^n$, the state $\ket{0^n}$ is never observed. In general, with high probability, the observed state corresponds to the best affine approximation of the balanced Boolean function $\tilde{f}(\yv \op \var{c})\op \tilde{f}(\yv)$. Moreover, if $\var{b}\neq 0^n \neq \var{c}$, then the state $\var{b}$ is never observed. Instead, the most probable outcome is the state $\var{z}$, where the linear function $(\var{b}\op \var{z})\cdot \yv$ best approximates $\tilde{f}(\yv \op \var{c})\op \tilde{f}(\yv)$.

Summary of observations:
\begin{enumerate}
    \item If the algorithm yields only the state $\ket{0^n}$, then either $|f(\xv)|=|g(\xv)|$ for all $\xv \in \mathbb{F}_2^n$, or $\var{b}\neq 0^n\neq \var{c}$, with $\var{b}\in\mathbb{F}_2^n$ being the coefficient of the best linear approximation to $\tilde{f}(\yv \op \var{c})\op \tilde{f}(\yv)$.
    \item If a fixed nonzero state $\ket{\var{z}}$ is observed with probability 1, then either $\var{c}=0^n$ and the observed state is $\var{b}\in\mathbb{F}_2^n$, or $\var{c}\neq 0^n$ and $\tilde{f}(\yv \op \var{c})\op \tilde{f}(\yv) = (\var{b}\op \var{z})\cdot y$.
    \item If all the states are observed except for a fixed nonzero state, then $\var{c}\neq 0^n$, and the missing state corresponds to the hidden shift $\var{b} \in \mathbb{F}_2^n$.
\end{enumerate}
\begin{remark}
    Note that here we do not introduce any structural modifications to the algorithm; rather, we present an alternative analysis based on the affine transformations of the underlying Boolean functions.
\end{remark}

\subsection{Quantum algorithm for finding the hidden shift of negabent functions}

In this subsection, we study the shift properties of negabent functions and show that the existing hidden shift algorithm for bent functions can be adapted to recover the hidden shift of negabent functions.

We begin with identifying an error in a prior work. In~\cite[Theorem 5]{dcc}, it was wrongly claimed that if $f,g\in\mathcal{B}_n$ are both negabent, then there exists no $\var{u}\in\mathbb{F}_2^n$ such that $g(\xv)=f(\xv\op\var{u})$ for all $\xv\in\mathbb{F}_2^n$. An immediate counter example can be provided by any affine Boolean function $f(\var{x}) = \var{a}\cdot \var{x} \op c$, with $\var{a}\in\mathbb{F}_2^n$, $c\in\mathbb{F}_2$. Defining $g(\xv)=f(\xv\op \var{u})$, we obtain $g(\xv) = \var{a}\cdot \xv \op c'$ where $c' = \var{a}\cdot \var{u} \op c$ is a constant. Thus, $g$ is also affine and therefore negabent, satisfying $g(\xv) = f(\xv \op \var{u})$ and contradicting the claim. We present another counterexample to~\cite[Theorem 5]{dcc} by providing a non-affine, non-bent 6-variable negabent function, $f(x_1,x_2,x_3,x_4,x_5,x_6) = x_1x_3 \op x_1x_4$, where for a non-trivial shift, $\var{u} = 100001$, we obtain
$$f(\xv \op \var{u}) = f(x_1\op 1,x_2,x_3,x_4,x_5,x_6\op 1) = (x_1\op 1)x_3 \op (x_1\op 1)x_4 = x_1x_3 \op x_1x_4 \op x_3 \op x_4,$$
which is again a non-affine, non-bent, negabent function.

Additionally, let $f,g\in\mathcal{B}_n$ be such that $g(\xv) = f(\xv \op \var{u})$ for some $\var{u}\in\mathbb{F}_2^n$. If both $f$ and $g$ are bent as well as negabent, then the same $\var{u}$ serves as the shift for the negabent functions $f$ and $g$. However, $\var{u}$ cannot be a period of the corresponding negabent functions $f\op s_2$ and $g\op s_2$, as identified in~\cite[Definition 6]{dcc}. 
Specifically, if $(g\op s_2)(\xv) = (f\op s_2)(\xv \op \var{u})$, then it follows that $g(\xv) \op s_2(\xv) = f(\xv \op \var{u}) \op s_2(\xv) \op \mathbb{L}_\var{u}(\xv) \op s_2(\var{u})$. Canceling common terms yields $\mathbb{L}_\var{u}(\xv) \op s_2(\var{u}) = 0$. Thus, the linear function $\mathbb{L}_\var{u}(\xv) = \bigoplus_{i=1}^nx_i\left(\oplus_{j=1,j\neq i}^n u_i\right)$ must be a constant, which implies $\var{u}=0^n$.
We used above the known result of~\cite{parker2007}: if $f$ is bent, $f\op s_2$ is negabent, and vice versa, where $s_2$ denotes the quadratic symmetric Boolean function, which is itself a bent function.

On the other hand, since $\mathbb{L}_\var{u}(\xv \op \var{u}) = \mathbb{L}_\var{u}(\xv)$, we obtain the following immediate results.
\begin{lemma}
\label{lem:nega-shift}
    Let $f,g\in\mathcal{B}_n$ be two bent functions such that $g(\xv) = f(\xv \op \var{u)}$. Then $\var{u}$ is a period of the negabent functions $g\op s_2$ and $\left( f\op s_2 \op \mathbb{L}_{\var{u}} \op s_2(\var{u})\right)$, satisfying $(g\op s_2)(\xv) = \left( f\op s_2 \op \mathbb{L}_{\var{u}} \op s_2(\var{u})\right)(\xv \op \var{u}).$ Moreover, their duals $\Tilde{f}$ and $\Tilde{g}$ differ by the linear function $\var{u}\cdot \xv$, i.e., $\Tilde{g}(\xv) = \Tilde{f}(\xv) \op \var{u}\cdot \xv$.
\end{lemma}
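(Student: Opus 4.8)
The plan is to establish the two assertions separately: first the hidden-shift identity together with the negabentness of the two displayed functions, and then the relation between the duals. The first part rests entirely on the quadratic nature of $s_2$, whereas the second is a short Walsh-transform computation using the definition of the dual.

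For the shift identity I would expand the right-hand side by direct substitution. Writing $\left(f\op s_2\op \mathbb{L}_{\var{u}}\op s_2(\var{u})\right)(\xv\op\var{u})$ out term by term, I replace $f(\xv\op\var{u})$ by $g(\xv)$ (the hypothesis) and use the two algebraic facts already isolated in the discussion preceding the lemma. Since $s_2$ is quadratic, $s_2(\xv\op\var{u}) = s_2(\xv)\op\mathbb{L}_{\var{u}}(\xv)\op s_2(\var{u})$; and since $\mathbb{L}_{\var{u}}$ is linear with $\mathbb{L}_{\var{u}}(\var{u}) = \bigoplus_{i\neq j}u_iu_j = 0$ over $\mathbb{F}_2$, we have $\mathbb{L}_{\var{u}}(\xv\op\var{u}) = \mathbb{L}_{\var{u}}(\xv)$. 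Substituting these, the two copies of $\mathbb{L}_{\var{u}}(\xv)$ and the two copies of the constant $s_2(\var{u})$ cancel in pairs, collapsing the expression to $g(\xv)\op s_2(\xv) = (g\op s_2)(\xv)$, as claimed. For negabentness I invoke~\cite{parker2007} to get that $g\op s_2$ and $f\op s_2$ are negabent (because $f,g$ are bent); it then remains only to note that adding the affine function $\mathbb{L}_{\var{u}}\op s_2(\var{u})$ to $f\op s_2$ preserves negabentness, since an affine term $\var{a}\cdot\xv\op b$ merely translates the argument and multiplies by $(-1)^b$ in the nega-Hadamard spectrum, leaving the moduli of that spectrum unchanged.

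For the dual relation I would compute the Walsh--Hadamard transform of $g$ directly. Starting from $W_g(\bm{\omega}) = 2^{-n/2}\sum_{\xv}(-1)^{f(\xv\op\var{u})\op\xv\cdot\bm{\omega}}$ and substituting $\yv = \xv\op\var{u}$, the phase $\xv\cdot\bm{\omega}$ becomes $(\yv\op\var{u})\cdot\bm{\omega}$, so the factor $(-1)^{\var{u}\cdot\bm{\omega}}$ pulls out and $W_g(\bm{\omega}) = (-1)^{\var{u}\cdot\bm{\omega}}W_f(\bm{\omega})$. Applying the dual definitions $W_f(\bm{\omega}) = (-1)^{\tilde{f}(\bm{\omega})}$ and $W_g(\bm{\omega}) = (-1)^{\tilde{g}(\bm{\omega})}$ (both are bent, so their Walsh values are $\pm1$) and comparing signs yields $\tilde{g}(\bm{\omega}) = \tilde{f}(\bm{\omega})\op\var{u}\cdot\bm{\omega}$, which is the assertion.

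I do not expect a genuine obstacle, as both parts are routine once the quadratic expansion of $s_2$ is in hand. The one point deserving care is the negabentness of $f\op s_2\op\mathbb{L}_{\var{u}}\op s_2(\var{u})$: one must state explicitly that negabentness is preserved under the addition of an \emph{affine} (not merely linear) function, since it is precisely the correcting term $\mathbb{L}_{\var{u}}\op s_2(\var{u})$ that makes this second function pair with $g\op s_2$ where the naive choice $f\op s_2$ fails, as observed in the paragraph preceding the lemma.
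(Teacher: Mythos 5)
Your proof is correct and follows essentially the same route as the paper's: the same term-by-term expansion using $s_2(\xv\op\var{u})=s_2(\xv)\op\mathbb{L}_{\var{u}}(\xv)\op s_2(\var{u})$ and $\mathbb{L}_{\var{u}}(\var{u})=0$ for the shift identity, and the same substitution $W_g(\bm{\omega})=(-1)^{\var{u}\cdot\bm{\omega}}W_f(\bm{\omega})$ for the duals. Your explicit justification of the negabentness of both displayed functions (via~\cite{parker2007} and closure under adding affine terms) is a small addition the paper leaves implicit, but it does not change the argument.
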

\begin{proof}
We begin with $\left( f\op s_2 \op \mathbb{L}_{\var{u}} \op s_2(\var{u})\right)$:
\begin{align*}
    \left( f\op s_2 \op \mathbb{L}_{\var{u}} \op s_2(\var{u})\right)(\xv \op \var{u}) = & f(\xv \op \var{u}) \op s_2 (\xv \op \var{u}) \op \mathbb{L}_{\var{u}}(\xv \op \var{u}) \op s_2(\var{u})\\
    = & g(\xv) \op s_2(\xv) \op \mathbb{L}_{\var{u}}(\xv) \op \mathbb{L}_{\var{u}}(\xv)\\
    = & (g \op s_2)(\xv).
\end{align*}
Furthermore, $(-1)^{\tilde{g}(\xv)} = 2^{n/2}W_g(\xv) = 2^{n/2}W_f(\xv)(-1)^{\var{u}\cdot \xv} = (-1)^{\tilde{f}(\xv)}(-1)^{\var{u}\cdot \xv} = (-1)^{\tilde{f}(\xv) \op \var{u}\cdot \xv}$. Hence, $\tilde{f}$ and $\tilde{g}$ differ by the linear function $\var{u}\cdot \xv$.
\end{proof}
%From~\cite{parker2007} it is known that if $f$ is bent, $f\op s_2$ is negabent, and vice versa, where $s_2$ denotes the quadratic symmetric Boolean function, which is itself a bent. Consequently, if $f$ is bent negabent, then so is $f\op s_2$. Therefore, every bent function has a corresponding negabent function. Further, if $f$ is said to be bent, but not negabent, then $f\op s_2$ is negabent, but not bent. This can be shown by contradiction, as if $f\op s_2$ was bent too, then $f\op s_2 \op s_2 = f$ would become negabent, leading to contradiction.

Given two bent functions $f,g\in\mathcal{B}_n$ satisfying $g(\xv) = f(\xv \op \var{u})$,~\cite[Theorem 4.2]{shift} also proposed a quantum algorithm to compute the hidden shift $\var{u}\in\mathbb{F}_2^n$ with constant success probability, by making $\mathcal{O}(n)$ many queries to $f$ and $g$. Upon measurement, the algorithm yields the states $\ket{b,\var{z}}$ which are orthogonal to the hidden shift $1,\var{u}$, i.e., $(b,\var{z})\cdot (1,\var{u}) = b\op \var{u}\cdot \var{z} = 0$. From the observed states, $\var{u}$ can be efficiently recovered via Gaussian elimination. In this context, we observe that the same algorithm can be adapted to recover the hidden shift when $f$ and $g$ are negabent functions.

To demonstrate this, we apply the hidden shift finding algorithm to the non-trivial 6-variable negabent function $f(x_1,x_2,x_3,x_4,x_5,x_6) = x_1x_3\op x_1x_4$, which is not bent, along with its shifted version $g(\xv) = f(\xv \op 100001) = x_1x_3 \op x_1x_4 \op x_3 \op x_4$ (see Fig.~\ref{fig:qcirc}). The observed output states $\ket{0000000}$, $\ket{1100000}$, $\ket{0001100}$, and $\ket{1101100}$ appear with equal probability (see Fig.~\ref{fig:qhist}), indicating that the hidden shift $\var{u} = u_1u_2u_3u_4u_5u_6$ satisfies the constraints $u_1 = 1$, $u_3=u_4$, and $u_2,u_3,u_5,u_6 \in \mathbb{F}_2$. In fact, there are exactly $2^4$ such valid shifts $\var{u}\in\mathbb{F}_2^6$ satisfying $g(\xv) = f(\xv \op \var{u})$ where $f(\xv) = x_1x_3\op x_1x_4$ and $g(\xv) = x_1x_3 \op x_1x_4 \op x_3 \op x_4$, under the conditions $u_1 = 1$ and $u_3 = u_4$.
\begin{figure}[htbp]
\centering
\begin{subfigure}{.64\textwidth}
  \centering
  \includegraphics[width=1\linewidth]{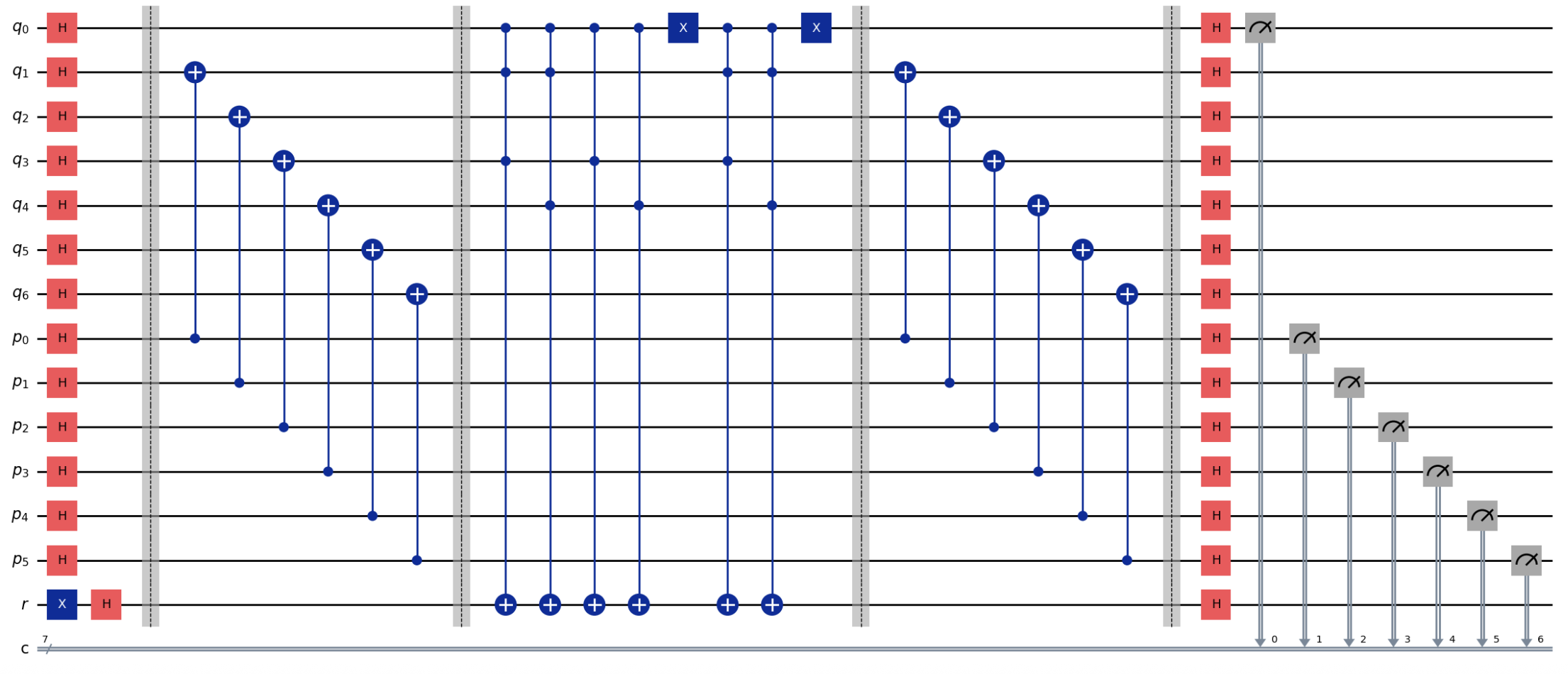}
  \caption{Quantum circuit for finding the hidden shift.}
  \label{fig:qcirc}
\end{subfigure}%
\begin{subfigure}{.33\textwidth}
  \centering
  \includegraphics[width=1\linewidth]{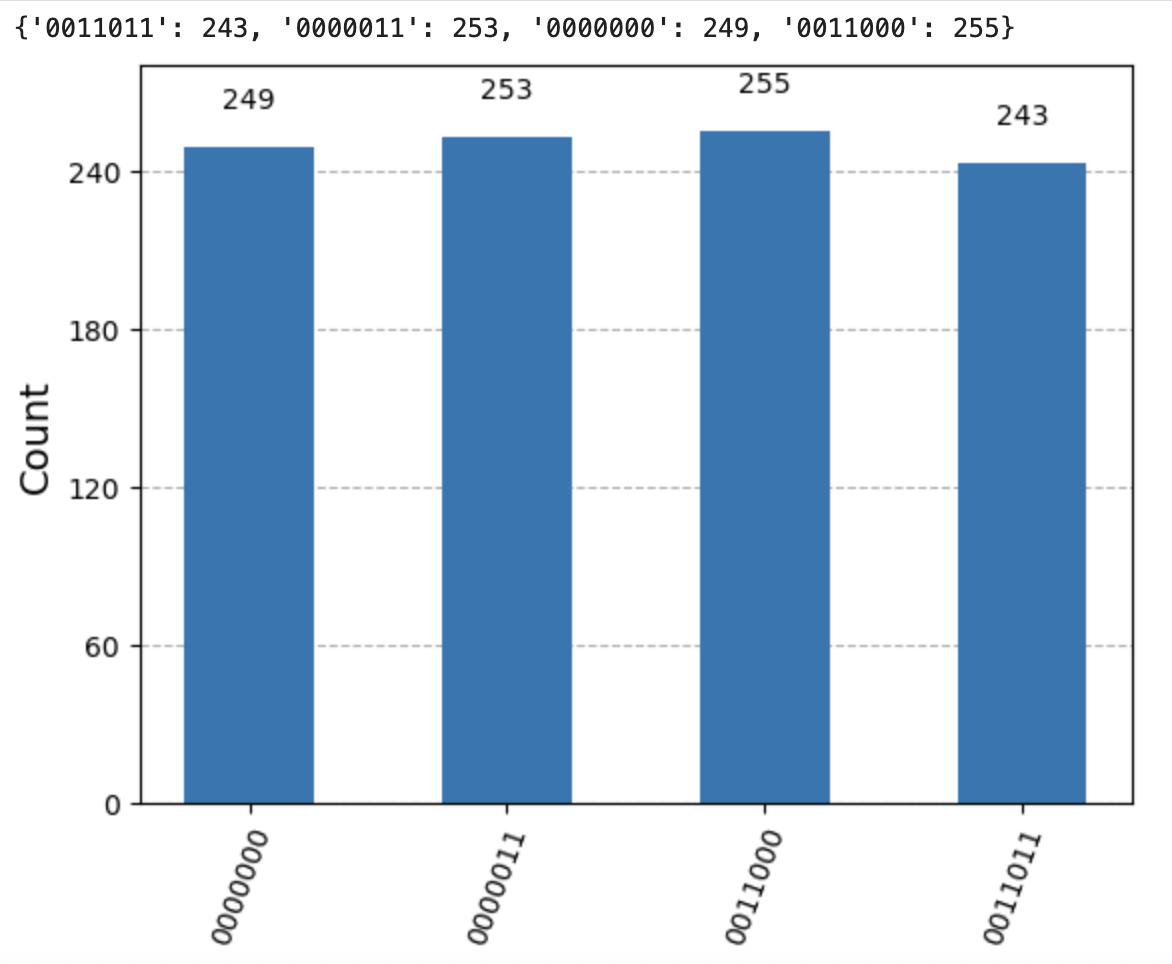}
  \caption{Corresponding histogram.}
  \label{fig:qhist}
\end{subfigure}
\caption{Quantum algorithm for computing the hidden shift from negabent functions, $f(\xv) = x_1x_3\op x_1x_4$ and $g(\xv) =f(\xv \op 100001) = x_1x_3 \op x_1x_4 \op x_3 \op x_4$.}
\label{fig:shift-negabent}
\end{figure}
\begin{remark}
    The hidden shift finding algorithm for negabent functions studied here is fundamentally and structurally different from the `nega-shift' recovery approach proposed in~\textup{\cite[Theorem 6]{dcc}}. In~\textup{\cite{dcc}}, the underlying negabent functions were of a specific form (as in \textup{Lemma~\ref{lem:nega-shift}}), and the functions were first transformed into their corresponding bent counterparts. The hidden shift was then recovered using a single-query algorithm, analogous to the 2-fold Forrelation technique. In contrast, the present approach imposes no particular structure on the underlying Boolean functions. Instead, we construct a large class of quantum states orthogonal to the hidden shift and recover the desired shift via classical post-processing, such as Gaussian elimination.
\end{remark}

\section{Conclusion}
\label{sec:con}
%In this section, we summarize the key results of this paper and conclude with further research possibilities in this direction.

In this paper, we generalized various cryptographically significant spectra of Boolean functions, including the Walsh-Hadamard, cross-correlation, and autocorrelation spectra, extending previous formulations to any $m\in\mathbb{N}$. We demonstrated that existing variants—such as the standard version, the nega variant, and the $2^k$-variant—are special cases of this more general framework. Additionally, we identified a previously unexamined class of real Hadamard transforms that lies between the Walsh-Hadamard and nega-Hadamard transformations, filling a gap in the existing literature. Furthermore, we introduced the most generalized version of the Deutsch-Jozsa algorithm, which extends both the standard Deutsch-Jozsa and its prior extended version, thereby encompassing them as special cases. We established that the pre-measurement state in the generalized Deutsch-Jozsa algorithm corresponds to a superposition of the normalized $m$-Hadamard transforms evaluated at all points. In addition, we extended the Forrelation formulation to $m$-Forrelation, and presented new quantum algorithms for estimating these newly defined generalized spectra. We believe these techniques may be of use to understand the interactions among Boolean functions, various spectra and relevant quantum algorithms.  

%We also generalized the bent property of Boolean functions based upon the flat-ness of their $m$-Hadamard transform values and proposed a couple of promise problems in this context. However, we did not delve deeply into the implications of the $m$-bent property, its testing algorithms, or its broader cryptographic significance, leaving these as open directions for future research.

\end{document}